\newcommand{\typeof}{1} %
\newcommand{\condinc}[2]{\ifthenelse{\equal{\typeof}{0}}{#1}{#2}}
\newdimen\proofrulebreadth \proofrulebreadth=.05em
\newdimen\proofdotseparation \proofdotseparation=1.25ex
\newdimen\proofrulebaseline \proofrulebaseline=2ex
\let\then\relax
\def\hfi{\hskip0pt plus.0001fil}
\mathchardef\squigto="3A3B
\newif\ifinsideprooftree\insideprooftreefalse
\newif\ifonleftofproofrule\onleftofproofrulefalse
\newif\ifproofdots\proofdotsfalse
\newif\ifdoubleproof\doubleprooffalse
\let\wereinproofbit\relax
\newdimen\shortenproofleft
\newdimen\shortenproofright
\newdimen\proofbelowshift
\newbox\proofabove
\newbox\proofbelow
\newbox\proofrulename
\def\shiftproofbelow{\let\next\relax\afterassignment\setshiftproofbelow\dimen0 }
\def\shiftproofbelowneg{\def\next{\multiply\dimen0 by-1 }%
\afterassignment\setshiftproofbelow\dimen0 }
\def\setshiftproofbelow{\next\proofbelowshift=\dimen0 }
\def\setproofrulebreadth{\proofrulebreadth}
\def\prooftree{
%
\ifnum  \lastpenalty=1
\then   \unpenalty
\else   \onleftofproofrulefalse
\fi
%
\ifonleftofproofrule
\else   \ifinsideprooftree
        \then   \hskip.5em plus1fil
        \fi
\fi
%
\bgroup
\setbox\proofbelow=\hbox{}\setbox\proofrulename=\hbox{}%
\let\justifies\proofover\let\leadsto\proofoverdots\let\Justifies\proofoverdbl
\let\using\proofusing\let\[\prooftree
\ifinsideprooftree\let\]\endprooftree\fi
\proofdotsfalse\doubleprooffalse
\let\thickness\setproofrulebreadth
\let\shiftright\shiftproofbelow \let\shift\shiftproofbelow
\let\shiftleft\shiftproofbelowneg
\let\ifwasinsideprooftree\ifinsideprooftree
\insideprooftreetrue
%
\setbox\proofabove=\hbox\bgroup$\displaystyle 
\let\wereinproofbit\prooftree
%
\shortenproofleft=0pt \shortenproofright=0pt \proofbelowshift=0pt
%
\onleftofproofruletrue\penalty1
}
\def\eproofbit{
%
\ifx    \wereinproofbit\prooftree
\then   \ifcase \lastpenalty
        \then   \shortenproofright=0pt  
        \or     \unpenalty\hfil         
        \or     \unpenalty\unskip       
        \else   \shortenproofright=0pt  
        \fi
\fi
%
\global\dimen0=\shortenproofleft
\global\dimen1=\shortenproofright
\global\dimen2=\proofrulebreadth
\global\dimen3=\proofbelowshift
\global\dimen4=\proofdotseparation
\global\count255=\proofdotnumber
%
$\egroup  
%
\shortenproofleft=\dimen0
\shortenproofright=\dimen1
\proofrulebreadth=\dimen2
\proofbelowshift=\dimen3
\proofdotseparation=\dimen4
\proofdotnumber=\count255
}
\def\proofover{
\eproofbit 
\setbox\proofbelow=\hbox\bgroup 
\let\wereinproofbit\proofover
$\displaystyle
}%
\def\proofoverdbl{
\eproofbit 
\doubleprooftrue
\setbox\proofbelow=\hbox\bgroup 
\let\wereinproofbit\proofoverdbl
$\displaystyle
}%
\def\proofoverdots{
\eproofbit 
\proofdotstrue
\setbox\proofbelow=\hbox\bgroup 
\let\wereinproofbit\proofoverdots
$\displaystyle
}%
\def\proofusing{
\eproofbit 
\setbox\proofrulename=\hbox\bgroup 
\let\wereinproofbit\proofusing
\kern0.3em$
}
\def\endprooftree{
\eproofbit 
  \dimen5 =0pt
%
\dimen0=\wd\proofabove \advance\dimen0-\shortenproofleft
\advance\dimen0-\shortenproofright
%
\dimen1=.5\dimen0 \advance\dimen1-.5\wd\proofbelow
\dimen4=\dimen1
\advance\dimen1\proofbelowshift \advance\dimen4-\proofbelowshift
%
\ifdim  \dimen1<0pt
\then   \advance\shortenproofleft\dimen1
        \advance\dimen0-\dimen1
        \dimen1=0pt
        \ifdim  \shortenproofleft<0pt
        \then   \setbox\proofabove=\hbox{%
                        \kern-\shortenproofleft\unhbox\proofabove}%
                \shortenproofleft=0pt
        \fi
\fi
%
\ifdim  \dimen4<0pt
\then   \advance\shortenproofright\dimen4
        \advance\dimen0-\dimen4
        \dimen4=0pt
\fi
%
\ifdim  \shortenproofright<\wd\proofrulename
\then   \shortenproofright=\wd\proofrulename
\fi
%
\dimen2=\shortenproofleft \advance\dimen2 by\dimen1
\dimen3=\shortenproofright\advance\dimen3 by\dimen4
%
\ifproofdots
\then
        \dimen6=\shortenproofleft \advance\dimen6 .5\dimen0
        \setbox1=\vbox to\proofdotseparation{\vss\hbox{$\cdot$}\vss}%
        \setbox0=\hbox{%
                \advance\dimen6-.5\wd1
                \kern\dimen6
                $\vcenter to\proofdotnumber\proofdotseparation
                        {\leaders\box1\vfill}$%
                \unhbox\proofrulename}%
\else   \dimen6=\fontdimen22\the\textfont2 
        \dimen7=\dimen6
        \advance\dimen6by.5\proofrulebreadth
        \advance\dimen7by-.5\proofrulebreadth
        \setbox0=\hbox{%
                \kern\shortenproofleft
                \ifdoubleproof
                \then   \hbox to\dimen0{%
                        $\mathsurround0pt\mathord=\mkern-6mu%
                        \cleaders\hbox{$\mkern-2mu=\mkern-2mu$}\hfill
                        \mkern-6mu\mathord=$}%
                \else   \vrule height\dimen6 depth-\dimen7 width\dimen0
                \fi
                \unhbox\proofrulename}%
        \ht0=\dimen6 \dp0=-\dimen7
\fi
%
\let\doll\relax
\ifwasinsideprooftree
\then   \let\VBOX\vbox
\else   \ifmmode\else$\let\doll=$\fi
        \let\VBOX\vcenter
\fi
\VBOX   {\baselineskip\proofrulebaseline \lineskip.2ex
        \expandafter\lineskiplimit\ifproofdots0ex\else-0.6ex\fi
        \hbox   spread\dimen5   {\hfi\unhbox\proofabove\hfi}%
        \hbox{\box0}%
        \hbox   {\kern\dimen2 \box\proofbelow}}\doll%
%
\global\dimen2=\dimen2
\global\dimen3=\dimen3
\egroup 
\ifonleftofproofrule
\then   \shortenproofleft=\dimen2
\fi
\shortenproofright=\dimen3
%
\onleftofproofrulefalse
\ifinsideprooftree
\then   \hskip.5em plus 1fil \penalty2
\fi
}
\newenvironment{varitemize}
{
\begin{list}{\condinc{\labelitemii}{\labelitemi}}
{\setlength{\itemsep}{0pt}
 \setlength{\topsep}{0pt}
 \setlength{\parsep}{0pt}
 \setlength{\partopsep}{0pt}
 \setlength{\leftmargin}{15pt}
 \setlength{\rightmargin}{0pt}
 \setlength{\itemindent}{0pt}
 \setlength{\labelsep}{5pt}
 \setlength{\labelwidth}{10pt}
}}
{
 \end{list} 
}
\newcounter{numberone}
\newenvironment{varenumerate}
{
\begin{list}{\arabic{numberone}.}
{
  \usecounter{numberone}
  \setlength{\itemsep}{0pt}
  \setlength{\topsep}{0pt}
  \setlength{\parsep}{0pt}
  \setlength{\partopsep}{0pt}
  \setlength{\leftmargin}{15pt}
  \setlength{\rightmargin}{0pt}
  \setlength{\itemindent}{0pt}
  \setlength{\labelsep}{5pt}
  \setlength{\labelwidth}{15pt}
}}
{
\end{list} 
}
\newcounter{numbertwo}
\newcommand{\NN}{\mathbb{N}}
\newcommand{\RR}{\mathbb{R}}
\newcommand{\RRI}{\mathbb{R}_{[0, 1]}}
\newtheorem{corollary}{Corollary}
\newcommand{\al}[1]{\mathbb{#1}}
\newcommand{\distrone}{\mathcal{D}}
\newcommand{\distrstringone}{\mathcal{D}}
\newcommand{\distrstringtwo}{\mathcal{E}}
\newcommand{\confeval}{\mathcal{CEV}}
\newcommand{\distrset}[1]{\mathbb{P}_{#1}}
\newcommand{\alcom}[4]{((#1\bullet #2)(#3))(#4)}
\newcommand{\alcomn}[4]{((#1\odot #2)(#3))(#4)}
\newcommand{\alcomnf}[3]{(#1\odot #2)(#3)}
\newcommand{\QQ}{\mathbb{Q}}
\newcommand{\f}[3]{{#1}:{#2} \rightarrow {#3}}
\newcommand{\fun}[2]{#1\rightarrow #2}
\newcommand{\natone}{n}
\newcommand{\primrec}[2]{\mathit{rec}(#1,#2)}
\newcommand{\V}[1]{\mathbf{#1}}
\newcommand{\PrR}{\mathscr{PR}}
\newcommand{\PrC}{\mathscr{PC}}
\newcommand{\PrT}{\mathscr{PT}}
\newcommand{\SrR}{\mathscr{SR}}
\newcommand{\PrPC}{\mathscr{PPC}}
\newcommand{\PPrT}{\mathscr{PPR}}
\newcommand{\id}{\mathit{id}}
\newcommand{\transone}{\delta}
\newcommand{\ptmone}{M}
\newcommand{\conf}[4]{\langle #1,#2,#3,#4\rangle}
\newcommand{\stone}{q}
\newcommand{\ssone}{Q}
\newcommand{\charone}{a}
\newcommand{\chartwo}{b}
\newcommand{\casef}[1]{\mathit{case}(#1)}
\newcommand{\prmone}{R}
\newcommand{\CONFR}[1]{\mathcal{CR}_{#1}}
\newcommand{\FSR}[2]{\mathcal{FCR}_{\!#1}^{#2}}
\newcommand{\INITR}[2]{\mathcal{INR}_{\!#1}^{#2}}
\newcommand{\FUNCR}[1]{\mathcal{IO}_{\!#1}}
\newcommand{\alpone}{\Sigma}
\newcommand{\strone}{s}
\newcommand{\strtwo}{t}
\newcommand{\symone}{a}
\newcommand{\bk}[1]{#1_{b}}
\newtheorem{definition}{Definition}
\newtheorem{proposition}{Proposition}
\newtheorem{lemma}{Lemma}
\newtheorem{example}{Example}
\newtheorem{theorem}{Theorem}
\newenvironment{proof}{\begin{trivlist}
       \item[\hskip \labelsep {\bfseries Proof.}]}{\hfill $\Box$ \end{trivlist}}}
\newcommand{\BPP}{\textbf{BPP}}
\newcommand{\ZPP}{\textbf{ZPP}}
\begin{document}

\condinc{
\title{Probabilistic Recursion Theory\\ and Implicit Computational Complexity\thanks{This work is partially supported by the
  ANR project 12IS02001 PACE.}}
\author{Ugo Dal Lago \and Sara Zuppiroli}
\institute{Universit\`a di Bologna \& INRIA\\ \email{\texttt{\{dallago,zuppirol\}@cs.unibo.it}}}
}{
\title{Probabilistic Recursion Theory\\ and Implicit Complexity\thanks{This work is partially supported by the
  ANR project 12IS02001 PACE}}
\author{Ugo Dal Lago\footnote{Universit\`a di Bologna \& INRIA, \texttt{dallago@cs.unibo.it}} \and Sara Zuppiroli\footnote{Universit\`a di Bologna, \texttt{zuppirol@cs.unibo.it}}}
}

\maketitle

\begin{abstract}
We show that probabilistic computable functions, i.e., those functions outputting distributions
and computed by probabilistic Turing machines, can be characterized by a natural generalization of
Church and Kleene's partial recursive functions. The obtained algebra, following Leivant, can be 
restricted so as to capture the notion of polytime sampleable distributions, a key concept in 
average-case complexity and cryptography.
\end{abstract}

\section{Introduction}
Models of computation as introduced one after the other in the first
half of the last century were all designed around the assumption
that \emph{determinacy} is one of the key properties to be modeled:
given an algorithm and an input to it, the sequence of computation
steps leading to the final result is \emph{uniquely} determined by the
way an \emph{algorithm} describes the state evolution. The great
majority of the introduced models are \emph{equivalent}, in that the
classes of functions (on, say, natural numbers) they are able to
compute are the same.

The second half of the 20th century has seen the assumption above
relaxed in many different ways. Nondeterminism, as an example, has
been investigated as a way to abstract the behavior of certain classes
of algorithms, this way facilitating their study without necessarily
changing their expressive power: think about how NFAs make the task of
proving closure properties of regular languages
easier~\cite{rabinscott1959}.

A relatively recent step in this direction consists in allowing
algorithms' internal state to evolve probabilistically: the next state
is not \emph{functionally} determined by the current one, but is
obtained from it by performing a process having possibly many
outcomes, each with a certain probability. Again, probabilistically
evolving computation can be a way to abstract over determinism, but
also a way to model situations in which algorithms have access to a
source of true randomness.

Probabilistic models are nowadays more and more pervasive. Not only
are they a formidable tool when dealing with uncertainty and
incomplete information, but they sometimes are a \emph{necessity}
rather than an option, like in computational cryptography (where,
e.g., secure public key encryption schemes need to be
probabilistic~\cite{GoldwasserMicali}).  A nice way to deal
computationally with probabilistic models is to allow probabilistic
choice as a primitive when designing algorithms, this way switching
from usual, deterministic computation to a new paradigm, called
probabilistic computation.
\condinc{}{Examples of application areas in which 
probabilistic computation has proved to be useful include natural
language processing~\cite{manning1999foundations},
robotics~\cite{thrun2002robotic}, computer
vision~\cite{comaniciu2003kernel}, and machine
learning~\cite{pearl1988probabilistic}.}

But what does the presence of probabilistic choice give us in terms of
expressivity? Are we strictly more expressive than usual,
deterministic, computation? And how about efficiency: is it that
probabilistic choice permits to solve computational problems more
efficiently? These questions have been among the most central in the
theory of computation, and in particular in computational complexity,
in the last forty years (see below for more details about related
work). Roughly, while probability has been proved not to offer any
advantage in the absence of resource constraints, it is not known
whether probabilistic classes such as $\mathbf{BPP}$ or $\mathbf{ZPP}$
are different from $\mathbf{P}$.

This work goes in a somehow different direction: we want to study
probabilistic computation without necessarily \emph{reducing}
or \emph{comparing} it to deterministic computation. The central
assumption here is the following: a probabilistic algorithm computes
what we call a \emph{probabilistic function}, i.e. a function from a
discrete set (e.g. natural numbers or binary strings)
to \emph{distributions} over the same set.  What we want to do is to
study the set of those probabilistic functions which can be computed
by algorithms, possibly with resource constraints.

We give some initial results here. First of all, we provide a
characterization of computable probabilistic functions by the natural
generalization of Kleene's partial recursive functions, where among
the initial functions there is now a function corresponding to tossing
a fair coin. In the non-trivial proof of completeness for the obtained
algebra, Kleene's minimization operator is used in an unusual way,
making the usual proof strategy for Kleene's Normal Form Theorem (see,
e.g., \cite{soare1987}) useless. We later hint at how to recover the
latter by replacing minimization with a more powerful operator. We
also mention how probabilistic recursion theory offers
characterizations of concepts like the one of a computable
distribution and of a computable real number.

The second part of this paper is devoted to applying the
aforementioned recursion-theoretical framework to polynomial-time
computation. We do that by following Bellantoni and Cook's and
Leivant's works~\cite{BellantoniCook,Leivant}, in which
polynomial-time deterministic computation is characterized by a
restricted form of recursion, called \emph{predicative}
or \emph{ramified} recursion. Endowing Leivant's ramified recurrence
with a random base function, in particular, is shown to provide a
characterization of polynomial-time computable distributions, a key
notion in average-case complexity~\cite{BogdanovTrevisan06}.

\paragraph{Related Work.}
This work is rooted in classic theory of computation, and in
particular in the definition of partial computable functions as
introduced by Church and later studied by Kleene~\cite{kleene36}.
Starting from the early fifties, various forms of automata in which
probabilistic choice is available have been considered
(e.g.~\cite{Rabin63}). The inception of probabilistic choice into an
universal model of computation, namely Turing machines, is due to
Santos~\cite{santos69,santos71}, but is (essentially) already there in
an earlier work by De Leeuw and others~\cite{DeLeeuw53}. Some years
later, Gill~\cite{gill77} considered probabilistic Turing machines
with bounded complexity: his work has been the starting point of a
florid research about the interplay between computational complexity
and randomness. Among the many side effects of this research one can
of course mention modern cryptography~\cite{KatzLindell07}, in which
algorithms (e.g. encryption schemes, authentication schemes, and
adversaries for them) are very often assumed to work in probabilistic
polynomial time.

Implicit computational complexity (ICC), which studies machine-free
characterizations of complexity classed based on mathematical logic
and programming language theory, is a much younger research area.  Its
birth is traditionally made to correspond with the beginning of the
nineties, when Bellantoni and Cook~\cite{BellantoniCook} and
Leivant~\cite{Leivant} independently proposed function algebras
precisely characterizing (deterministic) polynomial time computable
functions. In the last twenty years, the area has produced many
interesting results, and complexity classes spanning from the
logarithmic space computable functions to the elementary functions
have been characterized by, e.g., function algebras, type
systems~\cite{Leivant93}, or fragments of linear
logic~\cite{girard98}.  Recently, some investigations on the interplay
between implicit complexity and probabilistic computation have started
to appear~\cite{DalLagoParisenToldin}. There is however an intrinsic
difficulty in giving \emph{implicit} characterizations of
probabilistic classes like \BPP\ or \ZPP: the latter are semantic
classes defined by imposing a polynomial bound on time, but also
appropriate bounds on the probability of error.  This makes the task
of enumerating machines computing problems in the classes much harder
and, ultimately, prevents from deriving implicit characterization of
the classes above. Again, our emphasis is different: we do not see
probabilistic algorithms as artifacts computing functions of the same
kind as the one deterministic algorithms compute, but we see
probabilistic algorithms as devices outputing distributions.

\section{Probabilistic Recursion Theory}\label{sect:prt}
\newcommand{\rand}{\mathit{rand}}
\newcommand{\CONF}[1]{\mathcal{C}_{#1}}
\newcommand{\FS}[2]{\mathcal{FC}_{\!#1}^{#2}}
\newcommand{\INIT}[2]{\mathcal{IN}_{\!#1}^{#2}}
\newcommand{\FUNC}[1]{\mathcal{IO}_{\!#1}}
\newcommand{\CT}[2]{\mathit{CT}_{\!#1}(#2)}
\newcommand{\CF}[1]{\mathcal{CF}_{\!#1}}
\newcommand{\trans}[1]{\overline{#1}}
\newcommand{\pair}{\mathit{pair}} 
\newcommand{\enc}{\mathit{enc}}
\newcommand{\PT}[1]{\mathit{PT}_{#1}}
\newcommand{\PC}[1]{\mathit{PC}_{#1}} 

In this section we provide a
characterization of the functions computed by a Probabilistic Turing
Machine (PTM) in terms of a function algebra \emph{\`a la} Kleene. We
first define {\em probabilistic recursive functions}, which are the
elements of our algebra. Next we define formally the class of
probabilistic functions computed by a PTM. Finally, we show the
equivalence of the two introduced classes. In the following, $\RRI$
is the unit interval.

Since PTMs compute probability (pseudo-)distributions, the functions
that we consider in our algebra have domain $\NN^k$ and codomain $\NN
\rightarrow \RRI$ (rather than $\NN$ as in the classic case). The idea
is that if $f(x)$ is a function which returns $r\in\RRI$ on input
$y\in\NN$, then $r$ is the probability of getting $y$ as the output
when feeding $f$ with the input $x$. We note that we could extend our
codomain from $\NN \rightarrow \RRI$ to $\NN^m \rightarrow \RRI$,
however we use $\NN \rightarrow \RRI$ in order to simplify the
presentation.
\begin{definition}[Pseudodistributions and Probabilistic Functions] \label{pseudis}
A \emph{pseudodistribution} on $\NN$ is a function
$\distrone:\fun{\NN}{\RRI}$ such that
$\sum_{\natone\in\NN}\distrone(\natone)\leq
1$. $\sum_{\natone\in\NN}\distrone(\natone)$ is often denoted as
$\sum\distrone$. Let $\distrset{\NN}$ be the set of all
pseudodistributions on $\NN$. A \emph{probabilistic function} (PF) is
a function from $\NN^k$ to $\distrset{\NN}$, where $\NN^k$ stands for
the set of $k$-tuples in $\NN$.  We use the expression $\{n_1^{p1},
\ldots, n_k^{pk} \}$ to denote the pseudodistribution $\distrone$
defined as $\distrone(n) = \sum_{n_i=n} p_i$. Observe that
$\sum\distrone=\sum_{k=1}^k p_i$.
\end{definition}
Please notice that probabilistic functions are always \emph{total}
functions, but their codomain is a set of distributions which do not
necessarily sum to $1$, but rather to a real number \emph{smaller} or
equal to $1$, this way modeling the probability of divergence. For
example, the nowhere-defined partial function $\Omega:\NN\rightharpoonup\NN$
of classic recursion theory becomes a probabilistic function which
returns the empty distributions $\emptyset$ on any input.
The first step towards defining our function algebra consists in giving a set of functions
to start from:
\begin{definition}[Basic Probabilistic Functions]\label{basprobfun}
The \emph{basic probabilistic functions} (BPFs) are as follows:
\begin{varitemize}
\item 
  The \emph{zero function} $\f{z}{\NN}{\distrset{\NN}}$ defined as:
  $z(\natone)(0)=1$ for every $\natone\in\NN$;
\item 
  The \emph{successor function} $\f{s}{\NN}{\distrset{\NN}}$ defined
  as: $s(\natone)(\natone+1) =1$ for every $n \in \NN$;
\item 
  The \emph{projection function} $\f{\Pi^n_m}{ \NN^n}
  {\distrset{\NN}}$ defined as: ${\Pi^n_m}(k_1, \cdots, k_n) (k_m)=1$
  for every positive $n,m\in \NN$ such that $1\leq m\leq n$;
\item
  The \emph{fair coin function} $\f{r}{\NN}{\distrset{\NN}}$ that
  is defined as:
  $$ 
  r(x)(y) = \left\{ 
  \begin{array}{rl} 
    1/2 & \mbox{if $y=x$} \\ 
    1/2 & \mbox{if $y=x+1$}
  \end{array} \right.
  $$
\end{varitemize}
\end{definition}
The first three BPFs are the same as the basic functions from classic
recursion theory, while $r$ is the only truly probabilistic BPF.

The next step consists in defining how PFs \emph{compose}. Function
composition of course cannot be used here, because when composing two
PFs $g$ and $f$ the codomain of $g$ does not match with the domain of
$f$. Indeed $g$ returns a distribution $\NN \rightarrow \RRI$ while
$f$ expects a natural number as input. What we have to do here is the
following.  Given an input $x\in\NN$ and an output $y\in\NN$ for the
composition ${f}\bullet{g}$, we apply the distribution $g(x)$ to any
value $z\in\NN$. This gives a probability $g(x)(z)$ which is then
multiplied by the probability that the distribution $f(z)$ associates
to the value $y\in\NN$. If we then consider the sum of the obtained
product $g(x)(z)\cdot f(z)(y)$ on all possible $z\in \NN$ we obtain
the probability of ${f}\bullet{g}$ returning $y$ when fed with $x$.
The sum is due to the fact that two different values, say
$z_1,z_2\in\NN$, which provide two different distributions $f(z_1)$
and $f(z_2)$ must both contribute to the same probability value
$f(z_1)(y)+f(z_2)(y)$ for a specific $y$. In other words, we are doing
nothing more than lifting $f$ to a function from distributions to
distributions, then composing it with $g$. Formally:
\begin{definition}[Composition] \label{Comp}
We define the \emph{composition} $\f{f \bullet
  g}{\NN}{\distrset{\NN}}$ of two functions
$\f{f}{\NN}{\distrset{\NN}}$ and $\f{g}{\NN}{\distrset{\NN}}$ as:
$$ 
\alcom{f}{g}{x}{y}= \sum_{z\in\NN}g(x)(z)\cdot f(z)(y).
$$
\end{definition}
The previous definition can be generalized to functions taking more
than one parameter in the expected way:
\begin{definition} [Generalized Composition] \label{GComp}
We define the \emph{generalized composition} of functions
$\f{f}{\NN^n}{\distrset{\NN}}$,
$\f{g_1}{\NN^k}{\distrset{\NN}},\ldots,\f{g_n}{\NN^k}{\distrset{\NN}}$
as the function $\f{f \odot (g_1,\ldots,g_n)}{\NN^k}{\distrset{\NN}}$
defined as follows:
$$
\alcomn{f}{(g_1,\ldots,g_n)}{\V{x}}{y}=\sum_{z_1,\ldots,z_n\in\NN}\left(f(z_1,\ldots,z_n)(y)\cdot\prod_{1\leq i\leq n} g_i(\V{x})(z_i)\right).
$$
\end{definition}
With a slight abuse of notation, we can treat probabilistic functions
as ordinary functions when forming expressions. Suppose, as an
example, that $x\in\NN$ and that $\f{f}{\NN^3}{\distrset{\NN}}$,
$\f{g}{\NN}{\distrset{\NN}}$, $\f{h}{\NN}{\distrset{\NN}}$. Then the
expression $f(g(x),x,h(x))$ stands for the distribution in
$\distrset{\NN}$ defined as follows:
$\alcomnf{f}{(g,\mathit{id},h)}{x}$, where $\mathit{id}=\Pi^1_1$ is
the identity PF.

The way we have defined probabilistic functions and their composition
is reminiscent of, and indeed inspired by, the way one defines the
Kleisli category for the Giry monad, starting from the category of
partial functions on sets. This categorical way of seeing the problem
can help a lot in finding the right definition, but by itself is not
adequate to proving the existence of a correspondence with machines
like the one we want to give here.

Primitive recursion is defined as in Kleene's algebra, provided that
one uses composition as previously defined:
\begin{definition} [Primitive Recursion] \label{prec}
Given functions $\f{g}{\NN^{k+2}} {\distrset{\NN}}$, and
$\f{f}{\NN^k}{\distrset{\NN}}$, the function $\f{h}
{\NN^{k+1}}{\distrset{\NN}}$ defined as
$$
h(\V{x},0)=f(\V{x});\qquad
h(\V{x},y+1)=g(\V{x},y,h(\V{x},y)); 
$$ 
is said to be defined \emph{by primitive recursion} from $f$ and
$g$, and is denoted as $\primrec{f}{g}$.
\end{definition} 

We now turn our attention to the minimization operator which, as in
the deterministic case, is needed in order to obtain the full
expressive power of (P)TMs. The definition of this operator is in our
case delicate and requires some explanation. Recall that, in the
classic case, the minimization operator allows from a partial function
$f:\NN^{k+1}\rightharpoonup\NN$, to define another partial function,
call it $\mu f$, which computes from $\V{x}\in\NN^{k}$ the least value
of $y$ such that $f(\V{x},y)$ is equal to $0$, if such a value exists
(and is undefined otherwise).  In our case, again, we are concerned
with distributions, hence we cannot simply consider the least value on
which $f$ returns $0$, since functions return $0$ \emph{with a certain
  probability}.  The idea is then to define the minimization $\mu f$
as a function which, given an input $\V{x}\in\NN^k$, returns a
distribution associating to each natural $y$ the probability that the
result of $f(\V{x},y)$ is $0$ \emph{and} the result of $f(\V{x},z)$ is
positive for every $z<y$. Formally:
\begin{definition}[Minimization] \label{fmin} 
Given a PF $\f{f}{\NN^{k+1}}{\distrset{\NN}} $, we define another PF 
$\f{\mu f} {\NN^k} {\distrset{\NN}}$ as follows:
$$
\mu f(\V{x})(y)=f(\V{x},y)(0) \cdot (\prod_{z<y}( \sum_{k>0} f(\V{x},z)(k))).
$$
\end{definition} 
We are finally able to define the class of functions we are interested
in as follows.
\begin{definition} [Probabilistic Recursive Functions] \label{Pu}
The class $\PrR$ of \emph{probabilistic recursive functions} is the
smallest class of probabilistic functions that contains the BPFs
(Definition \ref{basprobfun}) and is closed under the operation of
General Composition (Definition \ref{GComp}), Primitive Recursion
(Definition \ref{prec}) and Minimization (Definition \ref{fmin}).
\end{definition}
\condinc{}{It is easy to show that $\PrR$ includes all partial
  recursive functions.  This can be done by first defining an extended
  Recursive Function as follows.
\begin{definition} [Extended Recursive Functions] \label{ProbKleene}
For every partial recursive function  $\f{f}{\NN^k}{\NN}$ we define as the extended function $\f{p_f}{\NN^k}{\distrset{\NN}}$ 
as follows:
$$
p_f(\V{x})(y)= \left\{ \begin{array}{rl} 
 1& $if $y = f(\V{x}) \\
 0 & otherwise
  \end{array} \right.
$$
\end{definition}
\begin{proposition}\label{Precfunc}
If $f$ is a Partial Recursive function then we define $p_f$ as defined above is in  $\PrR$.
\end{proposition}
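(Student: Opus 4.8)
The plan is to prove Proposition 1 by induction on the structure of the partial recursive function $f$, showing at each stage that the extended function $p_f$ can be built up inside $\PrR$ using the probabilistic operators. The statement to establish is that for every partial recursive $\f{f}{\NN^k}{\NN}$, the function $p_f$ (which places all probability mass on the single value $f(\V{x})$, and is the empty distribution where $f$ diverges) belongs to $\PrR$.

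First I would handle the base cases. For the zero, successor, and projection functions of classic recursion theory, the extended versions $p_z$, $p_s$, $p_{\Pi^n_m}$ coincide exactly with the BPFs $z$, $s$, $\Pi^n_m$ from Definition \ref{basprobfun}, since each classic basic function is deterministic and its extension puts probability $1$ on the unique output. So these are immediately in $\PrR$.

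Next come the closure cases. For composition, if $f = g \circ (h_1,\dots,h_n)$ in the classic sense, I would show that $p_f = p_g \odot (p_{h_1},\dots,p_{h_n})$, using Definition \ref{GComp}; the key observation is that when each $p_{h_i}$ is a point-mass (Dirac) distribution concentrated on $h_i(\V{x})$, the sum over $z_1,\dots,z_n$ collapses to the single term where $z_i = h_i(\V{x})$, recovering exactly a point mass on $g(h_1(\V{x}),\dots,h_n(\V{x}))$, and divergence propagates correctly because a vanishing factor kills the whole product. For primitive recursion, a similar argument shows $p_{\primrec{f}{g}} = \primrec{p_f}{p_g}$ via Definition \ref{prec}, again exploiting that point masses compose to point masses at each recursion step.

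The main obstacle will be the minimization case, because Definition \ref{fmin} defines $\mu f$ probabilistically in a way that does not obviously align with classic minimization even when $f$ is deterministic. Given a classic $\f{f}{\NN^{k+1}}{\NN}$ with extension $p_f$, I would need to verify that $p_{\mu f} = \mu(p_f)$, where the left side is the extension of the classic minimization and the right side is Definition \ref{fmin} applied to $p_f$. The crucial check is that, since $p_f(\V{x},z)$ is a point mass, the factor $p_f(\V{x},y)(0)$ is $1$ exactly when $f(\V{x},y)=0$ and $0$ otherwise, while $\sum_{k>0} p_f(\V{x},z)(k)$ is $1$ exactly when $f(\V{x},z)>0$; hence the product in Definition \ref{fmin} evaluates to $1$ on the least $y$ with $f(\V{x},y)=0$ and all smaller arguments positive, and to $0$ elsewhere. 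One subtlety to treat carefully is the divergence behaviour: when $f(\V{x},z)$ is undefined for some $z$ below the minimizing value, the classic $\mu f$ diverges, and I would confirm that the probabilistic definition yields the empty distribution in that case as well, so that the two notions genuinely coincide. Once all cases are verified, the result follows by the induction principle over the inductive definition of the partial recursive functions.
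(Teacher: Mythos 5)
Your proposal is correct and follows essentially the same route as the paper's own proof: structural induction on $f$, identifying the extensions of the classic base functions with the BPFs, and verifying that $\odot$, $\primrec{\cdot}{\cdot}$ and $\mu$ applied to point-mass extensions yield the point-mass extension of the corresponding classical construct. Your explicit attention to the divergence case in minimization (a vanishing factor $\sum_{k>0}p_f(\V{x},z)(k)=0$ killing the product) is a point the paper leaves implicit, but the argument is otherwise the same.
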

\begin{proof}
The proof goes by induction on the structure of $f$ as a partial recursive function. 
\begin{itemize}
\item $f$ is the zero function, so $f : \NN  \rightarrow \NN$  defined as:  $f(x) = 0$ for every $x  \in \NN$.
Thus $p_f$ is in $\PrR$ because $p_f=z$ 
\item $f$ is the successor function $s : \NN\rightarrow \NN$  defined as:  $s(x) = x + 1$ for every $x \in \NN$.
Thus $p_f$ is in $\PrR$ because $p_f=s$ 
\item $f$ is the projection function. $f^n_m: \NN^n \rightarrow \NN$ defined as:  $f^n_m(x_1, \cdots, x_n) = x_m$
for every positive $n \in \NN$ and for all $m\in \NN$, such that $1\leq m\leq n$
Thus $p_f$ is in $\PrR$ because $p_f=\Pi^n_m$ 

\item $f$ is defined by composition from $h, g_1, \cdots, g_n$ as:
$$
f(\V{x}) = h(g_1(\V{x}), \cdots, g_n(\V{x}))
$$
where $\f{h} {\NN^n} {\NN}$ and
$\f{g_i} {\NN^k} {\NN}$ for every $1\leq i\leq n$ are partial recursive functions.
by definition of $f(\V{x})$ we have
$$
p_f(\V{x})(y)= \left\{ \begin{array}{rl} 
 1& $if $y =  h(g_1(\V{x}), \cdots, g_n(\V{x})) \\
 0 & otherwise
  \end{array} \right.
$$
We see that $h, g_1, \cdots, g_n$ are all partial recursive functions. 
So we have by definition of $p_f$ that
$$
p_{g_1}(\V{x})(y)= \left\{ \begin{array}{rl} 
 1& $if $y =g_1(\V{x}) \\
 0 & otherwise
  \end{array} \right.
 $$
 $$
 \vdots
 $$
$$
p_{g_n}(\V{x})(y)= \left\{ \begin{array}{rl} 
 1& $if $y =g_n(\V{x}) \\
 0 & otherwise
  \end{array} \right.
$$
$$
p_{h}(\V{z})(y)= \left\{ \begin{array}{rl} 
 1& $if $y =h(\V{z}) \\
 0 & otherwise
  \end{array} \right.
$$

By hypothesis we observe that $p_{g_1}, \cdots, p_{g_n}, p_h \in \PrR$ and 

\begin{align*}
\alcomn{p_{h}}{(p_{g_1},\ldots,p_{g_n})}{\V{x}}{y}&=\sum_{z_1,\ldots,z_n\in\NN} p_h(z_1,\ldots,z_n)(y)\cdot(\prod_{1\leq i\leq n} p_{g_i}(\V{x})(z_i))\\
&= \sum_{z_1,\ldots,z_n\in\NN} p_h(z_1,\ldots,z_n)(y)\cdot (\prod_{z_i=g_i(\V{x})} 1) \\
&=\sum_{y=h(z_1, \cdots, z_n)} 1 \cdot( \prod_{z_i=g_i(\V{x})} 1)\\
&=\sum_{y=h(g_1(\V{x}),\cdots, g_n(\V{x}))} 1\\
& =  p_f(\V{x})(y)
\end {align*}
Thus $p_f$ is in $\PrR$ because $p_f=\alcomn{p_{h}}{(p_{g_1},\ldots,p_{g_n})}{\V{x}}{y}$ .

\item $f$  is defined by primitive recursion so 
$\f{f} {\NN^k\times \NN}{\NN}$ defined as:
$$
f(\V{x},0) = h(\V{x})
$$
$$
f(\V{x},n+1)=g(\V{x},n, f(\V{x},n))
$$
where $\f{g} {\NN^k\times\NN\times\NN} {\NN}$ and
$\f{h} {\NN^k} {\NN}$ are partial recursive functions.

$p_f$ is defined as:
$$
p_f(\V{x},n)(z_n)= \left\{ \begin{array}{rl} 
 1& $if $ z_n =\primrec{h}{g} \\
 0 & otherwise
  \end{array} \right.
$$

We see that $h, g$ are all partial recursive functions. 
So we have by definition of $p_f$ that

$$
p_{h}(\V{x})(z_0)= \left\{ \begin{array}{rl} 
 1& $if $y =h(\V{x}) \\
 0 & otherwise
  \end{array} \right.
$$
$$
p_{g}(\V{x},n, z_{n})(z_{n+1})= \left\{ \begin{array}{rl} 
 1& $if $z_{n+1} =g(\V{x},n,f(\V{x},n)) \\
 0 & otherwise
  \end{array} \right.
$$


By hypothesis we observe that $p_g,  p_h \in \PrR$. Now 
if $n=0$ then $p_f(\V{x},0)= p_{h}(\V{x})$ and if $n>0$ then $p_f(\V{x},n+1)= p_{g}(\V{x},n, z_{n})$.
We observe that
\begin{align*}
\alcomn{p_{g}}{(\id, p_f)}{\V{x},n}{z_{n+1}}&=\sum_{x_1,\ldots,z_k, n, z_{n}\in\NN} p_g(x_1,\ldots,x_k,n,z_{n})(z_{n+1})\cdot(\prod_{1\leq i\leq k+1} \id(\V{x},n)(\V{x},n) \cdot f(\V{x},n)(z_n))\\
&= \sum_{x_1,\ldots,z_k, n, z_{n}\in\NN} p_g(x_1,\ldots,x_k,n,z_{n})(z_{n+1})\cdot (\prod_{\V{x}=\V{x} , n=n , z_{n}=f(\V{x},n)} 1  ) \\
&=\sum_{z_{n+1}=g(x_1,\ldots,x_k, n, z_{n})} 1 \cdot (\prod_{\V{x}=\V{x} , n=n , z_{n}=f(\V{x},n)} 1  )\\
&=\sum_{z_{n+1}=g(\V{x},n,f(\V{x},n))} 1 \\
& =  p_f(\V{x},n+1)(z_{n+1})
\end {align*}
Thus $p_f$ is in $\PrR$ because $p_f=\primrec{p_h}{p_g} $.

\item $f$  is defined by minimization so:
$$
f(\V{x}) = \mu \; y \; (g(\V{x},y)=0) 
$$
$p_f$ is defined as:
$$
p_f(\V{x})(z)= \left\{ \begin{array}{rl} 
 1& $if $z =f(\V{x})  \\
 0 & otherwise
  \end{array} \right.
$$
by definition of $f(\V{x})$ we have:
$$
p_f(\V{x})(z)= \left\{ \begin{array}{rl} 
 1& $if $z = \mu \; y \; (g(\V{x},y)=0)  \\
 0 & otherwise
  \end{array} \right.
$$
We know that $g$ is a recursive function, so we have that:
$$
p_g(\V{x},z)(k)= \left\{ \begin{array}{rl} 
 1& $if $k = g(\V{x},z)  \\
 0 & otherwise
  \end{array} \right.
$$ 

By hypothesis $p_g \in \PrR$.
We observe that:

\begin{align*}
\mu \; p_g(\V{x})(z)&=p_g(\V{x},z)(0) \cdot (\prod_{n< z}( \sum_{k>0} p_g(\V{x},n)(k)))\\
&=  p_g (\V{x},z)(0) \cdot (\prod_{n< z}( \sum_{k>0, k= g(\V{x},n)} 1)) \\
&= p_g (\V{x},z)(0) \cdot (\prod_{n< z, k>0, k= g(\V{x},n)} 1) \\
&= \left\{ \begin{array}{rl} 
 1& $if z is the minimal values such that  $ g(\V{x},z) = 0 $  and for all $ n< z $ $ g(\V{x},z) > 0\\
 0 & otherwise
  \end{array} \right.  \\ 
& = \left\{ \begin{array}{rl} 
 1& $if $ z= \mu \; y \; (g(\V{x},y)=0)  \\
 0 & otherwise
  \end{array} \right. \\ 
&=p_f(\V{x})(z)
\end {align*}
Thus $p_f$ is in $\PrR$ because $p_f=\mu \; p_g$
\end{itemize}

%
%
%
\end{proof}

}
It is easy to show that $\PrR$ includes all partial recursive functions, seen
as probabilistic functions: first, for every partial function 
$f:\NN^k\rightharpoonup\NN$, define $\f{p_f}{\NN^k}{\distrset{\NN}}$ 
by stipulating that $p_f(\V{x})(y)=1$ whenever $y = f(\V{x})$, and $p_f(\V{x})(y)=0$ otherwise;
then, $p_f\in\PrR$ whenever $f$ is partial recursive.

\condinc {
\begin{example}
  The following are examples of probabilistic recursive functions:
  \begin{varitemize}
 \item
   The \emph{identity function} $\f{\id}{\NN}{\distrset{\NN}}$, defined as 
   $\id(x)(x)=1$.
   For all $x, y \in\NN$ we have that 
   $$
   \id(x)(y) = 
   \left\{ \begin{array}{ll} 
       1 & \mbox{if $y=x$} \\
       0 & \mbox{otherwise}
   \end{array} \right.
   $$
   as a consequence $id=\Pi^1_1$, and, since the 
   latter is a BPF (Definition \ref{basprobfun}) $id$ is in $\PrR$.
  \item
    \newcommand{\add}{\mathit{add}} 
    The probabilistic funtion $\f{\rand}{\NN}{\distrset{\NN}}$ such
    that for every $x\in\NN$, $\rand(x)(0)=\frac{1}{2}$ and
    $\rand(x)(1)=\frac{1}{2}$ can be easily shown to be recursive,
    since $\rand=r\odot z$ (and we know that both $r$ and $z$ are
    BPF). Actually, $\rand$ could itself be taken as the only genuinely
    probabilistic BPF, i.e., $r$ can be constructed from $\rand$ and
    the other BPF by composition.  We proceed by defining
    $\f{g}{\NN^3}{\distrset{\NN}}$ as follow:
    $$
    g(x_1,x_2, z)(y) = 
    \left\{ \begin{array}{ll} 
      1 & \mbox{if $y=z+1$} \\
      0 & \mbox{otherwise}
    \end{array} \right.
    $$ 
    $g$ is in $\PrR$ because $g=s\odot(\Pi^3_3)$.  Now we observe that
    the function $\add$ defined by $\add(x,0)=\id(x)$ and
    $\add(x_1,x_2+1)=g(x_1,x_2,\add(x_1,x_2))$ is a probabilistic
    recursive function, since it can be obtained from basic functions
    using composition and primitive recursion.  We can conclude by
    just observing that $r=\add\odot(\Pi^1_1,\rand)$.
  \item
    All functions we have proved recursive so far have the property
    that the returned distribution is \emph{finite} for any input.
    Indeed, this is true for every probabilistic \emph{primitive}
    recursive function, since minimization is the only way to break
    this form of finiteness.  Consider the function
    $\f{f}{\NN}{\distrset{\NN}}$ defined as
    $f(x)(y)=\frac{1}{2^{y-x+1}}$ if $y\geq x$, and $f(x)(y)=0$
    otherwise.  We define another function
    $\f{h}{\NN}{\distrset{\NN}}$ by stipulating that
    $h(x)(y)=\frac{1}{2^{y+1}}$ for every $x,y\in\NN$. $h$ is a
    probabilistic recursive function; indeed consider the function
    $\f{k}{\NN^2}{\distrset{\NN}}$ defined as $\rand\odot\Pi^2_1$ and build
    $\mu\;k$. By definition,
    \begin{equation}\label{equ:muk}
    (\mu\;k)(x)(y)=k(x, y)(0)\cdot(\prod_{z<y}( \sum_{q>0} k(x, z)(q))). 
    \end{equation}
    Then observe that $(\mu\;k)(x)(y)=\frac{1}{2^{y+1}}$: by
    (\ref{equ:muk}), $(\mu\;k)(x)(y)$ unfolds into a product of exactly
    $y+1$ copies of $\frac{1}{2}$, each ``coming from the flip of a
    distinct coin''. Hence, $h=\mu\;k$.  Then we observe that
    $$
    (\add\odot(\mu\;k,\id))(x)(y)=\sum_{z_1,z_2}\add(z_1,z_2)(y) \cdot ((\mu \;k)(x)(z_1)\cdot\id(x)(z_2)).
    $$ 
    But please notice that $\id(x)(z_2)=1$
    only when $z_2=x$ (and in the other cases $id(x)(z_2) = 0$),
    $(\mu\;k)(x)(z_1)=\frac{1}{2^{z_1+1}}$, and $\add(z_1,z_2)(y)=1$ only when
    $z_1+z_2=y$ (and in the other cases, $\add(z_1,z_2)(y)=0$).
    This implies that  the term in the sum is different from $0$ only when 
    $z_2=x$ and $z_1+z_2=y$, namely when $z_1=y-z_2=y-x$, and in that case its value is
    $\frac{1}{2^{y-x+1}}$. Thus, we can claim that
    $f=(\add\odot(\mu\;k,\id))$.
  \end{varitemize}
\end{example}
}{
\begin{example}
  The following are examples of probabilistic recursive functions:
  \begin{varitemize}
  \item
   The \emph{identity function} $\f{\id}{\NN}{\distrset{\NN}}$, defined as 
   $\id(x)(x)=1$.
   For all $x, y \in\NN$ we have that 
   $$
   id(x)(y) = \left\{ \begin{array}{rl} 
       1 & $if $y = x \\
       0 & otherwise
     \end{array} \right.
   $$
   on the other hand for every $x,y \in \NN$ we have as a consequence, $id=\Pi^1_1$, and, since the 
   latter is a basic function (Definition \ref{basprobfun}) $id$ is in $\PrR$.
  \item
    The function $\f{f}{\NN}{\distrset{\NN}}$ defined by
    $f(x)(x)=\frac{1}{2}$ and $f(x)(x+1)=\frac{1}{2}$.
    
    We define $\f{id}{\NN}{\distrset{\NN}}$ as follow:
    $$
    id(x)(y) = \left\{ \begin{array}{rl} 
        1 & $if $y = x \\
        0 & otherwise
      \end{array} \right.
    $$
    We define $\f{g}{\NN^3}{\distrset{\NN}}$ as follow:
    $$
    g(x_1,x_2, z)(y) = \left\{ \begin{array}{rl} 
        1 & $if $y = z+1 \\
        0 & otherwise
      \end{array} \right.
    $$
    $g$ is in $\PrR$ because $g=s\odot(\Pi^3_3)$ 
    
    Finally the function $add$ satisfies the follow equations:
    \begin{align*}
      add(x,0)&=h(x);\\
      add(x_1,x_2+1)&=g(x_1,x_2,add(x_1,x_2))
    \end{align*}
    
    so we proved that $add$ is constructed by Primitive Recursion 
    operation and it is a probabilistic recursive function.
    Now we observe that:
    $$
    f=add\odot(\Pi^1_1, rand)
    $$
    that is a function defined using the operation of  General Composition (Definition \ref{GComp}).
    
  \item  
    The function $\f{f}{\NN}{\distrset{\NN}}$ defined by
    $$
    f(x)(y)=
    \left\{
      \begin{array}{ll}
        \frac{1}{2^{y-x}} & \mbox{if $y>x$}\\
        0 & \mbox{otherwise}
      \end{array}  \right.
    $$
   We define $\f{h}{\NN}{\distrset{\NN}}$ as follow:
    $$
    h(x) = \left\{ \begin{array}{rl} 
        1/2^y & $if $y \ge 1 \\
        0 & otherwise
      \end{array} \right.
    $$
    that is a probabilistic recursive function because
    $$
    \mu \;  rand(x)(y)= rand(x, y)(0)  \cdot (\prod_{y< z}( \sum_{k>0} rand(x, z)(k))) 
    $$
    thus $rand(x, y)(0) = 1/2$ and $\sum_{k>0} rand(x, z)(k) = \sum_{k=1,2,\cdots} rand(x, z)(k)= rand(x, z)(1)+rand(x,z)(2)+\cdots=1/2+0+\cdots =1/2$ for definition of rand.
    $\prod_{y< z}1/2=1/2^{y-1}$.
    So
    $$
    \mu \;  rand(x)(y)=  \left\{ \begin{array}{rl} 
        1/2^y & $if $y \ge 1 \\
        0 & otherwise
      \end{array} \right.
    $$
    Then we observe that: 
    $$
    g(x)(y)=  add \odot ( \mu \;  rand, id) (x) (y)
    $$
    So
    $$
    add \odot ( \mu \;  rand, \id) (x) (y)=   \sum_{x_1,x_2}  add(x_1,x_2)(y) \cdot (\mu \; rand(x)(x_1)*id(x)(x_2)) = 1/2^{y-x}
    $$ 
    because the function $id(x)(x) = 1$
    and so $x_2=x$
    and $\mu \; rand (x) (x_1) = 1/2^{x_1}$ if $x_1>0$.
    So $x_1>0$.
    Now this is true if and only if 
    $x_2=x$ and $x+x_1=y$ and finally $x_1=y-x$.
  \end{varitemize}
\end{example}
}
\subsection{Probabilistic Turing Machines and Computable Functions}
In this section we introduce computable functions as those
probabilistic functions which can be computed by Probabilistic Turing
Machines.  As previously mentioned, probabilistic computation devices
have received a wide interest in computer science already in the
fifties~\cite{DeLeeuw53} and early sixties~\cite{Rabin63}. A natural
question which arose was then to see what happened if random elements
were allowed in a Turing machine. This question led to several
formalizations of probabilistic Turing machines (PTMs in the
following)~\cite{DeLeeuw53,santos69} --- which, essentially, are
Turing machines which have the ability to flip coins in order to make
random decisions --- and to several results concerning the
computational complexity of problems when solved by
PTMs~\cite{gill77}.
  
Following \cite{gill77}, a Probabilistic Turing Machine (PTM) $M$ can
be seen as a Turing Machine with two transition functions $\delta_0,
\delta_1$. At each computation step, either $\transone_0$ or
$\transone_1$ can be applied, each with probability $1/2$.  Then, in a
way analogous to the deterministic case, we can define a notion of a
(initial, final) configuration for a PTM $\ptmone$. In the following,
$\Sigma_b$ denotes the set of possible symbols on the tape, including
a blank symbol; $Q$ denotes the set of states; $Q_f\subseteq Q$
denotes the set of final states and $q_s\in Q$ denotes the initial
state.
\begin{definition}[Probabilistic Turing Machine]\label{PTM}
A Probabilistic Turing Machine (PTM) is a Turing machine endowed with
two transition functions $\transone_0, \transone_1$. At each
computation step the transition function $\transone_0$ can be applied
with probability $1/2$ and the transition $\transone_1$ can be applied
with probability $1/2$.
\end{definition}
\begin{definition}[Configuration of a PTM] \label{ConfigPTM}
Let $\ptmone$ be a PTM.  We define a PTM \emph{configuration} as a
4-tuple
$\conf{\strone}{\symone}{\strtwo}{\stone}\in\bk{\alpone}^*\times\bk{\alpone}\times\bk{\alpone}^*
\times\ssone $ such that:
\begin{varitemize}
\item 
  The first component, $\strone\in\bk{\alpone}^*$, is the portion of
  the tape lying on the left of the head.
\item 
  The second component, $\symone\in\bk{\alpone}$, is the symbol the
  head is reading.
\item 
  The third component, $\strtwo\in\bk{\alpone}^*$, is the portion of
  the tape lying on the right of the head.
\item 
  The fourth component, $\stone\in\ssone$ is the current state.
\end{varitemize}
Moreover we define the set of all configurations as $\CONF{M} =
\bk{\alpone}^*\times\bk{\alpone}\times\bk{\alpone}^* \times\ssone $.
\end{definition}

\begin{definition}[Initial and Final Configurations of a PTM] \label{initialConfigPTM}
Let $\ptmone$ be a PTM.  We define the \emph{initial configuration} of
$\ptmone$ for the string $\strone$ as the configuration in the form
$\langle\varepsilon, a , v, q_s\rangle\in
\Sigma_b^*\times\Sigma_b\times\bk{\alpone}^* \times\ssone $ such that
$\strone=a\cdot v$ and the fourth component, $q_s\in \ssone$, is the
initial state.  We denote it with $\INIT{M}{\strone}$. Similarly, we
define a \emph{final configuration} of $\ptmone$ for $\strone$ as a
configuration $\langle \strone,\_,\varepsilon,q_f\rangle\in
\Sigma_b^*\times\Sigma_b\times\Sigma_b^*\times Q_f$.  The set of all
such final configurations for a PTM $M$ is denoted by
$\FS{M}{\strone}$.
\end{definition}
For a function $\f{T}{\NN}{\NN}$, we say that a PTM $M$ \emph{runs in time bounded by $T$} if for any input $x$, 
$M$ halts on input $x$ within $T(|x|)$ steps \emph{independently} of the random choices it makes. Thus, $M$
\emph{works in polynomial time} if it runs in time bounded by $P$, where $P$ is any polynomial.

Intuitively, the function computed by a PTM $\ptmone$ associates to
each input $\strone$, a (pseudo)-distribution which indicates the
probability of reaching a final configuration of $M$ from
$\INIT{M}{\strone}$. It is worth noticing that, differently from the
deterministic case, since in a PTM the same configuration can be
obtained by different computations, the probability of reaching a
given final configuration is the \emph{sum} of the probabilities of
reaching the configuration along all computation paths, of which there
can be (even infinitely) many.  It is thus convenient to define the
function computed by a PTM through a fixpoint construction, as
follows.  First, we can define a partial order on the string
distributions as follows.  
\condinc{
\begin{definition}\label{def:strdistr}
A \emph{string pseudodistribution} on $\Sigma^*$ is a function
$\distrstringone:\fun{\alpone^*}{\RRI}$ such that
$\sum_{s\in\Sigma^*}\distrstringone(s)\leq 1$. $\distrset{\alpone^*}$
denotes the set of all string pseudodistributions on $\alpone^*$. The
relation
$\sqsubseteq_{\distrset{\alpone^*}}\subseteq\distrset{\alpone^*}\times\distrset{\alpone^*}$
is defined as the pointwise extension of the usual partial order on
$\RR$ 
\condinc{.}  {: for
  $\distrstringone,\distrstringtwo\in\distrset{\alpone^*}$ by imposing
  that
  $\distrstringone\sqsubseteq_{\distrset{\alpone^*}}\distrstringtwo$
  if and only if, for all $\strone\in\alpone^*$, it holds
  $\distrstringone(s)\leq\distrstringtwo(s)$.}
\end{definition}
It is easy to show that the relation $\sqsubseteq_{\distrset{\alpone^*}}$ from Definition~\ref{def:strdistr} is a partial order.
}
{
\begin{definition}\label{def:strdistr}
A \emph{string distribution} on $\Sigma^*$ is a function $\distrstringone:\fun{\alpone^*}{\RRI}$ such that
$\sum_{s\in\Sigma^*}\distrstringone(s)\leq 1$. $\distrset{\alpone^*}$ denotes the set of all
string distributions on $\Sigma^*$. 
\end{definition}
Next we can define a partial order on string distributions by a pointwise extension of the usual order on $\RR$:
\begin{definition}\label{relord1}
The relation $\sqsubseteq_{\distrset{\alpone^*}} \subseteq \distrset{\alpone^*}\times\distrset{\alpone^*}$ is defined 
by stipulating that $A \sqsubseteq_{\distrset{\alpone^*}} B$ if and only if, for all $s \in \Sigma^*$,  $A(s) \leq B(s)$.
\end{definition}
The proof of the following is immediate.
\begin{proposition} \label{relord4}
The structure $(\distrset{\alpone^*},\sqsubseteq_{\distrset{\alpone^*}})$ is a POSET.
\end{proposition}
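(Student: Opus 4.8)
The plan is to check the three defining properties of a partial order --- reflexivity, antisymmetry, and transitivity --- each of which reduces immediately to the corresponding property of the usual order $\leq$ on $\RR$ (equivalently on $\RRI$), applied pointwise over $s \in \Sigma^*$. The key preliminary observation is that, by Definition~\ref{def:strdistr}, a string distribution is a genuine \emph{total} function $\distrstringone : \fun{\alpone^*}{\RRI}$, so there is no subtlety about domains of definition, and pointwise equality of values will yield equality of functions.

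For reflexivity, I would fix $A \in \distrset{\alpone^*}$; since $A(s) \leq A(s)$ for every $s \in \Sigma^*$ by reflexivity of $\leq$ on $\RR$, Definition~\ref{relord1} gives $A \sqsubseteq_{\distrset{\alpone^*}} A$ directly. For antisymmetry, I would assume $A \sqsubseteq_{\distrset{\alpone^*}} B$ and $B \sqsubseteq_{\distrset{\alpone^*}} A$; unfolding Definition~\ref{relord1} yields $A(s) \leq B(s)$ and $B(s) \leq A(s)$ for every $s$, whence $A(s) = B(s)$ by antisymmetry of $\leq$, and since this holds at every argument the two functions coincide, i.e.\ $A = B$. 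For transitivity, assuming $A \sqsubseteq_{\distrset{\alpone^*}} B$ and $B \sqsubseteq_{\distrset{\alpone^*}} C$, I would chain $A(s) \leq B(s) \leq C(s)$ pointwise using transitivity of $\leq$, concluding $A \sqsubseteq_{\distrset{\alpone^*}} C$.

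There is no real obstacle here: the argument is a routine pointwise lifting of the order structure of $\RRI$, which is exactly why the text flags the proof as immediate. The only point that merits a word is that antisymmetry passes from ``$A(s) = B(s)$ for all $s$'' to ``$A = B$'', which is function extensionality and is legitimate precisely because string distributions are total functions into $\RRI$.
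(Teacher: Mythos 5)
Your proof is correct and matches the paper's intent exactly: the paper merely declares the result ``immediate,'' and the routine pointwise lifting of reflexivity, antisymmetry, and transitivity from $\leq$ on $\RRI$, together with function extensionality for the antisymmetry step, is precisely the argument being elided. Nothing further is needed.
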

}
\condinc{
Next, we can define the domain $\confeval$ of those functions computed by a PTM $M$ from a given
configuration, i.e., the set of those functions $f$ such that $\f{f}{\CONF{M}}{\distrset{\alpone^*}}$.
Inheriting the structure from $\distrset{\alpone^*}$, we can obtain a poset $(\confeval,\sqsubseteq_{\confeval})$, again 
by defining $\sqsubseteq_{\confeval}$ pointwise.
}
{
Now we can define the domain $\confeval$ of those functions computed by a PTM $M$ from a given
configuration\footnote{Of course $\confeval$  is a proper superset of the functions computed by PTMs.}. 
This set is defined as follows and will be used as the domain of the functional whose least fixpoint gives the function computed by a PTM.
\begin{definition}\label{confeval}
The set $\confeval$ is defined as $\{f | \f{f}{\CONF{M}}{\distrset{\alpone^*}} \}$
\end{definition}
Inheriting the structure on $\distrset{\alpone^*}$ we can define a partial order on $\confeval$ as follows.
\begin{definition}\label{relord2}
The relation $\sqsubseteq_{\confeval} \subseteq \confeval \times \confeval$ is defined for
$A, B \in \confeval$ $A \sqsubseteq_{\confeval} B$ if and only if, for all $c \in \CONF{M} $, $A(c) \sqsubseteq_{\distrset{\alpone^*}} B(c)$
\end{definition}
Also the proof of the following is immediate.
\begin{proposition}\label{relord3}
The structure $(\confeval,\sqsubseteq_{\confeval})$ is a POSET.
\end{proposition}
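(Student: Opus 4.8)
The plan is to verify directly the three defining properties of a partial order on $\confeval$ --- reflexivity, antisymmetry, and transitivity --- by lifting each of them pointwise from the POSET structure on $\distrset{\alpone^*}$ already established in Proposition~\ref{relord4}. Since $\sqsubseteq_{\confeval}$ is defined so that $A \sqsubseteq_{\confeval} B$ holds precisely when $A(c) \sqsubseteq_{\distrset{\alpone^*}} B(c)$ for every configuration $c \in \CONF{M}$, each axiom reduces to the corresponding axiom for $\sqsubseteq_{\distrset{\alpone^*}}$, uniformly quantified over all $c$.

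Concretely, for reflexivity I would fix $A \in \confeval$ and observe that $A(c) \sqsubseteq_{\distrset{\alpone^*}} A(c)$ holds for every $c$ by reflexivity of the order on $\distrset{\alpone^*}$, whence $A \sqsubseteq_{\confeval} A$. For transitivity, from $A \sqsubseteq_{\confeval} B$ and $B \sqsubseteq_{\confeval} C$ I would extract, for each $c$, the chain $A(c) \sqsubseteq_{\distrset{\alpone^*}} B(c) \sqsubseteq_{\distrset{\alpone^*}} C(c)$ and apply transitivity of $\sqsubseteq_{\distrset{\alpone^*}}$ to obtain $A(c) \sqsubseteq_{\distrset{\alpone^*}} C(c)$ for all $c$, i.e. $A \sqsubseteq_{\confeval} C$. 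For antisymmetry, from $A \sqsubseteq_{\confeval} B$ and $B \sqsubseteq_{\confeval} A$ I would deduce $A(c) = B(c)$ for each $c$ by antisymmetry of the order on $\distrset{\alpone^*}$; since two functions that agree on every argument are equal, this yields $A = B$.

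The only point requiring any care --- and it is the same subtlety that already underlies Proposition~\ref{relord4} --- is that $\sqsubseteq_{\distrset{\alpone^*}}$ is itself a pointwise extension of the order $\leq$ on $\RR$ over all $s \in \Sigma^*$, so that the three axioms ultimately rest on the corresponding properties of $\leq$ on the reals. I therefore do not expect any genuine obstacle: the statement is an instance of the standard fact that the pointwise order on a function space valued in a poset is again a poset, here applied one further level of nesting on top of Proposition~\ref{relord4}. Accordingly the proof is immediate, and the effort is purely one of routine bookkeeping through the two layers of the pointwise definition.
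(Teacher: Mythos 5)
Your proof is correct and matches the paper's intent: the paper simply declares the proposition ``immediate,'' and the routine pointwise lifting of reflexivity, antisymmetry, and transitivity from $(\distrset{\alpone^*},\sqsubseteq_{\distrset{\alpone^*}})$ that you spell out is exactly the argument being taken for granted. No discrepancy to report.
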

}
%
%
%
\condinc{
Moreover, it is also easy to show that the two introduced posets are $\omega\mathbf{CPO}$s.
}
{

%
%
%
%
%
Given a POSET, the notions of least upper bound, denoted by $\bigsqcup$, and of an ascending chain are defined as usual. Next, 
the bottom elements are defined as follows.
\begin{lemma}\label{LUBD}
Let $d_\perp: \Sigma^* \rightarrow \RRI$ be defined by stipulating that
$d_\perp(s)=0$ for all $s\in\Sigma^*$. Then, $d_\perp$ is the bottom element of the poset 
$(\distrset{\alpone^*},  \sqsubseteq_{\distrset{\alpone^*}})$.
\end{lemma}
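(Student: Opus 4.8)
The statement to establish is that $d_\perp$, the everywhere-zero function on $\Sigma^*$, is the bottom element of the poset $(\distrset{\alpone^*}, \sqsubseteq_{\distrset{\alpone^*}})$. Being a bottom element carries two obligations, and the plan is to discharge them in order. First I must check that $d_\perp$ is genuinely a member of $\distrset{\alpone^*}$, since only then does it make sense to ask whether it lies below everything. Second, I must verify the minimality property: that $d_\perp \sqsubseteq_{\distrset{\alpone^*}} A$ holds for every $A \in \distrset{\alpone^*}$.

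For the membership obligation I would appeal directly to Definition~\ref{def:strdistr}. A string distribution is required to be a function $\Sigma^* \rightarrow \RRI$ whose values sum to at most $1$. The codomain condition is immediate because $0 \in \RRI$, so $d_\perp(s) = 0$ is a legitimate value for every $s$. The summability condition is equally immediate: $\sum_{s \in \Sigma^*} d_\perp(s) = \sum_{s \in \Sigma^*} 0 = 0 \leq 1$. Hence $d_\perp \in \distrset{\alpone^*}$.

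For the minimality obligation I would fix an arbitrary $A \in \distrset{\alpone^*}$ and unfold Definition~\ref{relord1}: establishing $d_\perp \sqsubseteq_{\distrset{\alpone^*}} A$ amounts to showing $d_\perp(s) \leq A(s)$ for every $s \in \Sigma^*$. Since $d_\perp(s) = 0$ and $A(s) \in \RRI$ is by definition a point of the unit interval, we have $0 \leq A(s)$, which is exactly what is needed. As $A$ was arbitrary, $d_\perp$ sits below every element of the poset.

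I do not expect any genuine obstacle here: the argument is entirely routine, relying only on the nonnegativity of values in $\RRI = [0,1]$ and on the pointwise definition of the order. The only point deserving a moment's care is the sequencing — one should confirm that $d_\perp$ actually inhabits $\distrset{\alpone^*}$ before asserting it is \emph{the} bottom element, rather than merely a lower bound living outside the set. Once both checks are in place, uniqueness of the bottom element (if desired) follows automatically from antisymmetry of $\sqsubseteq_{\distrset{\alpone^*}}$, guaranteed by Proposition~\ref{relord4}.
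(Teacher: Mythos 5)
Your proof is correct and is exactly the routine verification the paper leaves implicit (the lemma is stated without proof, evidently regarded as immediate): membership of $d_\perp$ in $\distrset{\alpone^*}$ via $\sum_s 0 = 0 \leq 1$, and minimality via nonnegativity of values in $\RRI$ under the pointwise order. Nothing further is needed.
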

\begin{lemma}\label{LUBC}
Let $b_\perp: \CONF{M} \rightarrow \distrset{\alpone^*}$ be defined by stipulating that
$b_\perp(c)= d_\perp$ for all $c \in \CONF{M}$. Then, $b_\perp$ is the bottom element of 
the poset $(\confeval,\sqsubseteq_{\confeval})$.
\end{lemma}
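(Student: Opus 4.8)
The plan is to verify the two defining conditions of a bottom element: that $b_\perp$ genuinely lies in $\confeval$, and that it sits below every element of the poset. Both reduce, pointwise in the configuration argument, to the corresponding facts about $d_\perp$ already established in Lemma~\ref{LUBD}.

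First I would check membership. For each $c\in\CONF{M}$ the value $b_\perp(c)=d_\perp$ is a string distribution, since $\sum_{s\in\Sigma^*}d_\perp(s)=0\leq 1$. Hence $b_\perp$ is a well-defined map $\f{b_\perp}{\CONF{M}}{\distrset{\alpone^*}}$, so $b_\perp\in\confeval$ by Definition~\ref{confeval}.

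Next, to show $b_\perp$ lies below every element, fix an arbitrary $A\in\confeval$. By Definition~\ref{relord2} it suffices to prove $b_\perp(c)\sqsubseteq_{\distrset{\alpone^*}}A(c)$ for every $c\in\CONF{M}$. Since $b_\perp(c)=d_\perp$ and $A(c)\in\distrset{\alpone^*}$, this is precisely the assertion that $d_\perp$ lies below $A(c)$ in $(\distrset{\alpone^*},\sqsubseteq_{\distrset{\alpone^*}})$, which holds because $d_\perp$ is the bottom element of that poset by Lemma~\ref{LUBD}. Thus $b_\perp\sqsubseteq_{\confeval}A$, and as $A$ was arbitrary, $b_\perp$ is below every element of $\confeval$.

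Combining the two steps, $b_\perp$ is a member of $\confeval$ lying below every element, hence it is the bottom element of $(\confeval,\sqsubseteq_{\confeval})$. There is no genuine obstacle here: the argument is a routine pointwise lifting of Lemma~\ref{LUBD}, and the only care needed is to unfold the definitions of $\sqsubseteq_{\confeval}$ (Definition~\ref{relord2}) and of $\confeval$ (Definition~\ref{confeval}) so that the reduction to the inner poset is transparent.
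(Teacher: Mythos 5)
Your proof is correct and is exactly the routine pointwise verification the paper has in mind; indeed the paper states Lemma~\ref{LUBC} without proof, treating it as immediate from Lemma~\ref{LUBD} and the pointwise definition of $\sqsubseteq_{\confeval}$. Both steps (membership in $\confeval$ and minimality) are handled properly, so nothing is missing.
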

Now, it is time prove that the posets at hand are also $\omega$-complete:
\begin{proposition}\label{wcpodis}
The POSET $(\distrset{\alpone^*}, \sqsubseteq_{\distrset{\alpone^*}})$ is a $\omega$CPO.
\end{proposition}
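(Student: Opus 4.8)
The plan is to establish the one property of an $\omega$CPO not yet in hand: that every ascending chain in $(\distrset{\alpone^*}, \sqsubseteq_{\distrset{\alpone^*}})$ has a least upper bound. We already know from Proposition~\ref{relord4} that the structure is a POSET and from Lemma~\ref{LUBD} that it has a bottom element $d_\perp$, so only the existence of lubs remains. Given an ascending chain $A_0 \sqsubseteq_{\distrset{\alpone^*}} A_1 \sqsubseteq_{\distrset{\alpone^*}} \cdots$, the natural candidate for $\bigsqcup_n A_n$ is the pointwise supremum, namely the function $A$ defined by $A(s) = \sup_{n} A_n(s)$ for every $s \in \alpone^*$.

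First I would check that $A$ is well-defined as a function into $\RRI$. For each fixed $s$, the real sequence $(A_n(s))_n$ is nondecreasing---this is precisely what $A_n \sqsubseteq_{\distrset{\alpone^*}} A_{n+1}$ asserts pointwise---and it is bounded above by $1$; hence it converges and its supremum lies in $[0,1]$, making $A : \alpone^* \to \RRI$ a genuine function. That $A$ is an upper bound of the chain, and the least one, are then routine consequences of the pointwise definition of $\sqsubseteq_{\distrset{\alpone^*}}$: the inequality $A_n(s) \leq A(s)$ gives $A_n \sqsubseteq_{\distrset{\alpone^*}} A$, and if $B$ is any upper bound then $A_n(s) \leq B(s)$ for all $n$ forces $A(s) = \sup_n A_n(s) \leq B(s)$, i.e.\ $A \sqsubseteq_{\distrset{\alpone^*}} B$.

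The step I expect to be the real obstacle is verifying that $A$ is actually a pseudodistribution, i.e.\ that $\sum_{s \in \alpone^*} A(s) \leq 1$; this is where one must commute a supremum past an \emph{infinite} sum, which is not valid in general. The way around it is to pass through finite subsets. For any finite $F \subseteq \alpone^*$, since each sequence $(A_n(s))_n$ is nondecreasing and $F$ is finite, the supremum commutes with the finite sum, giving $\sup_n \sum_{s \in F} A_n(s) = \sum_{s \in F} \sup_n A_n(s) = \sum_{s \in F} A(s)$, while each partial sum satisfies $\sum_{s \in F} A_n(s) \leq \sum_{s \in \alpone^*} A_n(s) \leq 1$. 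Hence $\sum_{s \in F} A(s) \leq 1$ for every finite $F$.

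Finally, since the infinite sum $\sum_{s \in \alpone^*} A(s)$ is by definition the supremum of its finite partial sums, taking the supremum over all finite $F$ yields $\sum_{s \in \alpone^*} A(s) \leq 1$, so $A \in \distrset{\alpone^*}$. Assembling these facts shows $\bigsqcup_n A_n = A$ exists in $\distrset{\alpone^*}$, which together with the POSET and bottom-element properties completes the proof that the structure is an $\omega$CPO.
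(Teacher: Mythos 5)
Your proof is correct and follows essentially the same route as the paper's: define the least upper bound pointwise as the supremum (equivalently, the limit) of the monotone bounded sequences $(A_n(s))_n$, then check the upper-bound and leastness properties pointwise. The only difference is that you explicitly verify $\sum_{s}A(s)\leq 1$ via finite partial sums, a detail the paper's proof leaves implicit, so your argument is if anything slightly more complete.
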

\begin{proof}
We need to prove that  for each chain 
$$
c_1\sqsubseteq_{\distrset{\alpone^*}}c_2\sqsubseteq_{\distrset{\alpone^*}}c_3  \ldots
$$  
the least upper bound $\bigsqcup_{i} c_i$ exists. First note that since 
$\sum_{s\in\Sigma^*} c_i(s) \leq 1$, from definition of $ \sqsubseteq_{\distrset{\alpone^*}}$ it follows that, 
for each $s\in\Sigma^*$,  $c_1(s) \leq c_2(s)\leq \ldots \leq 1$ holds.  This implies that, 
for each $s\in\Sigma^*$, the limit $\lim_{i\rightarrow \infty } c_i(s)$ exists. Hence, defining 
$c_{\mathit{LIM}}$ as the distribution such that $c_{\mathit{LIM}}(s) = \lim_{i\rightarrow \infty } c_i(s)$,
we have that $c_{\mathit{LIM}}=\bigsqcup_{i} c_i$. Indeed, $c_{\mathit{LIM}}\sqsupseteq_{\distrset{\alpone^*}}c_i$,
and any upper bounds of the family $\{c_i\}_{i\in\NN}$ is clearly greater or equal to $c_{\mathit{LIM}}$.
\end{proof}

\begin{proposition}\label{wcpocon}
The POSET $(\confeval,  \sqsubseteq_{\confeval})$ is a $\omega$CPO.
\end{proposition}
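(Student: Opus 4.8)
The plan is to exploit the fact, already established in Proposition~\ref{wcpodis}, that the target poset $(\distrset{\alpone^*}, \sqsubseteq_{\distrset{\alpone^*}})$ is an $\omega$CPO, and to show that $\confeval$, being a poset of functions into $\distrset{\alpone^*}$ ordered pointwise, inherits $\omega$-completeness by computing least upper bounds pointwise. Since the bottom element of $(\confeval, \sqsubseteq_{\confeval})$ is already supplied by Lemma~\ref{LUBC}, it remains only to exhibit, for every ascending chain in $\confeval$, a least upper bound.

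First I would fix an ascending chain $A_1 \sqsubseteq_{\confeval} A_2 \sqsubseteq_{\confeval} \cdots$ in $\confeval$. By the definition of $\sqsubseteq_{\confeval}$ (Definition~\ref{relord2}), for each configuration $c \in \CONF{M}$ the evaluated sequence $A_1(c) \sqsubseteq_{\distrset{\alpone^*}} A_2(c) \sqsubseteq_{\distrset{\alpone^*}} \cdots$ is an ascending chain in $\distrset{\alpone^*}$. By Proposition~\ref{wcpodis} this chain has a least upper bound $\bigsqcup_i A_i(c)$ lying in $\distrset{\alpone^*}$. I would then define the candidate $\f{A_{\mathit{LIM}}}{\CONF{M}}{\distrset{\alpone^*}}$ by setting $A_{\mathit{LIM}}(c) = \bigsqcup_i A_i(c)$ for each $c$; this is a well-defined element of $\confeval$ precisely because each pointwise least upper bound exists in $\distrset{\alpone^*}$, and the elements of $\distrset{\alpone^*}$ are exactly the admissible (pseudo)distributions.

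Next I would verify the two defining properties of a least upper bound, both of which reduce to the corresponding pointwise facts in $\distrset{\alpone^*}$. For the upper-bound property, fix $i$; since $A_i(c) \sqsubseteq_{\distrset{\alpone^*}} \bigsqcup_j A_j(c) = A_{\mathit{LIM}}(c)$ holds for every $c$, Definition~\ref{relord2} gives $A_i \sqsubseteq_{\confeval} A_{\mathit{LIM}}$. For leastness, let $B \in \confeval$ be any upper bound of the chain; then for each $c$ the distribution $B(c)$ is an upper bound of $\{A_i(c)\}_{i}$ in $\distrset{\alpone^*}$, so $A_{\mathit{LIM}}(c) = \bigsqcup_i A_i(c) \sqsubseteq_{\distrset{\alpone^*}} B(c)$; quantifying over all $c$ yields $A_{\mathit{LIM}} \sqsubseteq_{\confeval} B$. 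Hence $A_{\mathit{LIM}} = \bigsqcup_i A_i$, and $(\confeval, \sqsubseteq_{\confeval})$ is an $\omega$CPO.

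The argument is essentially routine \emph{pointwise lifting}, and there is no genuine obstacle: all the analytic content — the existence of limits of bounded monotone real sequences — was already discharged in the proof of Proposition~\ref{wcpodis}. The only point requiring a moment's care is confirming that the pointwise-defined $A_{\mathit{LIM}}$ really lands in $\confeval$, that is, that each value $A_{\mathit{LIM}}(c)$ is a genuine pseudodistribution summing to at most $1$; but this is immediate, since that value is by construction the least upper bound computed \emph{inside} $\distrset{\alpone^*}$.
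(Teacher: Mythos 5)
Your proof is correct and matches the paper's intent: the paper's own proof of this proposition is literally ``Analogous to the previous one,'' i.e., the same pointwise-lifting argument, which you have simply written out in full by invoking Proposition~\ref{wcpodis} configuration by configuration. Nothing is missing, and the one point of care you flag (that the pointwise supremum lands in $\confeval$) is handled exactly as it should be.
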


\begin{proof}
Analogous to the previous one.
\end{proof}
}

We can now define a functional $F_M$ on $\confeval$ which will be used to define the function computed by $M$ 
via a fixpoint construction. Intuitively, the application of the functional $F_M$ describes \emph{one} computation step. 
Formally:
\begin{definition}\label{def:functionalPTM}
Given a PTM $\ptmone$, we define a functional $\f{F_M}{\confeval}{\confeval} $ as:
$$
F_M(f)(C)= \left\{ \begin{array}{ll} 
      \{\strone^1\}& \mbox{ if } C\in\FS{M}{\strone};\\
      \frac{1}{2}f(\delta_0(C))+\frac{1}{2}f(\delta_1(C)) & \mbox{ otherwise}.
    \end{array} \right.
$$
\end{definition}
\condinc{
One can show that the functional $F_M$ from Definition \ref{def:functionalPTM} is continuous on $\confeval$. 
A classic fixpoint theorem ensures that $F_M$ has a least fixpoint. 
Such a least fixpoint is, once composed with a function returning $\INIT{M}{\strone}$ from
$\strone$, the \emph{function computed by the machine $M$}, which is denoted as 
$\FUNC{M}:\alpone^*\rightarrow\distrset{\alpone^*}$. The set of those functions which
can be computed by any PTM is denoted as $\PrC$, while $\PrPC$ is the set of probabilistic
functions which can be computed by a PTM working in \emph{polynomial} time.
}
{
The following proposition is needed in order to apply the usual fix point result.
\begin{proposition}\label{cont}
The functional $F_M$ is continuos.
\end{proposition}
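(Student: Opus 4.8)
The plan is to verify that $F_M$ preserves least upper bounds of $\omega$-chains, which is exactly continuity on the $\omega$CPO $(\confeval, \sqsubseteq_{\confeval})$. First I would check that $F_M$ is \emph{monotone}: this guarantees that whenever $\{f_i\}_i$ is an ascending chain, so is $\{F_M(f_i)\}_i$, making the least upper bound $\bigsqcup_i F_M(f_i)$ well-defined. Monotonicity is verified configuration by configuration. Suppose $f \sqsubseteq_{\confeval} g$. If $C \in \FS{M}{\strone}$ is final, then $F_M(f)(C) = \{\strone^1\} = F_M(g)(C)$. If $C$ is not final, then since $\sqsubseteq_{\distrset{\alpone^*}}$ is the pointwise order and both halving and addition are monotone on $\RRI$, we get
$$F_M(f)(C) = \tfrac{1}{2}f(\delta_0(C)) + \tfrac{1}{2}f(\delta_1(C)) \sqsubseteq_{\distrset{\alpone^*}} \tfrac{1}{2}g(\delta_0(C)) + \tfrac{1}{2}g(\delta_1(C)) = F_M(g)(C).$$

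The core of the argument is to show, for every chain $f_1 \sqsubseteq_{\confeval} f_2 \sqsubseteq_{\confeval} \cdots$ with least upper bound $f = \bigsqcup_i f_i$, that $F_M(f) = \bigsqcup_i F_M(f_i)$. I would prove this by evaluating both functions at an arbitrary configuration $C$ and then at an arbitrary string $\strtwo \in \alpone^*$. The key fact, which can be read off from the proof of Proposition \ref{wcpodis}, is that least upper bounds in these $\omega$CPOs are computed \emph{pointwise as limits}; thus $f(\delta_j(C))(\strtwo) = \lim_{i\to\infty} f_i(\delta_j(C))(\strtwo)$ for $j \in \{0,1\}$, and $\big(\bigsqcup_i F_M(f_i)\big)(C)(\strtwo) = \lim_{i\to\infty} F_M(f_i)(C)(\strtwo)$.

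For a final configuration $C \in \FS{M}{\strone}$ we have $F_M(f_i)(C) = \{\strone^1\}$ for every $i$, so both sides evaluate to the same value $\{\strone^1\}(\strtwo)$ and the equality is immediate. For a non-final $C$ the equality reduces to
$$\tfrac{1}{2}\lim_{i\to\infty} f_i(\delta_0(C))(\strtwo) + \tfrac{1}{2}\lim_{i\to\infty} f_i(\delta_1(C))(\strtwo) = \lim_{i\to\infty}\Big(\tfrac{1}{2} f_i(\delta_0(C))(\strtwo) + \tfrac{1}{2} f_i(\delta_1(C))(\strtwo)\Big),$$
which holds because each of the two sequences is monotone and bounded above by $1$, hence convergent, and the limit of a \emph{finite} sum of convergent sequences is the sum of their limits.

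I do not expect a deep obstacle: this is a routine continuity verification. The only point that genuinely needs the structure of the problem is the interchange of the defining linear combination with the suprema, and this works precisely because $F_M(f)(C)$ depends on $f$ through only the \emph{two} values $f(\delta_0(C))$ and $f(\delta_1(C))$, so the sum is finite and commutes with the monotone limits. Had the machine exhibited infinite branching, one would instead have to invoke the monotone convergence theorem to exchange an infinite sum with the supremum.
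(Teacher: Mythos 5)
Your proposal is correct and follows essentially the same route as the paper's proof: a case split on final versus non-final configurations, with the crux being that the binary convex combination $\frac{1}{2}f(\delta_0(C))+\frac{1}{2}f(\delta_1(C))$ commutes with suprema of chains because it is a \emph{finite} sum of monotone bounded families. Your explicit monotonicity check and the unfolding of suprema into pointwise limits are just slightly more detailed renderings of steps the paper leaves implicit.
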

\begin{proof}
We prove that
$$
F_M (\bigsqcup_{i\in\NN} f_i) =\bigsqcup_{i\in\NN} (F_M (f_i)),
$$
or, saying another way, that for every configuration $C$,
$$
F_M (\bigsqcup_{i\in\NN} f_i)(C) =\bigsqcup_{i\in\NN} (F_M (f_i))(C).
$$
Now, notice that for every $C$,
$$
F_M(\bigsqcup_{i\in\NN} f_i)(C) =  
\left\{ 
\begin{array}{ll} 
    \{\strone^1\} & \mbox{if $C \in \FS{M}{\strone}$} \\
    \frac{1}{2}((\bigsqcup_{i\in\NN} f_i)(C_1))+\frac{1}{2}((\bigsqcup_{i\in\NN} f_i)(C_2)) & \mbox{if  $C \rightarrow C_1, C_2$} \\
\end{array} \right.
$$
and, similarly, that:
$$
\bigsqcup_{i\in\NN} (F_M (f_i))(C) = \bigsqcup_{i\in\NN} 
\left\{ \begin{array}{ll} 
    \{\strone^1\}& \mbox{if $C\in\FS{M}{\strone}$}\\
    \frac{1}{2}f_i(C_1)+\frac{1}{2}f_i(C_2) & \mbox{if $C \rightarrow C_1,C_2$} \\
\end{array} \right.
$$
Now, given any $C$, we distinguish two cases:
\begin{varitemize}
\item
  If $C\in\FS{M}{\strone}$, then
  $$
  F_M(\bigsqcup_{i\in\NN} f_i)(C)=\{s^1\}=\bigsqcup_{i\in\NN}\{s^1\}=\bigsqcup_{i\in\NN} (F_M (f_i))(C).
  $$
\item
  If $C \rightarrow C_1,C_2$, then
  \begin{align*}
    F_M(\bigsqcup_{i\in\NN} f_i)(C)&=\frac{1}{2}((\bigsqcup_{i\in\NN} f_i)(C_1))+\frac{1}{2}((\bigsqcup_{i\in\NN} f_i)(C_2))\\
    &=\frac{1}{2}(\bigsqcup_{i\in\NN} f_i(C_1))+\frac{1}{2}(\bigsqcup_{i\in\NN} f_i(C_2))\\
    &=\bigsqcup_{i\in\NN}\frac{1}{2}f_i(C_1)+\bigsqcup_{i\in\NN}\frac{1}{2}f_i(C_2)=\bigsqcup_{i\in\NN}(\frac{1}{2}f_i(C_1)+\frac{1}{2}f_i(C_2))\\
    &=\bigsqcup_{i\in\NN} (F_M (f_i))(C).
  \end{align*}
\end{varitemize}
This concludes the proof.
\end{proof}

\begin{theorem}\label{mon}
The functional defined in \ref{def:functionalPTM} has a least fix point which 
is equal to $\bigsqcup_{n\geq0} F_M^n (b_\perp)$.
\end{theorem}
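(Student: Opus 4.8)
The plan is to invoke the Kleene least fixpoint theorem, all of whose hypotheses have already been secured: $(\confeval, \sqsubseteq_{\confeval})$ is an $\omega$CPO (Proposition~\ref{wcpocon}), it has a bottom element $b_\perp$ (Lemma~\ref{LUBC}), and $F_M$ is continuous (Proposition~\ref{cont}). The only point requiring a separate remark is monotonicity, which I would read off directly from Definition~\ref{def:functionalPTM}: if $f \sqsubseteq_{\confeval} g$, then for every configuration $C$ either $C$ is final and $F_M(f)(C) = \{\strone^1\} = F_M(g)(C)$, or else $F_M(f)(C) = \frac{1}{2} f(\delta_0(C)) + \frac{1}{2} f(\delta_1(C))$ is pointwise below $\frac{1}{2} g(\delta_0(C)) + \frac{1}{2} g(\delta_1(C)) = F_M(g)(C)$; hence $F_M(f) \sqsubseteq_{\confeval} F_M(g)$.

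With monotonicity available, since $b_\perp$ is the least element we have $b_\perp \sqsubseteq_{\confeval} F_M(b_\perp)$, and applying $F_M$ repeatedly yields the ascending chain
$$
b_\perp \sqsubseteq_{\confeval} F_M(b_\perp) \sqsubseteq_{\confeval} F_M^2(b_\perp) \sqsubseteq_{\confeval} \cdots .
$$
Because $\confeval$ is an $\omega$CPO, its least upper bound $g^{\ast} := \bigsqcup_{n\geq 0} F_M^n(b_\perp)$ exists. I would then verify that $g^{\ast}$ is a fixpoint by invoking continuity on this very chain:
$$
F_M(g^{\ast}) = F_M\Big( \bigsqcup_{n\geq 0} F_M^n(b_\perp) \Big) = \bigsqcup_{n\geq 0} F_M^{n+1}(b_\perp) = \bigsqcup_{n\geq 1} F_M^n(b_\perp).
$$
Since $b_\perp = F_M^0(b_\perp)$ is the least element of the chain, prepending it does not change the least upper bound, so the last expression equals $\bigsqcup_{n\geq 0} F_M^n(b_\perp) = g^{\ast}$, i.e. $F_M(g^{\ast}) = g^{\ast}$.

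It remains to establish minimality. Let $h$ be an arbitrary fixpoint, $F_M(h) = h$. By induction on $n$ I would show $F_M^n(b_\perp) \sqsubseteq_{\confeval} h$: the base case is $b_\perp \sqsubseteq_{\confeval} h$ because $b_\perp$ is the bottom, and the inductive step uses monotonicity together with the fixpoint equation, $F_M^{n+1}(b_\perp) = F_M(F_M^n(b_\perp)) \sqsubseteq_{\confeval} F_M(h) = h$. Thus $h$ is an upper bound of the chain, and by definition of least upper bound $g^{\ast} \sqsubseteq_{\confeval} h$; hence $g^{\ast}$ is the least fixpoint, as claimed. I do not anticipate any real obstacle, since the structural work is done in the preceding propositions; the only care needed is to note monotonicity explicitly (it is used both to form the chain and in the minimality argument) and to handle the harmless re-indexing $\bigsqcup_{n\geq 1} = \bigsqcup_{n\geq 0}$ that makes $g^{\ast}$ a fixpoint.
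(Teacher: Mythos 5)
Your proposal is correct and follows the same route as the paper, which simply invokes the well-known fixpoint theorem for continuous maps on an $\omega$CPO (relying on Proposition~\ref{wcpocon}, Lemma~\ref{LUBC}, and Proposition~\ref{cont}); you have merely unfolded the standard proof of that theorem, including the monotonicity check, the continuity step, and the minimality induction.
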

\begin{proof}
Immediate from the well-known fix point theorem for continuous maps on a $\omega$CPO.
\end{proof}

Such a least fixpoint is, once composed with a function returning $\INIT{M}{\strone}$ from
$\strone$, the \emph{function computed by the machine $M$}, which is denoted as 
$\FUNC{M}:\alpone^*\rightarrow\distrset{\alpone^*}$. The set of those functions which
can be computed by any PTMs is denoted as $\PrC$.

}
The notion of a computable probabilistic function subsumes other key notions in probabilistic and real-number computation. 
As an example, \emph{computable distributions} can be characterized as those distributions
on $\alpone^*$ which can be obtained as the result of a function in $\PrC$ on a \emph{fixed} input. Analogously,
\emph{computable real numbers} from the unit interval $[0,1]$ can be seen as those elements of $\RR$ in the form $f(0)(0)$
for a computable function $f\in\PrC$.
\subsection{Probabilistic Recursive Functions equals Functions computed by Probabilistic Turing Machines}
In this section we prove that probabilistic \emph{recursive} functions
are the same as probabilistic \emph{computable} functions, modulo an
appropriate bijection between strings and natural numbers which we
denote (as its inverse) with $\overline{(\cdot)}$.

In order to prove the equivalence result we first need to show that a
probabilistic recursive function can be computed by a PTM. This result
is not difficult and, analogously to the deterministic case, is proved
by exhibiting PTMs which simulate the basic probabilistic recursive
functions and by showing that $\PrC$ is closed by composition,
primitive recursion, and minimization.  \condinc {We omit the details,
  which can be found in~\cite{EV}.}  {This is done by the following
  Lemmata.

\begin{lemma}[Basic Functions are Computable]\label{Basic}
All Basic Probabilistic Functions are Computable.
\end{lemma}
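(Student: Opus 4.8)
The plan is to prove the lemma by treating each of the four basic probabilistic functions of Definition~\ref{basprobfun} separately, in each case exhibiting a concrete PTM $M$ and checking that its computed function $\FUNC{M}$ coincides with the given BPF under the bijection $\overline{(\cdot)}$ between strings and naturals. Recall that ``computable'' means precisely this: a probabilistic function $\phi$ on $\NN^k$ is in $\PrC$ when there is a PTM $M$ with $\FUNC{M}(\overline{\V{x}})(\overline{y})=\phi(\V{x})(y)$ for all arguments. By Theorem~\ref{mon}, $\FUNC{M}$ is obtained from the least fixpoint $\bigsqcup_{n\geq 0}F_M^n(b_\perp)$ of the functional $F_M$ evaluated at $\INIT{M}{\strone}$, so in each case I must verify that this least fixpoint, read at the initial configuration, produces the claimed pseudodistribution over final strings.

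For the zero, successor, and projection functions I would build machines that never make an essential use of randomness: I set $\delta_0=\delta_1$, so that at every step the two branches of $F_M$ agree and the machine behaves deterministically. Each such $M$ is the standard Turing machine from classical recursion theory adapted to the string encoding: for $z$ the machine erases its input and writes $\overline{0}$; for $s$ it writes $\overline{n+1}$; for $\Pi^n_m$ it locates the $m$-th component of the encoded tuple and leaves only $\overline{k_m}$ on the tape. Each halts after finitely many steps in a single final configuration of the form $\conf{\strone}{\_}{\varepsilon}{q_f}$, so the computation tree has one leaf of weight $1$ and $\FUNC{M}$ is the point-mass distribution assigning probability $1$ to the intended output, exactly matching the BPF.

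The only genuinely probabilistic case is the fair coin $r$. Here the two transition functions must diverge: the machine first scans to the end of its input $\overline{x}$ deterministically, and then performs a single branching step in which $\delta_0$ leaves the encoding of $x$ in place and enters a final state, while $\delta_1$ increments the encoded value to $\overline{x+1}$ and then enters a final state. The resulting computation tree has exactly two leaves, each reached with probability $\tfrac{1}{2}$, ending in final configurations encoding $x$ and $x+1$ respectively; hence $\FUNC{M}(\overline{x})=\{x^{1/2},(x+1)^{1/2}\}$, which is precisely $r(x)$.

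The main obstacle is bookkeeping rather than conceptual: I must match the machine's output convention (the output string read off the configuration $\conf{\strone}{\_}{\varepsilon}{q_f}$) with the target value under $\overline{(\cdot)}$, and handle the tuple encoding for $\Pi^n_m$, where a single input string codes an element of $\NN^n$ with separators that the machine must parse. One clean point worth making explicit is that, because every machine above halts within a number of steps bounded independently of the coin flips, the least fixpoint is actually reached at a finite stage $F_M^n(b_\perp)$; thus the computed distribution is literally the finite sum of $2^{-k}$ over the halting computation paths of length $k$, and no genuine limiting argument (nor any appeal to $\omega$-completeness beyond Theorem~\ref{mon}) is needed to read off the values.
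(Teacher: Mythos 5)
Your proposal is correct and follows essentially the same route as the paper's proof: exhibit a PTM for each basic function, observing that $z$, $s$, and $\Pi^n_m$ are deterministic and hence computed by ordinary Turing machines viewed as PTMs, while the coin function is realized by a single probabilistic branching step yielding two equally weighted leaves. If anything you are slightly more faithful to Definition~\ref{basprobfun}, since you build a machine for $r$ itself (outputting $x$ or $x+1$) whereas the paper's construction describes the machine for $\rand$ (erasing the input and writing $\overline{0}$ or $\overline{1}$), and your remark that the least fixpoint is attained at a finite stage is a harmless extra justification the paper leaves implicit.
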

 \begin{proof}
For every basic function from Definition~\ref{basprobfun}, we can construct a Probabilistic Turing Machine 
that computes it quite easily. More specifically, the proof is immediate for functions, $z, s, \Pi$, by 
observing that they are deterministic, thus the usual Turing machine for them (seen as a PTM). 
As for the function $\rand$ it can be simulated by a PTM $\ptmone$ which operates as follows:
\begin{varenumerate}
\item 
  $\ptmone$ deletes all the input written on the tape;
\item 
  $\ptmone$ writes $\overline{1}$ or $\overline{0}$ on the tape, both
  with probability $1/2$, and then halts.
\end{varenumerate}
This concludes the proof.
 \end{proof}
The composition of two computable probabilistic functions is itself computable:
\begin{lemma}[Generalized Composition and Computability] \label{Compo}
Given Turing-Computable $\f{f}{\NN^n}{\distrset{\NN}}$, 
and $\f{g_1}{\NN^k}{\distrset{\NN}},\ldots,\f{g_n}{\NN^k}{\distrset{\NN}}$  
the function $\f{f \odot (g_1,\ldots,g_n)}{\NN^k}{\distrset{\NN}}$ is
itself Turing-Computable
\end{lemma}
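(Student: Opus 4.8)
The plan is to construct, from PTMs computing the arguments, a single PTM that performs the computations sequentially, and then to check that its output distribution matches the defining formula of $\odot$. Let $\ptmone_f$ compute $f$ and $\ptmone_{g_1},\ldots,\ptmone_{g_n}$ compute $g_1,\ldots,g_n$, where all identifications between strings and naturals (and the encoding of tuples as single strings) are via the bijection $\overline{(\cdot)}$. I would build a PTM $\ptmone$ that, on input $\V{x}$, first simulates $\ptmone_{g_1}$ on $\V{x}$ and records its output $z_1$ on a reserved part of the tape, then simulates $\ptmone_{g_2}$ on a fresh copy of $\V{x}$ recording $z_2$, and so on through $\ptmone_{g_n}$; finally $\ptmone$ simulates $\ptmone_f$ on the tuple $(z_1,\ldots,z_n)$ and halts with whatever $\ptmone_f$ outputs. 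The goal is then to show $\FUNC{\ptmone}=f\odot(g_1,\ldots,g_n)$.

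The tape bookkeeping --- keeping $\V{x}$ available across the $n$ simulations and accumulating $z_1,\ldots,z_n$ before invoking $\ptmone_f$ --- is routine, so I would only sketch it: use a fixed-track single-tape encoding (or a multi-tape machine) with one track for the preserved input and one for the growing list of results, and assume without loss of generality that each sub-machine, on halting, leaves its output in a canonical position with the head reset. None of this interferes with the random choices: during deterministic bookkeeping steps one simply sets $\transone_0=\transone_1$, so that step passes the distribution through unchanged, and during phase $i$ the transitions of $\ptmone$ are relabelled copies of those of $\ptmone_{g_i}$, invoked on disjoint tape regions at disjoint times.

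The substance of the proof is the probabilistic correctness statement, which I would argue from the path characterization of $\FUNC{\ptmone}$ implicit in Definition~\ref{def:functionalPTM} and Theorem~\ref{mon}: the probability that a PTM outputs $u$ on input $w$ is the sum, over all halting computation paths from $\INIT{\ptmone}{w}$ to a final configuration carrying output $u$, of $2^{-\ell}$, where $\ell$ is the length of the path (one fair coin is tossed per step). In $\ptmone$ every halting path factors uniquely as a path of $\ptmone_{g_1}$ producing some $z_1$, followed by a path of $\ptmone_{g_2}$ producing some $z_2$, and so on, followed by a path of $\ptmone_f$ on $(z_1,\ldots,z_n)$ producing $y$. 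Since the lengths of the segments add, their probabilities multiply; summing $2^{-\ell}$ over all paths and grouping them by the intermediate values $z_1,\ldots,z_n$ then yields
$$
\sum_{z_1,\ldots,z_n}\left(\prod_{1\leq i\leq n} g_i(\V{x})(z_i)\right)\cdot f(z_1,\ldots,z_n)(y)=\alcomn{f}{(g_1,\ldots,g_n)}{\V{x}}{y},
$$
which is exactly Definition~\ref{GComp}. Divergence is handled for free: a path on which some $\ptmone_{g_i}$ fails to halt never reaches $\ptmone_f$ and contributes nothing, matching the fact that $f$ and the $g_i$ are pseudodistributions possibly summing to less than $1$.

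I expect the main obstacle to be making this factorization rigorous, rather than the machine construction itself. Concretely, one must justify that the probability of a concatenated path equals the product of the probabilities of its segments, and that the (in general infinite) sum over the intermediate values $z_1,\ldots,z_n$ may be interchanged with the sum over computation paths. Both are legitimate because every quantity involved is a non-negative real bounded by $1$, so the rearrangement is an instance of summation of non-negative series and the least upper bounds $\bigsqcup_{m\geq 0}F_M^m(b_\perp)$ defining $\FUNC{\ptmone}$, $\FUNC{\ptmone_f}$ and each $\FUNC{\ptmone_{g_i}}$ commute with it. Carrying out this bookkeeping carefully is the only delicate point.
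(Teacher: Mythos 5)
Your proposal follows essentially the same route as the paper's own proof: build a composite PTM that simulates $\ptmone_{g_1},\ldots,\ptmone_{g_n}$ on preserved copies of the input and then runs $\ptmone_f$ on the collected results. The paper leaves the argument at the level of the machine construction (an explicitly informal $(n+2)$-tape description), whereas you additionally verify the distributional identity by factoring halting paths and rearranging non-negative sums; this extra step is correct and only makes the same argument more rigorous, not different.
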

\begin{proof}
We give an informal proof. We define PTM, said $\ptmone$ working on $n+2$ tapes.
(We know that PTMs with $m>1$ tapes compute the same class of functions of PTMs with a single tape.) 
The first tape is the input tape, on the next $n+1$ tapes $\ptmone$ computes  
$g_1,\cdots, g_n$, while on the last tape, $\ptmone$ computes the function $f$ on the results of 
$g_1,\cdots, g_n$. 
The machine $\ptmone$ operates as follows:
\begin{varenumerate}
\item 
  it copies the input from the first to the next $n$ tapes; 
\item 
  in the $i+1$-th tape, the machine $\ptmone$ computes the respective function 
  $g_i$, where $1\le i \le n$; this can of course be done, because, by induction,
  the $g_i$ are computable;
\item  
  it copies the $n$ outputs in the $n$ tapes numbered $2,\ldots,n+1$ to the last tape;
\item 
  computes the function $f$ on the last tape and return the result $z$.
\end{varenumerate}
This concludes the proof.
\end{proof}

\begin{lemma}[Primitive Recursion and Turing-Computability] \label{Recur}
Given Turing-Computable $\f{g}{\NN^{k+2}} {\distrset{\NN}}$ and $\f{f}{\NN^k}{\distrset{\NN}}$
the function $\f{\primrec{f}{g}}{\NN^{k+1}}{\distrset{\NN}}$ is itself Turing-Computable.
\end{lemma}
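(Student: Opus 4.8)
The plan is to mimic the multi-tape construction used for Lemma~\ref{Compo}, building a PTM $\ptmone$ that computes $\primrec{f}{g}$ by \emph{iterating} the machine for $g$. Since $f$ and $g$ are Turing-computable, fix PTMs $M_f$ and $M_g$ computing them; these are invoked as subroutines. I would have $\ptmone$ use a handful of tapes: the input tape holding $(\V{x},y)$, a tape holding a counter $i$, a tape holding the current ``accumulated'' value $z$, plus scratch tapes on which $M_f$ and $M_g$ run. As in Lemma~\ref{Compo}, I rely on the fact that multi-tape PTMs compute the same class of functions as single-tape ones.

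Concretely, on input $(\V{x},y)$ the machine $\ptmone$ would operate as follows:
\begin{varenumerate}
\item
  it simulates $M_f$ on $\V{x}$, stores the resulting value as the current accumulator $z$, and sets the counter $i$ to $0$;
\item
  while $i<y$, it simulates $M_g$ on $(\V{x},i,z)$, overwrites $z$ with the value produced, and increments $i$;
\item
  once $i=y$, it halts and returns the accumulator $z$.
\end{varenumerate}
The crucial point is that every invocation of a subroutine consumes \emph{fresh} coin flips, so the random choices made in distinct iterations are mutually independent.

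For correctness I would argue by induction on $i$ that the distribution of the accumulator $z$ after step~1 followed by $i$ executions of the loop body equals $h(\V{x},i)$, where $h=\primrec{f}{g}$. The base case $i=0$ is immediate, since $z$ is then distributed exactly as $f(\V{x})=h(\V{x},0)$. For the inductive step, conditioning on the accumulator holding some $w'$ after $i$ iterations (which by hypothesis happens with probability $h(\V{x},i)(w')$) and then running $M_g$ on $(\V{x},i,w')$ to obtain $w$ with probability $g(\V{x},i,w')(w)$, independence gives that the accumulator equals $w$ after $i+1$ iterations with probability $\sum_{w'} h(\V{x},i)(w')\cdot g(\V{x},i,w')(w)$. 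Taking $i=y$ then yields the output distribution $h(\V{x},y)$.

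The main obstacle, and the place where care is really needed, is that final identity: one must check that the sum-over-products produced by independent sequential sampling coincides with the definition of primitive recursion. This is a genuine content point rather than bookkeeping, because $h(\V{x},i)$ is a \emph{distribution}, not a number, so the recurrence $h(\V{x},i+1)=g(\V{x},i,h(\V{x},i))$ must be read through the composition convention of Definition~\ref{GComp}, i.e.\ as $h(\V{x},i+1)(w)=\sum_{w'} h(\V{x},i)(w')\cdot g(\V{x},i,w')(w)$. Once this identification is made explicit, the inductive step matches the machine's behavior exactly, and the proof goes through.
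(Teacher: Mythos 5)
Your proposal is correct and follows essentially the same route as the paper: a multi-tape PTM that runs the machine for $f$ once to initialize an accumulator and then iterates the machine for $g$ under the control of a counter, relying on the equivalence of multi-tape and single-tape PTMs. The paper's own proof is purely the informal machine description (it even counts down rather than up, and does not spell out the correctness argument), so your explicit induction showing that the accumulator's distribution after $i$ iterations is $h(\V{x},i)$ --- with the recurrence read through the composition convention of Definition~\ref{GComp} --- is a welcome tightening rather than a divergence.
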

\begin{proof}
We give an informal proof. We define PTM, said $\ptmone$ working on $5$ tapes.
The first tape is the input tape, on the next  tape $\ptmone$ computes the count down of our $k+1^{th}$ variable, on the third tape $\ptmone$ computes $g$, on the fourth tape $\ptmone$ computes the function $f$, and in the last saves the result.
The machine $\ptmone$ operates as follows:
\begin{varenumerate}
\item 
  it copies in the second tape the $k+1^{th}$ element of the input, and then it copies on the fourth tape the first $k$ elements of the input; 
\item it computes $f$ and saves the result on the last tape;
\item it verifies if the second tape is $0$. In this case  $\ptmone$ stops and the last tape contains the result, otherwise it copies the first $k$ elements of the input from the first tape in the third tape and then it copies the result present in the last tape on the third tape;  
\item $\ptmone$ decrements of the value on the second tape;
\item it computes $g$ on the third tape and save the result on the last tape;
\item it returns to the step 3.
\end{varenumerate}
This concludes the proof.

\end{proof}

\begin{lemma}[Minimization and Turing-Computability]\label{Minimiz}
Given Turing-Computable $\f{f}{\NN^{k+1}}{\distrset{\NN}}$, the function
$\f{\mu \; f} {\NN^k} {\distrset{\NN}}$ is itself Turing-Computable.
\end{lemma}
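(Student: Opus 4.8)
The plan is to build a PTM $\ptmone_{\mu f}$ that repeatedly simulates the machine $\ptmone_f$ computing $f$, drawing a \emph{fresh} block of coin flips at each stage, and then to check that the function it computes via the least-fixpoint construction of Definition~\ref{def:functionalPTM} coincides with $\mu f$. Concretely, on input $\V{x}$ the machine keeps a counter $y$ (initialized to $0$) on a dedicated tape and loops as follows: it simulates $\ptmone_f$ on input $(\V{x},y)$ on its work tapes; when that simulation halts with some output $k$, it tests whether $k=0$; if so it erases the work tapes, writes $\overline{y}$ and halts; otherwise it increments $y$ and repeats. Since $\ptmone_f$ is Turing-computable by hypothesis, each simulation is realizable, and the multitape-to-single-tape reduction already invoked in Lemmata~\ref{Compo} and~\ref{Recur} turns this into an ordinary PTM.

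The core step is to fix the coin flips and identify the paths along which $\ptmone_{\mu f}$ halts with output $\overline{y}$. By construction this happens exactly when the stage-$z$ simulation of $\ptmone_f$ returns a \emph{positive} value for each $z = 0,1,\ldots,y-1$ and returns $0$ at stage $z=y$. Because distinct stages consume disjoint blocks of the overall coin-flip sequence, these $y+1$ events are mutually independent; the probability that the stage-$z$ simulation returns a positive value is $\sum_{k>0} f(\V{x},z)(k)$, and the probability that the stage-$y$ simulation returns $0$ is $f(\V{x},y)(0)$. Multiplying, the probability that $\ptmone_{\mu f}$ outputs $\overline{y}$ is $f(\V{x},y)(0)\cdot\prod_{z<y}\big(\sum_{k>0} f(\V{x},z)(k)\big)$, which is precisely $\mu f(\V{x})(y)$ by Definition~\ref{fmin}. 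Aggregating the probabilities of all halting paths sharing a given output is exactly what the fixpoint semantics of $F_{\ptmone_{\mu f}}$ does, so I would conclude $\FUNC{\ptmone_{\mu f}} = \mu f$.

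The step needing the most care is the bookkeeping of \emph{divergence}, which must be shown to be consistent with the pseudodistribution semantics. There are two sources of nonhalting: the simulation of $\ptmone_f$ at some stage may itself diverge (with probability $1-\sum f(\V{x},z)$, since $f(\V{x},z)$ need not sum to $1$), and the outer loop may run forever because every stage keeps returning a positive value. In both cases $\ptmone_{\mu f}$ simply fails to halt, so the corresponding probability mass is never recorded in the output pseudodistribution --- exactly as intended, and consistent with $\mu f(\V{x})$ summing to at most $1$. I would make this precise by observing that the fixpoint value assigned to output $\overline{y}$ only collects the halting paths analyzed above, so no extra mass is created and the total stays bounded by $1$; the remaining verification, namely that re-running $\ptmone_f$ with fresh randomness genuinely yields independent stages, is routine given the disjointness of the flip blocks.
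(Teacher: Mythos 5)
Your proposal is correct and follows essentially the same route as the paper's own (much terser, informal) proof: an outer loop that maintains a counter $y$, re-simulates the machine for $f$ on $(\V{x},y)$ with fresh randomness, halts outputting $y$ on result $0$, and otherwise increments $y$. Your additional verification that the independence of the per-stage coin-flip blocks yields exactly the product $f(\V{x},y)(0)\cdot\prod_{z<y}\bigl(\sum_{k>0}f(\V{x},z)(k)\bigr)$ of Definition~\ref{fmin}, and that divergence mass is handled consistently with the pseudodistribution semantics, is a welcome sharpening of what the paper leaves implicit.
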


\begin{proof}
We give an intuitive proof.
We take a PTM, said $\ptmone$ with $4$ tapes.
The first tape is the input tape, on the next tape $\ptmone$ saves one element that we name $y$, on the third tape it computes the function $f$ and in the last tape it saves the result.
The machine $\ptmone$ operates as follows:
\begin{varenumerate}
\item it writes in the second tape $0$ and it copies on the third tape the input and the value $y$ (present in 
the second tape);
\item it computes on the third tape the function $f$ and saves the result on the last tape;
\item it verifies if the last tape contains the value $0$. In this case it saves on the last tape the element in the second tape and it stops, otherwise it increases  $y$;
\item it copies in the third tape the input and $y$; 
\item it returns to the step 3. 
\end{varenumerate}
This concludes the proof.

\end{proof}

Hence we can prove the following theorem, showing that Probabilistic Recursive Functions are computable by a Probabilistic Turing Machine. 
\begin{theorem}\label{theo:soundness} 
  $\PrR \subseteq \PrC$
\end{theorem}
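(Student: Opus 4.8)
The plan is to prove the inclusion by structural induction on the way a probabilistic recursive function is generated, using the four computability lemmas just established as the building blocks. Recall from Definition~\ref{Pu} that $\PrR$ is defined as the \emph{smallest} class of probabilistic functions containing the basic probabilistic functions and closed under generalized composition, primitive recursion, and minimization. The cleanest way to exploit this is to observe that it suffices to show that $\PrC$ itself enjoys these two closure-type properties: namely, that $\PrC$ contains all the BPFs, and that $\PrC$ is closed under the three formation operations. Once both facts are in hand, minimality of $\PrR$ forces $\PrR \subseteq \PrC$ immediately, with no further argument.

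Concretely, I would argue by induction on the structure of a derivation witnessing $\funone \in \PrR$. In the base case, $\funone$ is one of $z$, $s$, $\Pi^n_m$, or $r$; each is computable by Lemma~\ref{Basic}, so $\funone \in \PrC$. For the inductive step there are three cases, matching the three operations. If $\funone = h \odot (g_1,\ldots,g_n)$, then by the induction hypothesis $h, g_1,\ldots,g_n$ are all in $\PrC$, and Lemma~\ref{Compo} gives $\funone \in \PrC$. If $\funone = \primrec{f}{g}$, then $f,g \in \PrC$ by induction, and Lemma~\ref{Recur} yields $\funone \in \PrC$. Finally, if $\funone = \mu\, f$, then $f \in \PrC$ by induction and Lemma~\ref{Minimiz} gives $\funone \in \PrC$. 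These exhaust the ways an element of $\PrR$ can be formed, completing the induction.

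Since essentially all of the genuine work has been delegated to the four preceding lemmas, the theorem itself presents no real obstacle: it is a routine bookkeeping induction. If, however, one were proving the whole chain from scratch, the genuinely delicate point is not composition or primitive recursion (whose machine simulations mirror the deterministic constructions, threading intermediate distributions across auxiliary tapes) but rather the minimization case, Lemma~\ref{Minimiz}. There the difficulty is that $\mu\,f$ can return a distribution with \emph{infinite} support, assembled from an unbounded search in which each value $y$ contributes the probability that $f(\V{x},y)$ outputs $0$ while every earlier $f(\V{x},z)$ for $z<y$ outputs something positive. The simulating PTM must realize this cumulative product-of-tail-masses behavior faithfully, and it is precisely here that, as the introduction notes, minimization is used in a nonstandard way. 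For the present theorem, though, one simply invokes Lemma~\ref{Minimiz} as a black box and the proof closes in a single line.

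\begin{proof}
By Definition~\ref{Pu}, $\PrR$ is the smallest class of probabilistic functions containing the BPFs and closed under generalized composition, primitive recursion, and minimization. We show that $\PrC$ contains the BPFs and is closed under these three operations; minimality of $\PrR$ then yields $\PrR \subseteq \PrC$. We proceed by induction on the structure of a function $\funone \in \PrR$. If $\funone$ is a BPF, then $\funone \in \PrC$ by Lemma~\ref{Basic}. If $\funone = h \odot (g_1,\ldots,g_n)$ with $h, g_1,\ldots,g_n \in \PrR$, then these lie in $\PrC$ by the induction hypothesis, whence $\funone \in \PrC$ by Lemma~\ref{Compo}. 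If $\funone = \primrec{f}{g}$, then $f, g \in \PrC$ by the induction hypothesis and $\funone \in \PrC$ by Lemma~\ref{Recur}. If $\funone = \mu\, f$, then $f \in \PrC$ by the induction hypothesis and $\funone \in \PrC$ by Lemma~\ref{Minimiz}. These cases are exhaustive, so every $\funone \in \PrR$ belongs to $\PrC$.
\end{proof}
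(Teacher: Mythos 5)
Your proof is correct and follows exactly the paper's own route: the paper likewise derives the theorem immediately from Lemmata~\ref{Basic}, \ref{Compo}, \ref{Recur} and \ref{Minimiz}, leaving the structural induction implicit where you spell it out. Nothing is missing.
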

\begin{proof}
Immediate from Lemmata \ref{Basic}, \ref{Compo}, \ref{Recur} and \ref{Minimiz}.
\end{proof}
}

The most difficult part of the equivalence proof consists in proving that 
each probabilistic computable function is actually \emph{recursive}.
Analogously to the classic case, a good strategy consists in representing
configurations as natural numbers, then encoding the transition functions 
of the machine at hand, call it $\ptmone$, as a (recursive) function on $\NN$.
In the classic case the proof proceeds by making essential use of the minimization operator 
by which one determines the \emph{number} of transition steps of $\ptmone$ necessary to reach 
a final configuration, if such number exists. This number can then be fed into another
function which simulates $\ptmone$ (on an input) a given number of steps, and which is primitive recursive.
In our case, this strategy does not work: the number of computation steps
can be infinite, even when the probability of converging is $1$. Given our definition of minimization 
which involves distributions, this is delicate, since we have to define a suitable function 
on the PTM computation tree to be minimized. 

In order to do adapt the classic proof, we need to formalize the
notion of a \emph{computation tree} which represents all computation
paths corresponding to a given input string $x$.  We define such a
tree as follows. Each node is labelled by a configuration of the
machine and each edge represents a computation step. The root is
labelled with $\INIT{M}{x}$ and each node labelled with $C$ has either
no child (if $C$ is final) or $2$ children (otherwise), labelled with
$\delta_0(C)$ and $\delta_1(C)$.
Please notice that the same configuration may be duplicated across a single level
of the tree as well as appear at different levels of the tree; nevertheless
we represent each such appearance by a separate node.

We can naturally associate a probability with each node, corresponding
to the probability that the node is reached in the computation: it is
$\frac{1}{2^n}$, where $n$ is the height of the node.  The probability
of a particular \emph{final} configuration is the sum of the
probabilities of all leaves labelled with that configuration.  We also
enumerate nodes in the tree, top-down and from left to right, by using
binary strings in the following way: the root has associated the
number $\varepsilon$. Then if $b$ is the binary string representing
the node $N$, the left child of $N$ has associated the string $b\cdot
0$ while the right child has the number $b\cdot 1$.  Note that from
this definition it follows that each binary number associated to a
node $N$ indicates a path in the tree from the root to $N$. The
computation tree for $x$ will be denoted as $\CT{M}{x}$

We give now a more explicit description of the constructions described
above. First we need to encode the rational numbers $\QQ$ into
$\NN$. Let $\f{\pair}{\NN \times \NN}{\NN}$ be any recursive bijection between
pairs of natural numbers and natural numbers such that $\pair$ and its
inverse are both computable. Let then $\enc$ be just $p_\pair$, i.e.
the function $\f{\enc}{\NN\times\NN}{\distrset{\NN}}$ defined as
follows
$$
\enc(a,b)(q)=\left\{ 
  \begin{array}{ll} 
    1 & \mbox{ if $q=\pair(a,b)$}\\
    0 & \mbox{ otherwise}
  \end{array} 
\right.
$$
The function $\enc$ allows to represent positive rational numbers as pairs of natural numbers in the
obvious way and is recursive.

It is now time to define a few notions on computation trees
\begin{definition}[Computation Trees and String Probabilities]\label{pt}
The function $\f{\PT{\ptmone}}{\NN\times\NN}{\QQ}$ is defined by stipulating
that $\PT{\ptmone}(x,y)$ is the probability of observing the string $\overline{y}$
in the tree $\CT{M}{x}$, namely $\frac{1}{2^{|\overline{y}|}}$.
\end{definition}
Of course, $\PT{\ptmone}$ is partial recursive, thus $p_{\PT{\ptmone}}$ is probabilistic recursive. 
Since the same configuration $C$ can label more than one node in a computation tree $\CT{M}{x}$, 
$\PT{\ptmone}$ does not indicate the probability of reaching $C$, even when $C$ is the label of
the node corresponding to the second argument. Such a probability can be obtained by
summing the probability of all nodes labelled with the same configuration at hand:
\newcommand{\CCM}[1]{\mathit{CC}_{#1}}
\begin{definition}[Configuration Probability]\label{pc}
Suppose given a PTM $\ptmone$. If $x\in\NN$ and $z\in\CONF{\ptmone}$, the
subset $\CCM{\ptmone}(x,z)$ of $\NN$ contains precisely the indices
of nodes of $\CT{\ptmone}{x}$ which are labelled by $z$. The function
$\f{\PC{\ptmone}}{\NN\times\NN}{{\QQ}}$ is defined as follows:
$$ 
\PC{\ptmone}(x,z) = \Sigma_{y \in \CCM{\ptmone}(x,z)} \PT{\ptmone}(x,y)
$$
\end{definition}
Contrary to $\PT{\ptmone}$, there is nothing guaranteeing that $\PC{\ptmone}$ is indeed computable.
In the following, however, what we will do is precisely showing that this is the case.

In Figure \ref{ctree} we show an example of computation tree $\CT{\ptmone}{x}$ for an hypothetical
PTM $\ptmone$ and an input $x$. The leaves, depicted as red nodes, represent the final configurations 
of the computation. So, for example, $\PC{\ptmone}(x,C)=1$, while
$\PC{\ptmone}(x,E)=\frac{3}{4}$. Indeed, notice that there are three nodes in the tree which
are labelled with $E$, namely those corresponding to the binary strings $00$, $01$, and $10$.
\begin{figure}[ht]
\begin{center}
\fbox{
\begin{minipage}{.97\textwidth}
\begin{center}
  \scalebox{0.9}{
    \begin{tikzpicture}[
	grow=down,
	level 1/.style={sibling distance=3cm, level distance=1.5cm},
	level 2/.style={sibling distance=1.8cm, level distance=1.5cm},
	level 3/.style={sibling distance=1.3cm, level distance=1.5cm},
	kant/.style={text width=3cm, text centered, sloped},
	every node/.style={text ragged, inner sep=2mm},
	punkt/.style={circle, shade, top color=white,
	  bottom color=white, draw=black, very
	  thick},
	punktn/.style={circle, shade, top color=white,
	  bottom color=white, draw=red, very
	thick},
      ]
      \node[punkt] {$C$}
      child {node[punkt] {$D$}
	child{node[punktn]{$E$}
	  edge from parent
	  node[kant,above,pos=.3]{}}
	child {node[punktn]{$E$}
	  edge from parent
	  node[kant,above,pos=.3]{}} 
	edge from parent
	node[kant,above,pos=.3]{}  }
      child{node[punkt] {$F$}
	child{node[punktn]{$E$}
	  edge from parent
	  node[kant,above,pos=.3]{}}     
	child {node[punktn]{$G$}
	  edge from parent
	  node[kant,above,pos=.3]{} 
	}	
        edge from parent
	node[kant,above,pos=.3]{} 
      };
    \end{tikzpicture}}
  \vspace{9pt}
  \end{center}
\end{minipage}}
\end{center}
\caption{An Example of a Computation Tree}
\label{ctree}
\end{figure}
As we already mentioned, our proof separates the classic part of the
computation performed by the underlying PTM, which essentially
computes the configurations reached by the machine in different paths,
from the probabilistic part, which instead computes the probability
values associated to each computation by using minimization. These two
tasks are realized by two suitable probabilistic recursive functions,
which are then composed to obtain the function computed by the
underlying PTM. We start with the probabilistic part, which is more
complicated.

We need to define a function, which returns the \emph{conditional}
probability of terminating at the node corresponding to the string
$\overline{y}$ in the tree $\CT{M}{x}$, given that all the nodes
$\overline{z}$ where $z<y$ are labelled with non-final
configurations. This is captured by the following definition:
\newcommand{\PTZ}[1]{\mathit{PT}^0_{#1}}
\newcommand{\PTU}[1]{\mathit{PT}^1_{#1}}
\begin{definition}\label{p1p0}
Given a PTM $\ptmone$, we define $\f{\PTZ{\ptmone}}{\NN \times \NN }{\QQ}$ and $\f{\PTU{\ptmone}}{\NN \times \NN }{\QQ}$   as follows:
\begin{align*}
\PTU{\ptmone}(x,y)&=  
\left\{ 
  \begin{array}{ll} 
    1 & \mbox{ if $y$ is not a leaf of $\CT{\ptmone}{x}$;}  \\
    1- \PTZ{\ptmone}(x,y) & \mbox{ otherwise;}
  \end{array} 
\right.
\\
\PTZ{\ptmone}(x,y)&=  
\left\{ \begin{array}{ll} 
    0 & \mbox{ if $y$ is not a leaf of $\CT{\ptmone}{x}$;} \\
    \frac{\PT{\ptmone}(x,y)}{\prod_{k< y}\PTU{\ptmone}(x,k)}  & \mbox{ otherwise;}
  \end{array} 
\right.
\end{align*}
\end{definition}
Note that, according to previous definition, $\PTU{\ptmone}(x,y)$ is
the probability of \emph{not} terminating the computation in the node
$y$, while $\PTZ{\ptmone}(x,y)$ represents the probability of terminating the computation in the node $y$, 
both \emph{knowing} that the computation has not terminated in any node $k$ preceding $y$. 
\begin{proposition}
The functions $\f{\PTZ{\ptmone}}{\NN \times \NN }{\QQ}$ and $\f{\PTU{\ptmone}}{\NN \times \NN }{\QQ}$ are partial recursive.
\end{proposition}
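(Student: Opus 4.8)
The plan is to reduce the partial recursiveness of the pair $(\PTZ{\ptmone},\PTU{\ptmone})$ to three ingredients: the partial recursiveness of $\PT{\ptmone}$, already recorded just after Definition~\ref{pt}; the recursiveness of the predicate ``$y$ is a leaf of $\CT{\ptmone}{x}$''; and the recursiveness of the usual arithmetic on rationals, once these are coded into $\NN$ through $\pair$ and $\enc$. With rationals so coded, $\PTZ{\ptmone}$ and $\PTU{\ptmone}$ may be regarded as partial functions $\NN\times\NN\rightharpoonup\NN$, and it then suffices to exhibit them by the standard recursion-theoretic operators.

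First I would settle the leaf predicate. Given $x$ and $y$, decode $\overline{y}$ as a binary string and read it as a path from the root of $\CT{\ptmone}{x}$, simulating $\ptmone$ from $\INIT{\ptmone}{x}$ by applying $\transone_0$ on each $0$ and $\transone_1$ on each $1$. Because $|\overline{y}|$ is obtained recursively from $y$, the simulation runs for a fixed finite number of steps and halts; the node $y$ is a leaf precisely when the configuration reached after consuming all of $\overline{y}$ is final while no proper prefix of $\overline{y}$ already yields a final configuration (this also correctly classifies nonexistent nodes as non-leaves). Since the transition functions of the fixed machine $\ptmone$ are computable, the characteristic function of this predicate is total recursive. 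This is exactly the ``classical part'' of the construction anticipated in the surrounding discussion.

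The core of the argument is to observe that the apparent \emph{mutual} recursion of Definition~\ref{p1p0} is in fact well founded: $\PTU{\ptmone}(x,y)$ is built from $\PTZ{\ptmone}(x,y)$ at the \emph{same} second argument, while $\PTZ{\ptmone}(x,y)$ appeals to $\PTU{\ptmone}(x,k)$ only for $k<y$ (besides $\PT{\ptmone}(x,y)$ and the leaf test). Hence both functions are fixed by induction on $y$. To realize this as a genuine scheme I would define, by primitive recursion on $y$ with parameter $x$, the running denominator $D(x,y)=\prod_{k<y}\PTU{\ptmone}(x,k)$, using $D(x,0)=1$ and $D(x,y+1)=D(x,y)\cdot\PTU{\ptmone}(x,y)$, where at each stage $\PTZ{\ptmone}(x,y)=\PT{\ptmone}(x,y)/D(x,y)$ and $\PTU{\ptmone}(x,y)$ are read off from $D(x,y)$, the leaf test, and the case split of Definition~\ref{p1p0}. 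As multiplication, division, and $t\mapsto 1-t$ on the rational codes are recursive, and $\PT{\ptmone}$ and the leaf predicate are recursive, $D$ is recursive, and therefore so are $\PTU{\ptmone}$ and $\PTZ{\ptmone}$.

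The one delicate point is division: $\PTZ{\ptmone}(x,y)$ divides by $D(x,y)$, which vanishes exactly when some earlier factor $\PTU{\ptmone}(x,k)$ equals $0$. This is precisely why the claim is \emph{partial} recursiveness --- on such arguments the function is simply left undefined, which the statement permits and which costs nothing in a partial recursive scheme. I expect the main obstacle to be not any individual computation but the bookkeeping that turns the informal ``induction on $y$'' into a bona fide primitive recursion over the coded running product, thereby dissolving the superficial circularity between $\PTZ{\ptmone}$ and $\PTU{\ptmone}$.
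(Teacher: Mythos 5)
Your proposal is correct and follows the same route as the paper: the paper's own proof simply observes that $\PT{\ptmone}$ is partial recursive and that the mutual recursion in Definition~\ref{p1p0} is well-founded, then appeals to the Church--Turing thesis. You have merely made explicit the details (the leaf predicate, the running product realizing the well-founded recursion, and the division-by-zero caveat) that the paper leaves implicit.
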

\begin{proof}
Please observe that $\PT{\ptmone}$ is partial recursive and that the definitions above are mutually recursive,
but the underlying order is well-founded. Both functions are thus intuitively computable, thus partial recursive 
by the Church-Turing thesis.\qed
\end{proof}
The reason why the two functions above are useful is because they
associate the distribution
$\{0^{\PTU{\ptmone}(x,y)},1^{\PTZ{\ptmone}(x,y)}\}$ to each pair of
natural numbers $(x,y)$.  In Figure \ref{xtree}, we give the
quantities we have just defined for the tree from Figure~\ref{ctree}.
Each internal node is associated with the same distribution
$\{0^{0},1^{1}\}$.  Only the leaves are associated with nontrivial
distributions.  As an example, the distribution associated to the node
$10$ is $\{0^{1/2},1^{1/2}\}$, because we have that
\begin{align*}
\PTZ{\ptmone}(x,\overline{10})&=\frac{\PT{\ptmone}(x,\overline{10})}{\prod_{k<\overline{10}}\PTU{\ptmone}(x,k)}\\
  &=\frac{1}{4\cdot\PTU{\ptmone}(x,\overline{01})\cdot\PTU{\ptmone}(x,\overline{00})\cdot\PTU{\ptmone}(x,\overline{1})\cdot\PTU{\ptmone}(x,\overline{0})\cdot\PTU{\ptmone}(x,\overline{\varepsilon})}\\
  &=\frac{1}{4\cdot\PTU{\ptmone}(x,\overline{01})\cdot\PTU{\ptmone}(x,\overline{00})}.
\end{align*} 
As it can be easily verified, $\PTU{\ptmone}(x,\overline{00})=\frac{3}{4}$, while $\PTU{\ptmone}(x,\overline{01})=\frac{2}{3}$.
Thus, $\PTZ{\ptmone}(x,\overline{10})=\frac{1}{2}$.
\begin{figure}[ht]
\begin{center}
\fbox{
\begin{minipage}{.97\textwidth}
  \begin{center}
    \vspace{4pt}
  \scalebox{0.6}{
    \begin{tikzpicture}[
      grow=down,
      level 1/.style={sibling distance=7.5cm, level distance=4cm},
      level 2/.style={sibling distance=4cm, level distance=4cm},
      level 3/.style={sibling distance=1.3cm, level distance=1.5cm},
      kant/.style={text width=3cm, text centered, sloped},
      every node/.style={text ragged, inner sep=2mm},
      punkt/.style={circle, minimum size=2.7cm, shade, top color=white,
	bottom color=white, draw=black, very
	thick},
      punktn/.style={circle,minimum size=2.8cm, shade, top color=white,
	bottom color=white, draw=red, very
	thick},
      ]
      \node[punkt] 
      {
        $\begin{array}{c}
          \{0^{0},1^{1}\}\\ 
          C\\
          \varepsilon
        \end{array}$}
      child {node[punkt] {
          $\begin{array}{c}
            \{0^{0},1^{1}\}\\ \ 
              D\\
              0
            \end{array}$}
          child{node[punktn]{$
              \begin{array}{c}
                \{0^{1/4},1^{3/4}\}\\ \ 
                E\\
                00
              \end{array}$}
            edge from parent
            node[kant,above,pos=.3]{}}
          child {node[punktn]{$
              \begin{array}{c}
                \{0^{1/3},1^{2/3}\}\\
                E\\ 
                01
              \end{array}$}
            edge from parent
            node[kant,above,pos=.3]{}} 
          edge from parent
          node[kant,above,pos=.3]{}  }
        child{node[punkt] {$\begin{array}{c}
              \{0^{0},1^{1}\}\\
              F\\
              1
            \end{array}$}
          child{node[punktn]{$\begin{array}{c}
                \{0^{1/2},1^{1/2}\}\\
                E\\
                10
              \end{array}$}
            edge from parent
            node[kant,above,pos=.3]{}}     
          child {node[punktn]{$\begin{array}{c}
                \{0^{1},1^{0}\}\\
                G\\
                11
              \end{array}$}
            edge from parent
            node[kant,above,pos=.3]{} 
	  }	
          edge from parent
          node[kant,above,pos=.3]{} 
        };
      \end{tikzpicture}}
  \vspace{4pt}
  \end{center}
  \end{minipage}}
\end{center}
\caption{The Conditional Probabilities for the Computation Tree from Figure \ref{ctree}}
\label{xtree}
\end{figure}

\newcommand{\PTC}[1]{\mathit{PTC}_{#1}}
We now need to go further, and prove that the probabilistic function returning,
on input $(x,y)$, the distribution $\{0^{\PTU{\ptmone}(x,y)},1^{\PTZ{\ptmone}(x,y)}\}$ is
recursive. This is captured by the following definition:
\begin{definition}\label{condprob}
Given a PTM $\ptmone$, the function $\f{\PTC{\ptmone}}{\NN \times \NN }{\distrset{\NN}}$ is defined as follows
$$
\PTC{\ptmone}(x,y)(z)=  
\left\{ 
  \begin{array}{ll} 
    \PTZ{\ptmone}(x,y) & \mbox{ if $z=0$;} \\
    \PTU{\ptmone}(x,y) & \mbox{ if $z=1$;}\\
    0 & \mbox{ otherwise-}
  \end{array} 
\right.
$$
\end{definition}
The function $\PTC{\ptmone}$ is really the core of our encoding. On the one hand, we will show that it is
indeed recursive. On the other, minimizing it is going to provide us exactly with the function we need
to reach our final goal, namely proving that the probabilistic function computed by $\ptmone$ is itself recursive. 
But how should we proceed if we want to prove $\PTC{\ptmone}$ to be recursive? The idea is to compose
$p_{\PTU{\ptmone}}$ with a function that turns its input into the probability of returning $1$.
This is precisely what the following function does:
\newcommand{\ItoP}{\mathit{I2P}}
\begin{definition} \label{j}
The function $\f{\ItoP}{\QQ}{\distrset{\NN}}$  is defined as follows
$$
\ItoP(x)(y)=\left\{ 
  \begin{array}{ll} 
    x & \mbox{ if $(0\leq x\leq 1)\wedge(y=1)$}\\
    1-x& \mbox{ if $(0\leq x\leq 1)\wedge(y=0)$}\\
    0 & \mbox{ otherwise}
  \end{array} 
\right.
$$
\end{definition}
Please observe how the input to $\ItoP$ is the set of rational numbers, as usual
encoded by pairs of natural numbers. Previous definitions allow us to treat 
(rational numbers representing) probabilities in  our algebra of functions. 
Indeed:
\begin{proposition}
The probabilistic function $\ItoP$ is recursive.
\end{proposition}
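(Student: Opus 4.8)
The plan is to realize $\ItoP$ as a generalized composition of a purely probabilistic, coin-driven component with a deterministic digit-extraction component. The guiding observation is elementary: for any $q\in[0,1]$ with a binary expansion $q=\sum_{n\geq 1}q_n 2^{-n}$ (taking the all-ones expansion $0.111\ldots$ when $q=1$), we have $\sum_{n:\,q_n=1}2^{-n}=q$ and $\sum_{n:\,q_n=0}2^{-n}=1-q$, simply because $\sum_{n\geq 1}2^{-n}=1$. Hence, if we sample an index $n\geq 1$ with geometric probability $2^{-n}$ and then output the $n$-th binary digit of $q$, we output $1$ with probability exactly $q$ and $0$ with probability $1-q$. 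The probabilistic content of $\ItoP$ is thus already supplied by the geometric function, and the dependence on the input rational is entirely deterministic.

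Concretely, I would reuse $\mu\,\rand$ from the Example above, which was shown to be in $\PrR$ and to satisfy $(\mu\,\rand)(x)(n)=2^{-n}$ for $n\geq 1$ (and $0$ for $n=0$). Next I would define a partial recursive function $\f{B}{\NN\times\NN}{\NN}$ that decodes its first argument as $\pair(a,b)$, checks whether $0\leq a/b\leq 1$, and, if so, returns the $n$-th binary digit of $a/b$ (computed as $\lfloor 2^{\,n}a/b\rfloor \bmod 2$, with the all-ones convention forced in the case $a/b=1$), diverging otherwise. Since $B$ is partial recursive, its lift $p_B$ lies in $\PrR$. I would then set $\ItoP=p_B\odot(\id,\mu\,\rand)$, which is in $\PrR$ by closure under generalized composition (Definition~\ref{GComp}).

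It then remains to check the composite computes the stipulated distribution. Unfolding the generalized composition and using that $\id(x)(z_1)=1$ exactly when $z_1=x$, one obtains $\ItoP(x)(y)=\sum_{n\geq 1}(\mu\,\rand)(x)(n)\cdot p_B(x,n)(y)=\sum_{n\geq 1}2^{-n}\,[\,B(x,n)=y\,]$. When $q=a/b\in[0,1]$, each $B(x,n)$ converges to the $n$-th digit $q_n$, so the $y=1$ sum is $\sum_{n:\,q_n=1}2^{-n}=q$ and the $y=0$ sum is $1-q$, matching the first two clauses; when $q\notin[0,1]$, every $B(x,n)$ diverges, so $p_B(x,n)(y)=0$ for all $n,y$ and $\ItoP(x)$ is the empty distribution, matching the ``otherwise'' clause.

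The only real obstacle lies in the deterministic helper $B$, specifically the binary-expansion convention at boundary values: the naive formula sends $q=1$ to all zeros, which would wrongly give probability $0$ to the output $1$, so I would force the all-ones representation there (for dyadic rationals in $(0,1)$ either representation sums to $q$, so no extra care is needed). Beyond this bookkeeping everything is routine, since the genuinely probabilistic work is already packaged in $\mu\,\rand$ and $B$ is manifestly computable, hence partial recursive by the Church--Turing thesis as invoked elsewhere in this section.
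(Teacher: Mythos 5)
Your proposal is correct and follows essentially the same route as the paper: the paper also composes a geometric sampler (there written as $h=\mu\,(\rand\odot\Pi^2_1)$ with $h(x)(y)=1/2^{y+1}$) with the lift of a partial recursive binary-digit-extraction function $b$, obtaining $\ItoP=b\odot(\id,h)$. Your treatment of the boundary conventions (the all-ones expansion at $q=1$ and divergence outside $[0,1]$) is a welcome refinement of details the paper leaves implicit, but the underlying decomposition is the same.
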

\begin{proof}
We first observe that $\f{h}{\NN}{\distrset{\NN}}$ defined as
$$
h(x)(y)=1/2^{y+1}
$$
is a probabilistic recursive function, because $h=\mu\;(\rand\odot \Pi^2_1)$.
Next we observe that every $q\in\QQ\cap[0,1]$ can be represented in
binary notation as:
$$
q=\sum_{i\in\NN} \frac{c^q_i}{1/2^{i+1}}
$$
where $c^q_i\in\{0,1\}$ (i.e., $c^q_i$ is the $i$-th element of the
binary representation of $q$). Moreover, a function computing such a
$c^q_i$ from $q$ and $i$ is partial recursive. Hence we can define
$\f{b}{\NN\times\NN}{\distrset{\NN}}$ as follows
$$
b(q,i)(y) =\left\{ 
  \begin{array}{ll} 
    1 & \mbox{if $y=c^q_i$}\\
    0 & \mbox{otherwise}
  \end{array} \right.
$$ 
and conclude that $b$ is indeed a probabilistic recursive function
(because $\PrR$ includes all the partial recursive functions, seen as
probabilistic functions). Observe that:
$$
b(q,i)(y) =\left\{ 
  \begin{array}{ll} 
    c^q_i & \mbox{if $y=1$}\\
    1-c^q_i & \mbox{if $y=0$}
  \end{array} \right.
$$ 
From the definition of composition, it follows that
\begin{align*}
(b \odot (id,h))(q)(y)&=\sum_{x_1,x_2} b(x_1,x_2)(y) \cdot id(q)(x_1)\cdot h(q)(x_2)\\
  &=\sum_{x_2} b(q,x_2)(y)\cdot h(q)(x_2)=\sum_{x_2} b(q,x_2)(y)\cdot \frac{1}{2^{x_2+1}}\\
  &=\left\{
  \begin{array}{ll}
    \sum_{x_2} \frac{c^q_{x_2}}{2^{x_2+1}} & \mbox{if $y=1$}\\
    \sum_{x_2} \frac{1-c^q_{x_2}}{2^{x_2+1}} & \mbox{if $y=0$}    
  \end{array}
  \right.
  =
  \left\{
  \begin{array}{ll}
    q & \mbox{if $y=1$}\\
    1-q & \mbox{if $y=0$}    
  \end{array}
  \right. .
\end{align*}
This shows that
$$
\ItoP= b \odot (id,h),
$$
and hence that $\ItoP$ is probabilistic recursive.\qed
\end{proof}
The following is an easy corollary of what we have obtained so far:
\begin{proposition}
The probabilistic function $\PTC{\ptmone}$ is recursive.
\end{proposition}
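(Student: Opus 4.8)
The plan is to exhibit $\PTC{\ptmone}$ as a single generalized composition of two functions that are already known to be recursive, namely $\ItoP$ and the lift $p_{\PTU{\ptmone}}$ of $\PTU{\ptmone}$. First I would record the arithmetic fact that drives everything: by Definition~\ref{p1p0}, for all $x,y$ one has $\PTZ{\ptmone}(x,y)+\PTU{\ptmone}(x,y)=1$ (when $y$ is not a leaf this reads $0+1$, and when $y$ is a leaf it reads $\PTZ{\ptmone}(x,y)+(1-\PTZ{\ptmone}(x,y))$). Consequently the distribution $\PTC{\ptmone}(x,y)$ is exactly $\{0^{\PTZ{\ptmone}(x,y)},1^{\PTU{\ptmone}(x,y)}\}$, a two-point distribution whose two masses are determined by the single value $\PTU{\ptmone}(x,y)$. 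This is what makes a one-function composition possible, as anticipated by the remark preceding Definition~\ref{j}.

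Next I would observe that both ingredients lie in $\PrR$: the function $\ItoP$ is recursive by the previous proposition, and $p_{\PTU{\ptmone}}$ is recursive because $\PTU{\ptmone}$ is partial recursive (shown above) and $\PrR$ contains the lift $p_f$ of every partial recursive $f$. The claim is then that $\PTC{\ptmone}=\ItoP\odot(p_{\PTU{\ptmone}})$, where the rational value produced by $\PTU{\ptmone}$ is encoded in $\NN$ in the usual way through $\enc$. Since $\ItoP:\NN\to\distrset{\NN}$ and $p_{\PTU{\ptmone}}:\NN^2\to\distrset{\NN}$, this composition has the correct type $\NN^2\to\distrset{\NN}$, and closure of $\PrR$ under generalized composition (Definition~\ref{GComp}) then yields $\PTC{\ptmone}\in\PrR$.

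The only thing left to check is the defining identity, which I would verify by unfolding Definition~\ref{GComp}:
$$
(\ItoP\odot(p_{\PTU{\ptmone}}))(x,y)(z)=\sum_{w\in\NN}\ItoP(w)(z)\cdot p_{\PTU{\ptmone}}(x,y)(w).
$$
Since $p_{\PTU{\ptmone}}(x,y)(w)$ equals $1$ exactly when $w$ is the code of $\PTU{\ptmone}(x,y)$ and is $0$ otherwise, the sum collapses to $\ItoP(\PTU{\ptmone}(x,y))(z)$. Reading off Definition~\ref{j}, this gives $\PTU{\ptmone}(x,y)$ for $z=1$, it gives $1-\PTU{\ptmone}(x,y)=\PTZ{\ptmone}(x,y)$ for $z=0$ (here the identity from the first paragraph is used), and $0$ for $z\geq 2$, which is precisely $\PTC{\ptmone}(x,y)(z)$ from Definition~\ref{condprob}.

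There is no genuine obstacle here, which is exactly why the statement is an \emph{easy} corollary. The only points that demand care are pure bookkeeping: feeding $\PTU{\ptmone}$ (rather than $\PTZ{\ptmone}$) into $\ItoP$, so that the masses attached to $0$ and $1$ come out in the order fixed by Definition~\ref{condprob}, and keeping the rational-to-natural encoding $\enc$ consistent between the codomain of $p_{\PTU{\ptmone}}$ and the domain of $\ItoP$.
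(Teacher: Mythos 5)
Your proof is correct and follows exactly the paper's own route: the paper likewise establishes the claim by the single identity $\PTC{\ptmone} = \ItoP\odot p_{\PTU{\ptmone}}$, relying on the previously proved recursiveness of $\ItoP$ and on the fact that $\PrR$ contains $p_f$ for every partial recursive $f$. Your additional verification that the composition unfolds to the right two-point distribution is just a more explicit write-up of the same observation.
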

\begin{proof}
Just observe that $\PTC{\ptmone} = \ItoP\odot p_{\PTU{\ptmone}}$.\qed
\end{proof}
The probabilistic recursive function obtained as the minimization of
$\PTC{\ptmone}$ allows to compute a probabilistic function that, given
$x$, returns $y$ with probability $\PT{\ptmone}(x,y)$ \emph{if} $y$ is a
leaf (and otherwise the probability is just $0$).
\begin{definition} \label{p}
The function $\f{\CF{M}}{\NN }{\distrset{\NN}}$ is defined as follows
$$\CF{M}(x)(y)=  
\left\{ 
  \begin{array}{ll} 
    \PT{\ptmone}(x,y) & \mbox{if $y$ corresponds to a leaf}\\
    0 & otherwise.
  \end{array} 
\right.
$$
\end{definition}
\begin{proposition}
The probabilistic function $\CF{\ptmone}$ is recursive.
\end{proposition}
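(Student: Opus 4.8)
The plan is to show that $\CF{M}$ is nothing but the minimization of $\PTC{\ptmone}$, that is, to establish the identity $\CF{M}=\mu\;\PTC{\ptmone}$. Once this is proved, recursiveness is immediate: the preceding proposition gives that $\PTC{\ptmone}$ is recursive, and $\PrR$ is closed under minimization (Definition~\ref{Pu}), so $\mu\;\PTC{\ptmone}$, hence $\CF{M}$, lies in $\PrR$. Thus the entire content of the proposition reduces to verifying one equation between probabilistic functions.

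To verify it, I would first unfold the minimization operator from Definition~\ref{fmin} applied to $\PTC{\ptmone}$:
$$
\mu\;\PTC{\ptmone}(x)(y) = \PTC{\ptmone}(x,y)(0)\cdot(\prod_{z<y}(\sum_{k>0}\PTC{\ptmone}(x,z)(k))).
$$
By Definition~\ref{condprob} we have $\PTC{\ptmone}(x,y)(0)=\PTZ{\ptmone}(x,y)$, and since $\PTC{\ptmone}(x,z)(k)$ is nonzero only for $k\in\{0,1\}$, the inner sum collapses to $\PTC{\ptmone}(x,z)(1)=\PTU{\ptmone}(x,z)$, giving $\mu\;\PTC{\ptmone}(x)(y)=\PTZ{\ptmone}(x,y)\cdot\prod_{z<y}\PTU{\ptmone}(x,z)$. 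I would then split on the two cases of Definition~\ref{p1p0}. If $y$ is not a leaf of $\CT{M}{x}$, then $\PTZ{\ptmone}(x,y)=0$ and the expression vanishes, matching $\CF{M}(x)(y)$. If $y$ is a leaf, then $\PTZ{\ptmone}(x,y)=\PT{\ptmone}(x,y)/\prod_{k<y}\PTU{\ptmone}(x,k)$; substituting this in, the surviving product $\prod_{z<y}\PTU{\ptmone}(x,z)$ is literally the same product as the denominator and cancels it exactly, leaving $\PT{\ptmone}(x,y)=\CF{M}(x)(y)$. This establishes the identity.

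The single delicate point — and the only place the argument could break — is the cancellation in the leaf case, which is legitimate only when $\prod_{k<y}\PTU{\ptmone}(x,k)\neq 0$. I would dispatch this by reading the product probabilistically: $\prod_{k<y}\PTU{\ptmone}(x,k)$ is the probability of not having halted at any node preceding $y$, and a short telescoping induction shows it equals $1-\sum_{k<y,\ k\text{ a leaf}}\PT{\ptmone}(x,k)$. Were this quantity to vanish, the leaf probabilities strictly below $y$ would already sum to $1$; but every leaf $k$ carries strictly positive probability $\PT{\ptmone}(x,k)=\frac{1}{2^{|\overline{k}|}}>0$, so no further leaf could occur at position $y$ or beyond, contradicting that $y$ is itself a leaf. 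Hence the denominator is positive exactly when the division formula for $\PTZ{\ptmone}$ is actually invoked, the cancellation is valid, and $\CF{M}=\mu\;\PTC{\ptmone}$ as required.
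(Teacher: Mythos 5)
Your proof is correct and follows essentially the same route as the paper's: both establish $\CF{\ptmone}=\mu\,\PTC{\ptmone}$ by unfolding the minimization operator, splitting on whether $y$ is a leaf, and cancelling the product $\prod_{z<y}\PTU{\ptmone}(x,z)$ against the denominator of $\PTZ{\ptmone}$. Your additional telescoping argument that this product is strictly positive at every leaf is a welcome refinement the paper leaves implicit, but it does not change the approach.
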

\begin{proof}
The probabilistic function $\CF{\ptmone}$ is just the function
obtained by minimizing $\PTC{\ptmone}$, which we already know to
be recursive. Indeed, if $z$ corresponds to a leaf, then:
\begin{align*}
(\mu\PTC{\ptmone})(x)(z)&=\PTC{\ptmone}(x,z)(0)\cdot\prod_{y<z}\sum_{k>0}\PTC{\ptmone}(x,y)(k)\\
   &=\PTC{\ptmone}(x,z)(0)\cdot\prod_{y<z}\PTC{\ptmone}(x,y)(1)\\
   &=\PTZ{\ptmone}(x,z)\cdot\prod_{y<z}\PTU{\ptmone}(x,y)\\
   &=\frac{\PT{\ptmone}(x,z)}{\prod_{y<z}\PTU{\ptmone}(x,y)}\cdot\prod_{y<z}\PTU{\ptmone}(x,y)=\PT{\ptmone}(x,z).
\end{align*}
If, however, $z$ does not correspond to a leaf, then:
\begin{align*}
(\mu\PTC{\ptmone})(x)(z)&=\PTC{\ptmone}(x,z)(0)\cdot\prod_{y<z}\sum_{k>0}\PTC{\ptmone}(x,y)(k)\\
  &=\PTZ{\ptmone}(x,z)(0)\cdot\prod_{y<z}\sum_{k>0}\PTC{\ptmone}(x,y)(k)=0.
\end{align*}
This concludes the proof.\qed
\end{proof}
\newcommand{\SP}[1]{\mathit{SP}_{#1}}
We are almost ready to wrap up our result, but before proceeding further, we need to define 
the function $\f{\SP{\ptmone}}{\NN\times\NN}{\NN}$ that, given in input a pair $(x,y)$ returns the 
(encoding) of the string found in the configuration labeling the node $y$ in $\CT{\ptmone}{x}$.
We can now prove the desired result:
\begin{theorem}\label{theo:completeness}
 $\PrC \subseteq \PrR$.
\end{theorem}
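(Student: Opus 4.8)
The plan is to exhibit $\FUNC{\ptmone}$, the function computed by an arbitrary PTM $\ptmone$, directly as a generalized composition of functions already shown to be in $\PrR$. Recall that, by the fixpoint construction, $\FUNC{\ptmone}(x)(s)$ is the total probability that $\ptmone$ halts on input $x$ in a final configuration carrying output string $s$; since distinct leaves of $\CT{\ptmone}{x}$ may carry the same output string, this equals the sum of $\PT{\ptmone}(x,y)$ over all leaves $y$ whose configuration outputs $s$. Working throughout modulo the bijection $\overline{(\cdot)}$ between strings and naturals, the goal is to recognize this sum as a single application of composition.

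The two required ingredients are already in hand. First, the function $\CF{\ptmone}$, obtained as $\mu\,\PTC{\ptmone}$, is in $\PrR$ and satisfies $\CF{\ptmone}(x)(y)=\PT{\ptmone}(x,y)$ when $y$ corresponds to a leaf and $\CF{\ptmone}(x)(y)=0$ otherwise; thus $\CF{\ptmone}(x)$ is exactly the distribution over leaf nodes weighted by their reaching probabilities. Second, $\SP{\ptmone}$ is partial recursive: on input $(x,y)$ one simulates $\ptmone$ from $\INIT{\ptmone}{x}$ along the path coded by the bits of $\overline{y}$ (bit $0$ applying $\transone_0$, bit $1$ applying $\transone_1$) and reads off the string in the resulting configuration. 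Hence, by the observation that $\PrR$ contains all partial recursive functions in the form $p_{(\cdot)}$, the probabilistic function $p_{\SP{\ptmone}}$ lies in $\PrR$.

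I would then claim, and verify, the identity
$$
\FUNC{\ptmone}=p_{\SP{\ptmone}}\odot(\id,\CF{\ptmone}),
$$
where $\id=\Pi^1_1$. Unfolding generalized composition (Definition~\ref{GComp}) gives
$$
(p_{\SP{\ptmone}}\odot(\id,\CF{\ptmone}))(x)(s)=\sum_{z_1,z_2} p_{\SP{\ptmone}}(z_1,z_2)(s)\cdot\id(x)(z_1)\cdot\CF{\ptmone}(x)(z_2)=\sum_{z_2}p_{\SP{\ptmone}}(x,z_2)(s)\cdot\CF{\ptmone}(x)(z_2),
$$
using that $\id(x)(z_1)=1$ exactly when $z_1=x$. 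Since $p_{\SP{\ptmone}}(x,z_2)(s)=1$ iff $s=\SP{\ptmone}(x,z_2)$ and is $0$ otherwise, only the $z_2$ with $\SP{\ptmone}(x,z_2)=s$ survive; and since $\CF{\ptmone}(x)(z_2)=0$ on every non-leaf $z_2$, only genuine final configurations contribute, each with weight $\PT{\ptmone}(x,z_2)$. The sum is therefore precisely the total halting probability of output $s$, that is $\FUNC{\ptmone}(x)(s)$. Closure of $\PrR$ under generalized composition then yields $\FUNC{\ptmone}\in\PrR$, and as $\ptmone$ was arbitrary, $\PrC\subseteq\PrR$.

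I do not expect a serious obstacle at this final stage: the genuinely hard step, namely taming the possibly infinite computation tree by replacing the classical ``number of steps to termination'' minimization with the conditional-probability function $\PTC{\ptmone}$ and proving its minimization $\CF{\ptmone}$ recursive, has already been carried out. The only point demanding care is the bookkeeping in the composition above: it is exactly the vanishing of $\CF{\ptmone}$ on non-leaves that makes the aggregation over leaves sharing a common output string come out right, and one must keep the encoding $\overline{(\cdot)}$ applied consistently so that $\FUNC{\ptmone}$ and its recursive counterpart are identified modulo the string/number bijection.
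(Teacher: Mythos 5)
Your proposal is correct and follows exactly the paper's own route: the identity $\FUNC{\ptmone}=p_{\SP{\ptmone}}\odot(\id,\CF{\ptmone})$ is precisely the decomposition the paper uses, resting on the recursiveness of $\CF{\ptmone}=\mu\,\PTC{\ptmone}$ and of the partial recursive path-simulation function $\SP{\ptmone}$. The only difference is that you spell out the unfolding of the generalized composition in detail, which the paper leaves as an informal remark; this is a welcome elaboration rather than a divergence.
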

\begin{proof}
It suffices to note that, given any PTM $\ptmone$, the
function computed by $\ptmone$ is nothing more than
$$
p_{\SP{\ptmone}}\odot (\id,\CF{\ptmone}).
$$
Indeed, one can easily realize that a way to simulate $\ptmone$ indeed consists in generating
from $x$, all strings corresponding to the leaves of $\CT{\ptmone}{x}$, each with an appropriate
probability. This is indeed what $\CF{\ptmone}$ does. What remains to be done is simulating $p_{\SP{\ptmone}}$ along
paths leading to final configurations, which is what $\SP{\ptmone}$ does.\qed
\end{proof}
We are finally ready to prove the main result of this Section:
\begin{corollary} \label{corollarycoin}
$\PrR = \PrC$
\end{corollary}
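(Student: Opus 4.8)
The plan is to obtain the set equality $\PrR = \PrC$ by the standard route of establishing mutual inclusion, and both inclusions are already in hand from the preceding development. Since set equality $X = Y$ is equivalent to $X \subseteq Y$ together with $Y \subseteq X$, the entire proof reduces to citing the two directional theorems already proved.

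First I would invoke Theorem~\ref{theo:soundness}, which supplies the \emph{soundness} inclusion $\PrR \subseteq \PrC$: every probabilistic recursive function is computable by some PTM. This was built up modularly through Lemma~\ref{Basic} (the basic probabilistic functions are computable), Lemma~\ref{Compo} (closure under generalized composition), Lemma~\ref{Recur} (closure under primitive recursion), and Lemma~\ref{Minimiz} (closure under minimization); since $\PrR$ is by Definition~\ref{Pu} the \emph{smallest} class containing the BPFs and closed under these three operations, and $\PrC$ contains the BPFs and enjoys the same closure properties, the inclusion follows. Then I would invoke Theorem~\ref{theo:completeness}, which supplies the reverse \emph{completeness} inclusion $\PrC \subseteq \PrR$: every PTM-computable function is recursive, realized explicitly as $p_{\SP{\ptmone}}\odot (\id,\CF{\ptmone})$.

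Conjoining the two inclusions yields $\PrR = \PrC$ immediately, so the corollary itself requires no new argument. The only genuine obstacle in this equivalence lies entirely upstream, in Theorem~\ref{theo:completeness}: because the number of computation steps of a PTM can be infinite even when convergence has probability $1$, the classical Normal-Form strategy of minimizing over a step counter is unavailable, and one must instead minimize the distribution-valued function $\PTC{\ptmone}$ over the nodes of the computation tree. That difficulty having already been resolved, the final step is purely a matter of combining the two established containments.

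\begin{proof}
Immediate from Theorem~\ref{theo:soundness}, which gives $\PrR \subseteq \PrC$, and Theorem~\ref{theo:completeness}, which gives $\PrC \subseteq \PrR$. Together these two inclusions yield $\PrR = \PrC$.\qed
\end{proof}
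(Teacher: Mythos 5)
Your proposal is correct and matches the paper's own argument: the corollary is obtained by combining the completeness inclusion $\PrC \subseteq \PrR$ (Theorem~\ref{theo:completeness}) with the soundness inclusion $\PrR \subseteq \PrC$ established earlier via the lemmata on basic functions, composition, primitive recursion, and minimization. Nothing further is needed.
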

\begin{proof}
Immediate from 
Theorem~\ref{theo:completeness}, observing that $\PrR \subseteq \PrC$ (this implication is easy to prove).\qed
\end{proof}
The way we prove Corollary \ref{corollarycoin} implies that we cannot deduce Kleene's Normal
Form Theorem from it: minimization has been used many times, some of them ``deep inside''
the construction. A way to recover Kleene's Theorem consists in replacing minimization with
a more powerful operator, essentially corresponding to computing the fixpoint of a given function
\condinc{(see~\cite{EV} for more details)}
{
}.

\section{Characterizing Probabilistic Complexity by Tiering}
In this section we provide a characterization of the probabilistic
functions which can be computed in polynomial time by an algebra
of functions acting on word algebras. More precisely, we define a type
system inspired by Leivant's notion of tiering~\cite{Leivant}, which
permits to rule out functions having a too-high complexity, thus
allowing to isolate the class of \emph{predicative probabilistic
functions}.
\condinc{ 
Finally, we give a hint at how the equivalence between polytime
probabilistic functions and predicative probabilistic functions can be
proved (more details are in~\cite{EV}).  }{ Our main result in this
section is that the class $\PrPC$ of probabilistic functions which
can be computed by a PTM in \emph{polynomial} time equals to
the class of predicative probabilistic functions.  }
\condinc
{

The constructions from Section~\ref{sect:prt} can be easily
generalized to a function algebra on strings in a given alphabet
$\alpone$, which themselves can be seen as a \emph{word algebra}
$\al{W}$.  Base functions include a function computing the empty
string, called $\varepsilon$, and concatenation with any character
$\charone\in\alpone$, called $c_\charone$. Projections remain of
course available, while the only truly random functions concatenate a
symbol $\charone\in\alpone$ to the input, with probability
$\frac{1}{2}$ or leave it unchanged, with probability
$\frac{1}{2}$. Such a function is denoted as $r_\charone$. Composition
and primitive recursion are available, although the latter takes the
form of recursion \emph{on notation}. We do not need minimization: the
distribution a polytime computable probabilistic function returns (on
any input) is always finite, and primitive recursion is anyway
powerful enough for our purposes.

The following construction is redundant in presence of primitive
recursion, but becomes essential when predicatively restricting it:
\begin{definition}[Case Distinction]
If $g_{\varepsilon}:\al{W}^k\rightarrow\distrset{\al{W}}$ and for
every $\charone\in\alpone$,
$g_{\charone}:\al{W}^{k+1}\rightarrow\distrset{\al{W}}$, the function
$h:\al{W}^{k+1}\rightarrow\distrset{\al{W}}$ such that
$h(\varepsilon,\V{y})=g_{\varepsilon}(\V{y})$ and $h(\charone \cdot
w,\V{y})= g_{\charone}(w, \V{y})$ is said to be defined by \emph{case
distinction} from $g_\varepsilon$ and
$\{g_{\charone}\}_{\charone\in\alpone}$ and is denoted as
$\casef{g_\varepsilon,\{g_{\charone}\}_{\charone\in\alpone}}$.
\end{definition}
The idea behind tiering consists in working with denumerable many
copies of the underlying algebra $\al{W}$, each indexed by a natural
number $n\in\NN$ and denoted by $\al{W}_n$. Judgments take the form
$f\triangleright\al{W}_{n_1}\times\ldots\times\al{W}_{n_k}\rightarrow\al{W}_m$,
where $f:\al{W}^k\rightarrow\al{W}$. In the following, with slight
abuse of notation, $\V{W}$ stands for any expression in the form
$\al{W}_{i_1} \times \cdots \times \al{W}_{i_j}$.

Typing rules are in Figure~\ref{fig:tieringrules}.
\begin{figure}
\begin{center}
\fbox{
\footnotesize
\begin{minipage} {.97\textwidth}
  $$
  \infer[]
  {\varepsilon\triangleright \al{W}_k}{}
  \quad
  \infer[]
  {r_\charone\triangleright \al{W}_k \rightarrow \al{W}_k}{}
  \quad
  \infer[]
  {c_\charone  \triangleright \al{W}_k \rightarrow \al{W}_k}{}
  \quad 
  \infer[]
  {\Pi_n^k  \triangleright  \al{W}_{n_1}\times \cdots \times  \al{W}_{n_n} \rightarrow \al{W}_{n_k}}{}
  $$
  \vspace{1pt}
  $$ 
  \infer[]
  {
    f\odot (g_1, \ldots, g_l) \triangleright  \al{W}_{s_1} \times \cdots \times \al{W}_{s_r} \rightarrow \al{W}_l\;
  }
  {
    \{g_i \triangleright   \al{W}_{s_1} \times \cdots \times \al{W}_{s_r} \rightarrow \al{W}_{m_i}\}_{1\leq i\leq p}
    &
    f \triangleright \al{W}_{m_1} \times \cdots \times \al{W}_{m_p} \rightarrow \al{W}_l
  }
  $$
  \vspace{1pt}
  $$ 
  \infer[]
  {
    case(g_{\varepsilon},\{g_\charone\}_{\charone\in\alpone})  \triangleright \al{W}_k \times \V{W}  \rightarrow \al{W}_l\;
  }
  {
    \begin{array}{c}
      g_{\varepsilon} \triangleright  \V{W} \rightarrow \al{W}_l\\
      \{g_\charone \triangleright   \al{W}_k \times \V{W}  \rightarrow \al{W}_l\}_{\charone\in\alpone}
    \end{array}
  } 
  \qquad
  \infer[]
  {
    rec(g_{\epsilon}, \{g_\charone\}_{\charone\in\alpone})  \triangleright \al{W}_m \times \V{W}  \rightarrow \al{W}_k\;
  }
  {
    \begin{array}{c}
    g_{\varepsilon} \triangleright \V{W} \rightarrow \al{W}_k\qquad m>k\\
    \{g_\charone \triangleright   \al{W}_k \times \al{W}_m \times \V{W}  \rightarrow \al{W}_k\}_{\charone\in\alpone}
   \end{array}
  } 
  $$
  \vspace{1pt}
\end{minipage}}
\end{center}
\caption{Tiering as a Typing System}\label{fig:tieringrules}
\end{figure}
The idea here is that, when generating functions by primitive
recursion, one goes from a level (tier) $m$ for the domain to
a \emph{strictly} lower level $k$ for the result. This predicative
constraint ensures that recursion does not cause any exponential
blowup, simply because the way one can \emph{nest} primitive recursive
definitions one inside the other is severely restricted. Please notice
that case distinction, although being typed in a similar way,
does \emph{not} require the same constraints.

Those probabilistic functions $f:\al{W}^k\rightarrow\distrset{\al{W}}$
such that $f$ can be given a type through the rules in
Figure~\ref{fig:tieringrules} are said to be \emph{predicatively
recursive}. The class of all predicatively recursive functions is
$\PrT$. Actually, the class coincides with the one of probabilistic
functions which can be computed by PTMs in polynomial time:
\begin{theorem}\label{theo:socotiering}
$\PrT=\PrPC$.
\end{theorem}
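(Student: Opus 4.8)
The plan is to establish the two inclusions $\PrT \subseteq \PrPC$ (soundness) and $\PrPC \subseteq \PrT$ (completeness) separately, following the pattern of the Bellantoni--Cook and Leivant characterizations of deterministic polynomial time, but tracking the probabilistic behaviour throughout. Since $\PrPC$ is not mentioned earlier in this version, I read it as the class of probabilistic functions computable by a PTM in polynomial time, in the sense of the running-time definition given for PTMs.

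For soundness, the key is a \emph{polynomial size bound}. I would prove, by induction on the typing derivation of a function $\jgdt{f}{\al{W}_{n_1} \times \cdots \times \al{W}_{n_k} \rightarrow \al{W}_m}$, that there is a polynomial $p_f$ such that for every input $\V{w}$ and every string $u$ in the (finite) support of $f(\V{w})$, the length of $u$ is bounded by $p_f$ applied to the lengths of the arguments whose tier is strictly greater than $m$, plus the maximum of the lengths of the arguments whose tier is at most $m$. The only delicate case is the recursion rule: because recursion goes from a tier $m$ down to a \emph{strictly} lower tier $k$ with $m>k$, the recursion argument is supplied at a higher tier than the output and therefore cannot be fed back through a second, nested recursion; this is exactly what rules out the exponential blow-up and keeps $p_f$ polynomial. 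Since $\PrT$ contains no minimization and recursion on notation over finite base cases yields finite distributions, every $f\in\PrT$ returns a finite distribution and the bound holds uniformly over its support. A routine second induction then produces a PTM computing $f$ within a polynomial number of steps; crucially this step count depends only on the \emph{lengths} of the inputs and not on the coin outcomes, as required by the definition of a polynomial-time PTM, so $f\in\PrPC$.

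For completeness, I would simulate a given polynomial-time PTM $\ptmone$ inside $\PrT$. First encode configurations of $\ptmone$ as elements of $\al{W}$ and define a one-step function that, given the encoding of a configuration, flips a coin and returns the encoding of $\delta_0$ or $\delta_1$ applied to it; the coin is produced by a random base function $r_{\charone}$ (whose output reveals a fair bit according to whether $\charone$ was appended), and the branch is selected by case distinction, so this step function is typable without any use of recursion. Next, using the input at sufficiently high tiers, generate by (nested) recursion on notation a ``clock'' string of length $P(|x|)$, where $P$ bounds the running time of $\ptmone$; building a clock of the required degree is exactly what the tier hierarchy $\al{W}_0,\al{W}_1,\ldots$ permits. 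Finally iterate the step function along this clock by predicative recursion: the clock lives at a tier strictly above the tier of the configuration being transformed, so the recursion rule $m>k$ applies and the whole simulation is well-typed. Reading the output string off the final configuration by an extraction function in the style of $\SP{\ptmone}$ (itself typable) yields the function computed by $\ptmone$; because each coin flip contributes a factor $\frac{1}{2}$, the probability that the simulation outputs a given string coincides with $\FUNC{\ptmone}$.

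The main obstacle I expect is the completeness direction, and specifically two intertwined points: marshalling the tier discipline so that the polynomial clock sits strictly above the configuration tier while the random base functions $r_{\charone}$ remain available at the tier where the step function is applied, and verifying that the resulting distribution is \emph{exactly} the one computed by $\ptmone$ rather than merely having the same support---that is, that summing $\frac{1}{2^n}$ over the computation paths reproduced by the predicative recursion matches $\FUNC{\ptmone}$ on every output string.
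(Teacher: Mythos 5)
Your soundness direction is essentially workable and, if anything, cleaner than the paper's: you prove a Leivant-style polynomial bound on the lengths of strings in the support of $f(\V{w})$ by induction on the typing derivation and then extract a polynomial-time machine, whereas the paper routes through probabilistic \emph{register} machines and bounds the running time of a simulating machine separately for each closure operation (composition, case, recursion). Both work; your size-bound lemma is the more standard way of seeing why the constraint $m>k$ in the recursion rule prevents exponential blow-up.

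The completeness direction has a genuine gap, and it sits exactly where you located your ``main obstacle''. You propose to encode a whole PTM configuration as a \emph{single} element of $\al{W}$ and claim that the one-step function is ``typable without any use of recursion''. It is not: a Turing machine step rewrites the string at the head position, i.e., in the middle of the encoding, and case distinction only inspects the outermost constructor; reaching the head position requires recursing into the string, and any such recursion strictly demotes the tier of its output below the tier of its input. Once the step function has a type of the shape $\al{W}_m \rightarrow \al{W}_k$ with $m>k$ rather than $\al{W}_k \rightarrow \al{W}_k$, it can no longer be iterated along the clock, because the result of one step lives at too low a tier to be fed into the next. The paper circumvents precisely this: it first replaces the PTM by an equivalent probabilistic \emph{register} machine, whose configurations are tuples of strings on which a single step acts only at the ends (constructors, predecessors, jumps, and a probabilistic jump), hence is flat and tier-preserving; it then iterates the resulting tuple-valued step function by \emph{simultaneous} primitive recursion, one mutually recursive equation per register plus one for the instruction counter; and it separately proves that simultaneous recursion is encodable in plain predicative recursion via a pairing function subject to a size side-condition. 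These two ingredients --- decomposing configurations into components so that each step is tier-preserving, and establishing the predicative encodability of simultaneous recursion --- are the missing ideas; without them the iteration you describe cannot be typed. By contrast, your worry about the output distribution matching $\FUNC{\ptmone}$ exactly is unproblematic once the simulation is in place, since each application of the probabilistic base function contributes an independent factor $\frac{1}{2}$ along every computation path.
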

We don't have enough space to give the details of the proof of
Theorem~\ref{theo:socotiering}. It however proceeds essentially by
showing the following four lemmas, from which the thesis can be easily
inferred:
\begin{varitemize}
\item
  On the one hand, one can prove, by a careful encoding, that a form
  of \emph{simultaneous primitive recursion} is available in
  predicative recursion.
\item
  On the other hand, PTMs can be shown equivalent, in terms of
  expressivity, to probabilistic \emph{register} machines; going
  through register machines has the advantage of facilitating the last
  two steps.
\item
  Thirdly, any function definable by predicative recurrence can be
  proved computable by a polytime probabilistic register machine.
\item
  Lastly, one can give an embedding of any polytime probabilistic
  register machine into a predicatively recursive function, making use
  of simultaneous recurrence.
\end{varitemize}
Characterizing complexity classes of \emph{probabilistic} functions
allows to deal implicitly with concepts like that of
a \emph{polynomial time samplable}
distribution~\cite{BogdanovTrevisan06,Goldreich00}, which is a family
$\{\mathcal{D}_n\}_{n\in\NN}$ of distributions on strings such that a
polytime randomized algorithm produces $\mathcal{D}_n$ when fed with
the string $1^n$. By Theorem~\ref{theo:socotiering}, each of them is
computed by a function in $\PrT$ and, conversely, any predicatively
recursive probabilistic function computes one such family.
}
{

The constructions from Section~\ref{sect:prt} can be easily
generalized to a function algebra on strings in a given alphabet
$\alpone$, which themselves can be seen a \emph{word algebra}
$\al{W}$.  In the following we provide the details of such a
generalization.
\begin{definition}[String Distribution] \label{distr}
A \emph{pseudodistribution} on $\al{W}$ is a function
$D:\fun{\al{W}}{\RRI}$ such that $\sum_{w \in \al{W}}D(W)= 1$. The set
$\distrset{\al{W}}$ is defined as the set of all distribution on
$\al{W}$.
\end{definition}
The functions in our algebra have domain $\al{W}^k$ and codomain
$\distrset{\al{W}}$. The idea, as usual, is that $f(x)(y)=r$ means
that $y$ is the output obtained for the input $x$ with probability
$r$.  Base functions include a function computing the empty string,
denoted $\varepsilon$, and concatenation with any character
$\charone\in\alpone$, denoted by $c_\charone$. Formally we define
these functions as follows:
\begin{align*}
\varepsilon(v)(w)&=  
\left\{ 
  \begin{array}{rl} 
  1 & \mbox{if $w=\varepsilon$;}\\
  0 & \mbox{otherwise.}
  \end{array}  
\right.\\
c_{\charone} (v)(w)&=  
\left\{ \begin{array}{rl} 
1 & \mbox{if $w=\charone \cdot v$;}\\
0 & \mbox{otherwise.}
   \end{array} 
\right.
\end{align*}
Note that, for every $v \in \al{W}$, the length of the word obtained
after the application of one of the constructors $c_a$ is $|v|+1$
with probability $1$. Projections remain available in the usual
form. Indeed the function $\f{\Pi_m^n}{\al{W}^n}{\distrset{\al{W}}}$
is defined as follows:
$$
\Pi_n^k(\V{v})(w) =  
\left\{ \begin{array}{rl} 
1 & \mbox{if $w=v_m$;}\\
0 & \mbox{otherwise}.
\end{array}  
\right.
$$
The only truly random functions in our algebra are 
probabilistic functions in the form $\f{r_{\charone}}{\al{W}}{\distrset{\al{W}}}$,
which concatenates $\alpone$ to the input string (with probability $\frac{1}{2}$,
or leave it unchanged (with probability $\frac{1}{2}$). Formally,
$$
r_\charone(v)(w) =  
\left\{ \begin{array}{rl} 
1/2 & \mbox{if $w=\charone\cdot v$;}\\
1/2 & \mbox{if $w=v$;}\\
0 & \mbox{otherwise.}  
\end{array}         
\right.
$$
Next we recall the concept of composition and recurrence introduced in
Definition \ref{GComp} and Definition \ref{prec} and we instantiate
them to the case of our algebra. Generalized Composition of functions
$\f{f}{\al{W}^n}{\distrset{\al{W}}}$,
$\f{g_1}{\al{W}^k}{\distrset{\al{W}}},\ldots,\f{g_n}{\al{W}^k}{\distrset{\al{W}}}$
as the function 
$\f{f \odot (g_1,\ldots,g_n)}{\al{W}^k}{\distrset{\al{W}}}$ defined as:
$$
\alcomn{f}{(g_1,\ldots,g_n)}{\V{v}}{w}=\sum_{x_1,\ldots,x_n\in\al{W}}\left(f(x_1,\ldots,x_n)(w)\cdot\prod_{1\leq i\leq n} g_i(\V{v})(x_i)\right).
$$
Recurrence over $\al{W}$ takes the form:
\begin{align*}
f({\varepsilon}, \V{v})&=g_{\varepsilon}(\V{v})\\
f(\charone \cdot w,\V{v})&= g_{\charone}(f(w,\V{v}), w, \V{v})\qquad\forall\charone\in\alpone
\end{align*}
where $\f{f}{\al{W} \times \al{W}^m}{\distrset{\al{W}}}$, and
$\f{\charone}{\al{W} \times \al{W}\times \al{W}^m}{\distrset{\al{W}}}$
for every $\charone\in\alpone$. We use
$f=\mathit{rec}(g_{\epsilon},\{g_{\charone}\}_{\charone\in\alpone})$
as a shorthand for previous definition of recurrence. The following
construction is redundant in presence of primitive recursion, but
becomes essential when predicatively restricting it.
\begin{definition}[Case Distinction]\label{Case_Distinction}
If $g_{\varepsilon}:\al{W}^k\rightarrow\distrset{\al{W}}$ and for
every $\charone\in\alpone$,
$g_{\charone}:\al{W}^{k+1}\rightarrow\distrset{\al{W}}$, the function
$h:\al{W}^{k+1}\rightarrow\distrset{\al{W}}$ such that
$h(\varepsilon,\V{y})=g_{\varepsilon}(\V{y})$ and $h(\charone \cdot
w,\V{y})= g_{\charone}(w, \V{y})$ is said to be defined by \emph{case
distinction} from $g_\varepsilon$ and
$\{g_{\charone}\}_{\charone\in\alpone}$ and is denoted as
$\casef{g_\varepsilon,\{g_{\charone}\}_{\charone\in\alpone}}$.
\end{definition}
In the following we will need also the following by definition of
simultaneous recursion:
\begin{definition}\label{def:simrec}
We say that the functions $\V{f}=(f^1,\ldots,f^n)$ are defined by
\emph{simultaneous primitive recursion} over a word algebra $\al{W}$ from
the function $g^j_\charone$ (where $j\in\{1,\ldots,n\}$ and $\charone\in\alpone$
if the following holds for every $j$ and for every $\charone$:
$$
f^j(\charone\cdot v,\V{w})=g_{\charone}^j(f^1(v,\V{w}),\ldots,f^n(v,\V{w}),v,\V{w})
$$
A function $f^i$ as defined above will be indicated with
$\mathit{simrec}^i(\{g^j_\varepsilon\}_{j},\{g^j_\charone\}_{j,\charone})$.
\end{definition}
\begin{example}
This definition allows to define, for instance, two functions $f^0$
and $f^1$ over a word algebra with $\alpone=\{\charone,\chartwo\}$, as follows:
\begin{align*}
f^j (\epsilon, \V{v})&=g^j_\epsilon(\V{v})\qquad\forall j\in\{0,1\}\\
f^j(\charone\cdot w,\V{v})&=g_{\charone}^j(f^0(w,\V{v}), f^1(w,\V{v}), w, \V{v})\qquad\forall j\in\{0,1\}\\
f^j(\chartwo\cdot w,\V{v})&=g_{\chartwo}^j(f^0(w,\V{v}), f^1(w,\V{v}), w, \V{v})\qquad\forall j\in\{0,1\}
\end{align*}
\end{example}
\subsection{Tiering as a Typing System}\label{st}
Now we define our type system which will then be used to introduce the
definition of the class of \emph{predicative probabilistic functions}
and therefore to obtain our complexity result. The type system is
inspired by the tiering approach of Leivant \cite{Leivant}.  The idea
behind tiering consists in working with denumerable many copies of the
underlying algebra $\al{W}$, each indexed by a natural number
$n\in\NN$ and denoted $\al{W}_n$. Type judgments take the form
$f\triangleright\al{W}_{n_1}\times\ldots\times\al{W}_{n_k}\rightarrow\al{W}_m$,
where $f:\al{W}^k\rightarrow\al{W}$. In the following, with slight
abuse of notation, $\V{W}$ stands for any expression in the form
$\al{W}_{i_1} \times \cdots \times \al{W}_{i_j}$.  Typing rules are
given in Figure~\ref{fig:tieringrules}.
\begin{figure}
\begin{center}
\fbox{
\footnotesize
\begin{minipage} {.97\textwidth}
  $$
  \infer[]
  {\epsilon \triangleright \al{W}_k \rightarrow \al{W}_k}{}
  \quad
  \infer[]
  {c_\charone  \triangleright \al{W}_k \rightarrow \al{W}_k}{}
  \quad 
  \infer[]
  {r_\charone  \triangleright \al{W}_k \rightarrow \al{W}_k}{}
  \quad
  \infer[]
  {\Pi_m^n  \triangleright  \al{W}_{n_1}\times \cdots \times  \al{W}_{n_n} \rightarrow \al{W}_{n_m}}{}
  $$
  $$ 
  \infer[]
  {
    f\odot (g_1, \ldots, g_l) \triangleright  \al{W}_{s_1} \times \cdots \times \al{W}_{s_r} \rightarrow \al{W}_l\;
  }
  {
    \{g_i \triangleright   \al{W}_{s_1} \times \cdots \times \al{W}_{s_r} \rightarrow \al{W}_{m_i}\}_{1\leq i\leq l}
    &
    f \triangleright \al{W}_{m_1} \times \cdots \times \al{W}_{m_p} \rightarrow \al{W}_l
  }
  $$
  $$ 
  \infer[]
  {
    case(g_{\varepsilon},\{g_\charone\}_{\charone\in\alpone})  \triangleright \al{W}_k \times \V{W}  \rightarrow \al{W}_l\;
  }
  {
    \begin{array}{c}
      g_{\varepsilon} \triangleright  \V{W} \rightarrow \al{W}_l\\
      \{g_\charone \triangleright   \al{W}_k \times \V{W}  \rightarrow \al{W}_l\}_{\charone\in\alpone}
    \end{array}
  } 
  \qquad
  \infer[]
  {
    rec(g_{\epsilon}, \{g_\charone\}_{\charone\in\alpone})  \triangleright \al{W}_m \times \V{W}  \rightarrow \al{W}_k\;
  }
  {
    \begin{array}{c}
    g_{\varepsilon} \triangleright \V{W} \rightarrow \al{W}_k\qquad m>k\\
    \{g_\charone \triangleright   \al{W}_k \times \al{W}_m \times \V{W}  \rightarrow \al{W}_k\}_{\charone\in\alpone}
   \end{array}
  } 
  $$
\end{minipage}}
\end{center}
\caption{Tiering as a Typing System}\label{fig:tieringrules}
\end{figure}
The formulation of ramified recurrence over a probabilistic word
algebra $\al{W}$ derives from the definition of recurrence, suitably
restricted using types. The idea here is that, when generating
functions by primitive recursion, one passes from a level (tier) $m$
for the domain to a \emph{strictly} lower level $k$ for the
result. This predicative constraint ensures that recursion does not
causes complexity explosion.

Those probabilistic functions $f:\al{W}^k\rightarrow\distrset{\al{W}}$
such that $f$ can be given a type through the rules in
Figure~\ref{fig:tieringrules} are said to be \emph{predicatively
recursive}. More precisely, the class $\PrT$ of all predicatively
recursive functions is defined as follows.
\begin{definition}\label{classPrT}
The class $\PrT$ of \emph{predicatively probabilistic recursive
functions} is the smallest class of functions that contains the basic
functions and is closed under the operation of General Composition
(Definition \ref{GComp}), Primitive Recursion (Definition \ref{prec}),
Case Distinction (Definition \ref{Case_Distinction}) and such that each
function can be given a type through the rules in
Figure~\ref{fig:tieringrules}.
\end{definition}
Next we give the definition of the class of simultaneous recursive
functions $\SrR$.
\begin{definition}\label{classSrR}
The class $\SrR$ of \emph{simultaneous recursive functions} is the
smallest class of functions that contains the basic functions and is
closed under the operation of General Composition
(Definition \ref{GComp} ), Simultaneous Recursion
(Definition \ref{def:simrec}), Case Distinction
(Definition \ref{Case_Distinction} and such that each function can be
given a type through the rules in Figure~\ref{fig:tieringrules}, plus
the rule below:
 $$ 
   \infer []
  { \mathit{simrec}^i(\{g^j_\varepsilon\}_{j},\{g^j_\charone\}_{j,\charone})\triangleright \al{W}_m \times \V{W} \rightarrow \al{W}_k\;
  }
  { \begin{array}{c} \{g^j_{\epsilon} \triangleright \V{W} \rightarrow \al{W}_k\}_{j}\qquad
  m>k\\ \{g^j_\charone \triangleright \al{W}_k^n \times \al{W}_m \times \V{W} \rightarrow \al{W}_k\}_{j,\charone} \end{array}
  }
  $$
\end{definition}

\subsection{Simultaneous Primitive Recursion and Predicatie Recursion}
We can encode Simultaneous Primitive Recursion in Predicative Recursion. 

In fact, according top previous definition, if we have the two
functions $f^0, f^1$ over a word algebra with $\Sigma =\{c_0,c_1\}$,
defined by simultaneous recursion as follows:
$$
f^0 (\epsilon, \V{x})=g^0_\epsilon(\V{x})
$$
$$
f^0(c_0(w),\V{x})= g_{0}^0(f(w,\V{x})^0, f(w,\V{x})^1, w, \V{x})
$$
$$
f^0(c_1(w),\V{x})= g_{1}^0(f(w,\V{x})^0, f(w,\V{x})^1, w, \V{x})
$$
$$
f^1 (\epsilon, \V{x})=g^1_\epsilon(\V{x})
$$
$$
f^1(c_0(w),\V{x})= g_{0}^1(f(w,\V{x})^0, f(w,\V{x})^1, w, \V{x})
$$
$$
f^1(c_1(w),\V{x})= g_{1}^1(f(w,\V{x})^0, f(w,\V{x})^1, w, \V{x})
$$
we can see that we can define a function $\f{\widetilde{f}}{\al{W}^{k+1}}{\al{W}^{2}}$ as follows:
$$
\widetilde{f} (\epsilon, \V{x})=[g^1_\epsilon(\V{x}), g^0_\epsilon(\V{x})]
$$
$$
\widetilde{f}(c_1(w),\V{x})=[g_{1}^1(f(w,\V{x})^0, f(w,\V{x})^1, w, \V{x}),g_{1}^0(f(w,\V{x})^0, f(w,\V{x})^1, w, \V{x})]
$$
$$
\widetilde{f}(c_0(w),\V{x})=[ g_{0}^1(f(w,\V{x})^0, f(w,\V{x})^1, w, \V{x}),  g_{0}^0(f(w,\V{x})^0, f(w,\V{x})^1, w, \V{x})]
$$
Now the codomain of $\widetilde{f}$ can be coded in a single value of $\al{W}$, this function is said $couple_m$. Inversely we can define two functions, $first_m$ and $second_m$ that given a value in $\al{W}$ $first_m$ returns the first value of the couple, and the $second_m$ returns the second values.

Now we proof as this encoding uses at most $m$ time the recurrence.

Firstly we define the length of a word as follows.

\begin{definition}
We define $|w|$ as the height of the parse-tree of the word $w$.
\end{definition}

\begin{lemma}\label{ENCO}
Let $\al{W}$ be word algebra, $m>0$, and using at most $1$ level of recurrence. 
There are functions ($\f{couple_m}{\al{W}\times \al{W}}{\al{W}}$, $\f{first_m}{\al{W}\times \al{W}}{\al{W}}$, $\f{second_m}{\al{W}\times \al{W}}{\al{W}}$) such that:

$$couple_m(u,v)=t$$
$$first_m(t) =u$$ 
$$second_m(t)=v$$
whenever $2|u|+2|v|+2\leq |t|^m$
\end{lemma}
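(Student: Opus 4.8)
The plan is to realise $couple_m$ as a \emph{self-delimiting} pairing obtained by doubling every symbol of each component and inserting a single two-symbol marker between them. Assume, as is standard for the word algebras at hand, that $\alpone$ contains two distinct symbols $\charone\neq\chartwo$. First I would define the doubling map $\mathit{dbl}:\al{W}\rightarrow\al{W}$ by recurrence on notation, with $\mathit{dbl}(\varepsilon)=\varepsilon$ and $\mathit{dbl}(\charone\cdot w)=\charone\cdot\charone\cdot\mathit{dbl}(w)$ (and analogously for every symbol), so that $\mathit{dbl}(w)$ consists exactly of adjacent \emph{equal} pairs and $|\mathit{dbl}(w)|=2|w|$. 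Then I set $couple_m(u,v)=\mathit{dbl}(u)\cdot\charone\chartwo\cdot\mathit{dbl}(v)$, using the marker $\charone\chartwo$, which is the unique adjacent \emph{unequal} pair occurring in the result. The decoders $first_m$ and $second_m$ scan $t=couple_m(u,v)$ in aligned blocks of two symbols: $first_m$ halves each equal block until it meets the unequal marker, returning $u$; $second_m$ discards everything up to and including the marker and halves the remaining equal blocks, returning $v$.

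For correctness I would argue that the decoding is unambiguous. Since $\mathit{dbl}(u)$ and $\mathit{dbl}(v)$ contain only equal adjacent pairs while $\charone\chartwo$ is unequal, and since the decoders always read in \emph{aligned} two-symbol blocks (the marker sits at an even offset, as $|\mathit{dbl}(u)|$ is even), the marker is the unique block that is not a doubled symbol; hence no spurious marker can arise by straddling a block boundary. It follows at once that $first_m(couple_m(u,v))=u$ and $second_m(couple_m(u,v))=v$. For the size bound, $|t|=|\mathit{dbl}(u)|+2+|\mathit{dbl}(v)|=2|u|+2|v|+2$, so $2|u|+2|v|+2=|t|\le|t|^m$ whenever $m\ge 1$, which holds by the hypothesis $m>0$; this is exactly the stated inequality, so the code is always large enough for the projections to invert it.

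Finally I would verify the predicativity/recurrence-level requirement. Each of $\mathit{dbl}$, the concatenations, and the two scanners is defined by a single recurrence over $\al{W}$ with no nested recursive call, so the whole construction consumes one tier-drop: it can be typed with the packed word sitting at tier $m$ and the decoded components emitted at a strictly lower tier $k<m$, matching the recursion rule of Figure~\ref{fig:tieringrules}. The reading of two symbols per step and the location of the marker are handled \emph{inside} the single recurrence by case distinction on the two leading symbols, which by design does not require a further tier; the auxiliary word argument of $first_m$ and $second_m$ supplies the higher-tier copy of $t$ used to drive this recurrence while the result is produced at the lower tier.

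I expect the delicate point to be the tier bookkeeping rather than the combinatorics of decoding. Concretely, one must check that reading in aligned pairs and finding the unique unequal marker can be carried out \emph{without nesting} recursions, so that only one level of recurrence is spent and the strict decrease $m>k$ of the recursion rule is respected; any hidden nested recursion would break both the ``at most one level'' claim and the polynomial size control that the bound $2|u|+2|v|+2\le|t|^m$ is meant to guarantee.
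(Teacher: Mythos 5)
The paper states Lemma~\ref{ENCO} without any proof at all (the text jumps straight from the lemma to the next subsection), so there is no official argument to compare yours against; what follows is an assessment of your proposal on its own terms. Your construction --- double every symbol of each component, separate the two doubled blocks by a single unequal pair $\charone\chartwo$, and decode by scanning aligned two-symbol blocks --- is the standard Leivant-style self-delimiting pairing, and it is almost certainly what the authors intend: it reproduces exactly the constant $2|u|+2|v|+2$ appearing in the statement. The combinatorial core of your argument (all aligned blocks inside $\mathit{dbl}(u)$ and $\mathit{dbl}(v)$ are equal pairs, the marker is the unique unequal aligned block, and alignment is preserved because $|\mathit{dbl}(u)|$ is even) is correct.

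Two caveats. First, your reading of the side condition makes it vacuous: you take $t=couple_m(u,v)$, compute $|t|=2|u|+2|v|+2$, and observe $|t|\le|t|^m$ for $m\ge 1$. But notice that $first_m$ and $second_m$ are declared with arity $\al{W}\times\al{W}\rightarrow\al{W}$ yet applied to a single argument in the statement; in Leivant's original formulation the pairing and the projections carry an extra \emph{padding} word at a higher tier, the encoded pair is written inside that word, and the inequality $2|u|+2|v|+2\le|t|^m$ is a genuine hypothesis saying the padding provides enough room. Your construction can be adapted to that reading, but as written it does not explain the second argument or why the bound involves $|t|^m$ rather than $|t|$. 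Second, the tier bookkeeping is thinner than you concede: concatenation over $\al{W}$ is itself defined by recurrence on its first argument, so $couple_m$ as you define it composes three recurrences ($\mathit{dbl}$ twice plus the concatenation), and the typing rule of Figure~\ref{fig:tieringrules} then pushes the output at least two tiers below the inputs unless you fuse $\mathit{dbl}(u)$ and the concatenation into a single recurrence on $u$ (which is possible: recurse on $u$, prepending two copies of each symbol to the already-built word $\charone\cdot\chartwo\cdot\mathit{dbl}(v)$). Since the lemma exists precisely to control the number of recurrence levels consumed in Lemma~\ref{bbb}, that fusion step needs to be made explicit rather than left to the reader.
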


\subsection{Register Machines vs. Turing Machines}

Register machines are a class of  abstract computational model which, 
when properly defined, are Turing powerful. 
Here we extend the classical  definition of Register Machines to the probabilistic case.


\begin{definition}[Probabilistic Register Machine]\label{PRM}
A Probabilistic Register Machine $(PRM)$ consists of a 
finite set of registers $\Pi=\{\pi_1, \cdots,\pi_r\}=\V{\pi}$  which can store elements belonging to $\al{W}$ 
and of a finite sequence of indexed instructions (called program) which can have the following format:
\begin{description}
\item[$\epsilon(\pi_s)(\pi_l)$]
\item[$c_i(\pi_s)(\pi_l)$] 
\item[$p_i(\pi_s)(\pi_l)$] 
\item[$jump(\pi_s)(\V{m})$] 
\item[$jump_{\rand} (m)$]
\end{description}
where $\pi_s, \pi_l$ are registers, $m$ is the number of the instruction and $c_\charone$ is any constructor of the algebra $\al{W}$ and $\V{m}$ is a vector of $|\Sigma|$ elements of natural number.
\end{definition}

The semantic of previous instructions can be described as follows. We assume that the index of the current instruction is $n$.
\begin{description}
\item[$\epsilon(\pi_s)(\pi_l)$] is the $\epsilon$ instruction, which stores in the register  $\pi_l$ the term resulting from the application of the constructor $\epsilon$ to the register $(\pi_s)$ and then transfer the control to the next instruction $(n+1)$.
\item[$c_{\charone}(\pi_s)(\pi_l)$] is the constructor instruction, which stores in the register  $\pi_l$ the term resulting from the application of the constructor $c_{\charone}$ to the register $(\pi_s)$ and then transfer the control to the next instruction $(n+1)$.
\item[$p_{\charone}(\pi_s)(\pi_l)$] is the predecessor instruction which stores in the register $\pi_l$ the element resulting from the application of the predecessor $p_{\charone}$ to the register $\pi_s$, and then transfer the control to the next instruction $(n+1)$.
\item[$jump(\pi_s)(\V{m})$] is the jump instruction which: jumps to the instruction $m_{\overline{\charone}}$ and stores the result of $p_{\charone}(\pi_s)$ in $\pi_s$  if apply $c_{\charone}$ 
is the first constructor in the value contained in $\pi_s$;  transfer the control to the next instruction $n+1$  if $\pi_s$ contains $\epsilon$.
\item[$jump_{\rand} (m) $] is the jump randomize instruction which jumps to the instruction $m$ with probability $1/2$, or transfer the control to the next  instruction $n+1$ with probability $1/2$.
\end{description}

Below we formalize more precisely the semantics of a PRM in terms of configurations which can be modified by the instructions.
Hence we introduce the following,

\begin{definition}[Configuration of a PRM] \label{ConfigPRM}
Let $\prmone$  be a PRM be as in \ref{PRM}. We define a PRM configuration as a tuple 
$<v_1,\ldots,v_r,n>$ where:
\begin{itemize}
\item the $v_i$s are the values of the registers;
\item $n$ is a natural number indicating the current instruction.
\end{itemize}
We define the set of all configurations with $\CONFR{\prmone}$.
When $n=1$ we have an initial configuration 
for $r$ strings $\V{\strone}$ is indicated with $\INITR{\prmone}{\V{\strone}}$. When $n =max+1$, 
where $max$ is the number of instructions in the program, we have a final configuration said $\FSR{\prmone}{\V{\strone}}$.
\end{definition}

Next we show how previous instructions allow to change a configuration.
\begin{description}

\item[$\epsilon(\pi_s)(\pi_l)$] If we apply the instruction $\epsilon(\pi_s)(\pi_l)$ to the configuration  $<p_1,\ldots, p_r, n>$, we obtain the configuration  $<p_1,\ldots, p_{l-1},p_s ,p_{l+1},\ldots, p_r, n+1>$.

\item[$c_{\charone}(\pi_s)(\pi_l)$] If we apply the instruction $c_{\charone}(\pi_s)(\pi_l)$ to the configuration  $<p_1,\ldots, p_r, n>$, we obtain the configuration  $<p_1,\ldots, p_{l-1}, \charone \cdot p_s ,p_{l+1},\ldots, p_r, n+1>$.

\item[$p_{\charone}(\pi_s)(\pi_l)$]  If we apply the instruction $p_{\charone}(\pi_s)(\pi_l)$ to the configuration  $<p_1,\ldots, a \cdot p_s,\ldots, p_r, n>$, we obtain the configuration  $<p_1, \ldots, {\charone} \cdot p_s, \ldots,  p_{l-1},p_s, p_{l+1},\ldots, p_r, n+1>$.

\item[$jump(\pi_s)(\V{m})$] If we apply $jump(\pi_s),(\V{m})$ to the configuration  $<p_1,\ldots, a \cdot p_s,\ldots, p_r, n>$, we obtain 
the configuration $<p_1,\ldots, p_s,\ldots, p_r, m_{\overline{a}}>$;  If we apply $jump(\pi_s),(\V{m})$ to the configuration  $<p_1,\ldots, p_{s-1},\epsilon,p_{s+1},\ldots, p_r, n>$, we obtain 
the configuration $<p_1,\ldots, p_{s-1},\epsilon,p_{s+1},\ldots, p_r, n+1>$.

\item[$jump_{\rand} (m) $] If we apply $jump_{rand},s$ to the configuration  $<p_1,\ldots,p_r, n>$, we obtain 
the configuration $<p_1,\ldots,p_r, m>$  with probability $1/2$ and the configuration $<p_1,\ldots,p_r, n+1>$  with probability $1/2$.

\end{description}


Note that, according to the previous definition, one 
of the instructions of the machine is $jump_{\rand}$, which gives to the machine the probabilistic 
behavior. So in order to simulate the  behavior of  a PTM we can assume that 
we apply $jump_{\rand}$ after any instruction of another type.

Intuitively, the function computed by a PRM $\prmone$ associates to each input $\strone^r$
 a (pseudo)-distribution which indicates the probability of reaching a configuration in $\FSR{\prmone}{\V{\strtwo}}$
from $\INITR{\prmone}{\V{\strone}}$. It is worth noticing that, differently from the deterministic case, since in a PRM 
the same configuration can be obtained by different computations, the probability 
of reaching a given final configuration is the \emph{sum} of the probabilities of reaching the configuration 
along all computation paths, of which there can be (even infinitely) many. 
So also in this case it is convenient to define the function computed by a PRM through a fixpoint construction, as follows.
First we observe that the meaning of a PRM $\prmone$ program can be defined by using two functions $\delta_0$ and $\delta_1$. In fact 
 as previously mentioned we assume that in each PRM program we use  $jump_{\rand}$ after any instruction of another type. Hence 
 we can consider two functions $\f{\delta_0}{\CONFR{\prmone}}{\CONFR{\prmone}}$ and $\f{\delta_1}{\CONFR{\prmone}}{\CONFR{\prmone}}$ which, given a configuration in input, both produce in output the (unique) configuration resulting 
 from the application of an instruction, when this is different from $jump_{\rand}$. 
 When, on the other hand, $jump_{\rand}$ is used, $\delta_0$ and $\delta_1$ produce respectively the two configurations 
 (with probability $1/2$) resulting from the two branches of the instruction, as previously defined.

We can define a (complete) partial order on the $\confeval$ elements analogously to what we have done for the $PTM$. 
Hence we can now define a functional $FR_R$ on $\confeval$ which will be used to define the function computed by $R$ 
via a fixpoint construction. Intuitively, the application of the functional $FR_R$ describes \emph{one} computation step.
Formally we have the following:
\begin{definition}\label{def:functionalPRM}
Given a PRM $\prmone$, we define a functional $\f{FR_R}{\confeval}{\confeval} $ as:
$$
FR_R(f)(C)= \left\{ \begin{array}{ll} 
      \{\strone^1\}& \mbox{ if } C\in\FSR{M}{\V{\strone}};\\
      \frac{1}{2}f(\delta_0(C))+\frac{1}{2}f(\delta_1(C)) & \mbox{ otherwise}.
    \end{array} \right.
$$
\end{definition}

Using similar arguments to those of Proposition \ref{cont} and Theorem \ref{mon} we can show that 
there exists the least fixpoint of the functional defined above.
Such a least fixpoint, once composed with a function returning $\INITR{R}{\V{\strone}}$ from
$\V{\strone}$, is the \emph{function computed by the register machine $R$} and it is denoted by 
$\FUNCR{R}:\alpone^*\rightarrow\distrset{\alpone^*}$. The set of those functions which
can be computed by any PRMs is denoted by $\PrT$. Moreover, we denote the class of the function computed by a polynomial time register machine as $\PPrT$



Next Lemma shows the relations between PTMs and PRMs.

\begin{lemma}\label{PTMvsPRM}
PTMs are linear time reducible to a PRMs, and PRMs over $\al{W}$ are poly-time reducible to PTMs. 
\end{lemma}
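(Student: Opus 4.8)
The plan is to prove the two reductions separately, in each case by exhibiting an explicit step-by-step simulation and checking both that the time overhead is of the required order and, crucially, that the simulation preserves the entire probability distribution rather than merely the eventual outputs. The probabilistic bookkeeping is the delicate part: since $\FUNC{\ptmone}$ and $\FUNCR{\prmone}$ are each defined as the least fixpoint of a functional that splits every step into the two equiprobable branches $\transone_0$ and $\transone_1$, it suffices to set up a bijection between the coin choices of the two machines under which corresponding computation paths carry equal probability and lead to the same string. I would use throughout the convention already adopted in the text, that in a PRM a $\mathit{jump}_{\rand}$ instruction is inserted after every instruction of another type, so that exactly one coin is flipped per simulated step and the two branching structures align node by node.

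For the direction PTM $\to$ PRM, I would represent a PTM configuration $\conf{\strone}{\charone}{\strtwo}{\stone}$ inside a PRM by keeping the two sides of the tape in two registers, treated as stacks over $\al{W}$: pushing and popping a symbol are exactly the constructor $c_\charone$ and predecessor $p_\charone$ operations, and reading the symbol under the head is performed by the $\mathit{jump}$ instruction, which branches on the leading constructor of a register (and pops it back). The finite control state $\stone$ is encoded in the PRM program counter, i.e.\ in the block of instructions currently executing. Each single move of $\ptmone$ (read, write, shift, change state) is then simulated by a constant number of PRM instructions, while the probabilistic choice between $\transone_0$ and $\transone_1$ is realized by the trailing $\mathit{jump}_{\rand}$. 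Hence one PTM step costs $O(1)$ PRM instructions, giving a linear-time reduction, and because each coin flip of $\ptmone$ maps to exactly one $\mathit{jump}_{\rand}$, the induced distributions on final strings coincide.

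For the direction PRM $\to$ PTM, I would take a multitape PTM with one work tape per register $\pi_1,\dots,\pi_r$, the current instruction index being kept in the finite control since the program is fixed and finite; recall that multitape PTMs reduce to single-tape ones with only polynomial slowdown. Each instruction is simulated directly: $c_\charone(\pi_s)(\pi_l)$ and $p_\charone(\pi_s)(\pi_l)$ copy the contents of register $s$ to register $l$ while prepending or deleting a symbol; $\mathit{jump}(\pi_s)(\V{m})$ inspects the leading symbol of register $s$ and updates the instruction index; and $\mathit{jump}_{\rand}(m)$ is implemented by the two transition functions $\transone_0,\transone_1$, each with probability $\tfrac{1}{2}$. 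The key quantitative observation is that every instruction lengthens a register by at most one symbol, so after $t$ steps on an input of size $n$ each register holds a string of length $O(n+t)$; therefore the per-step cost of copying and scanning is polynomial in $n+t$, and the whole computation up to halting is simulated in polynomial time. Once more the one-to-one correspondence between $\mathit{jump}_{\rand}$ flips and PTM coin flips guarantees that $\FUNC{\ptmone}$ reproduces $\FUNCR{\prmone}$ exactly.

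The main obstacle is not the mechanical encoding but the verification that the two reductions are faithful \emph{as maps into $\distrset{\alpone^*}$}: one must check that the bijection between computation paths respects both the per-node probability $\tfrac{1}{2^k}$ and the labelling by final strings, so that summing over leaves yields identical pseudodistributions. This is precisely where the ``insert a $\mathit{jump}_{\rand}$ after every instruction'' convention pays off, since it makes the two computation trees isomorphic level by level and reduces the faithfulness claim to the elementary fact that isomorphic trees with matching leaf labels induce the same distribution in $\distrset{\alpone^*}$.
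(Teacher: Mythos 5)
Your proposal is correct and follows essentially the same route as the paper's own proof: tape halves stored in registers (the paper uses three registers holding $w^r$, $a$, $v$ with the state in the instruction index) for the PTM-to-PRM direction, and a multitape PTM with one tape per register for the converse, with the same observation that per-step copying costs at most the current register length and hence squares the polynomial. Your additional care about the path-by-path correspondence of coin flips and the induced pseudodistributions is a point the paper's informal proof glosses over, but it does not change the argument's structure.
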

\begin{proof}
A single tape PTM $\ptmone$ can be simulated by a PRM $\prmone$ that has tree registers. A configuration $<w,a,v,s>$ of $\ptmone$ can be coded by the configuration $\prmone$ $<[w^r,a,v],s>$ where $s^r$ denotes the reverse of the string $s$. 
Each move of $\ptmone$ is simulated by at most $2$ moves of $\prmone$. 
In order to simulate the probabilistic part given by the functions $\delta_0$ and $\delta_1$ we use 
and instruction $jump_{rand}$ by assuming, for example, that if $jump_{rand}$ allows to jump to $m$ we simulate $delta_0$ (that is at the index $m$ we have the PRM code simulating $\delta_0$) otherwise we simulate  $\delta_1$.   
Conversely, a PRM $\prmone$ over $\al{W}$ with $m$ registers is simulated by an $m-tapes$ PTM $\ptmone$. Some move of $\prmone$ may require copying the contents of one register to another for which $\ptmone$ may need as many steps to complete as the maximum of the current lengths of the corresponding tapes. Thus $\prmone$ runs in time $O(n^k)$, then $\ptmone$ runs in time $O(n^{2k})$.
\end{proof}

\subsection{Poly-time Soundeness}
In this section we prove that any function definable by predicative recurrence is computable by a polynomial time probabilistic register machine said PPRM. 

From the lemma \ref{PTMvsPRM} we  can derive  that $\PPrT= \PrPC$.
Hence, in order to prove the equivalence result we first show that a predicative recurrence can be computed by a 
PPRM. This result is not difficult and is proved by exhibiting a PPRM which simulate 
the basic predicative recurrence functions and by showing that $\PPrT$ is closed by composition, primitive recursion, and case distinction, 
that is, we can construct a PPRM which simulates these operations (on the machine representing  predicative recurrence functions).

We start with the following Lemmata.

\begin{lemma}[Basic Functions and PPRM Computability]\label{PRMBasic}
All Basic Probabilistic Functions are computable by a PPRM.
\end{lemma}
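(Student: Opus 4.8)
The plan is to dispatch the four basic probabilistic functions one at a time, exhibiting for each a short PRM program that computes it and checking that the program halts within a number of steps bounded by a polynomial (in fact constant or linear) in the input size, independently of the random choices. Since all four functions are either unary or simply select one among several arguments, the programs are tiny, and the content of the lemma is essentially that the primitive PRM instructions were designed to mirror exactly these basic functions.

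First I would treat the three deterministic basic functions, which correspond almost verbatim to single PRM instructions. The empty-string function $\varepsilon$ is computed by one $\epsilon(\pi_s)(\pi_l)$ instruction, which writes $\varepsilon$ into the output register regardless of its source, so the halting configuration carries $\varepsilon$ with probability $1$. The concatenation function $c_\charone$ is computed by a single $c_\charone(\pi_s)(\pi_l)$ instruction, whose semantics store $\charone\cdot p_s$ into the output register, matching $c_\charone(v)(w)=1$ iff $w=\charone\cdot v$. The projection $\Pi_m^n$ is computed by a program that transfers the contents of the $m$-th input register into the output register; this costs either a constant number of moves (if that register may serve directly as the output) or a number of predecessor/constructor moves linear in the length of the selected word (if a fresh copy must be built). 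In all three cases the machine never executes $jump_{\rand}$, so it is purely deterministic and the pseudodistribution it computes is the Dirac distribution concentrated on the correct output, as required by the definitions of the basic functions; the running time is plainly polynomial in the input length.

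The only genuinely probabilistic case is $r_\charone$, and here I would use the $jump_{\rand}$ instruction, which is the sole source of randomness in a PRM. The program keeps the input $v$ in the output register and then executes $jump_{\rand}(m)$: on the branch taken with probability $\frac{1}{2}$ it applies $c_\charone$ to the output register, yielding $\charone\cdot v$, while on the complementary branch, taken with probability $\frac{1}{2}$, it leaves the register unchanged and halts. By the semantics of $jump_{\rand}$ the two halting configurations carry the strings $\charone\cdot v$ and $v$, each with probability $\frac{1}{2}$, which is exactly the pseudodistribution prescribed by $r_\charone(v)$. This program again uses only a constant number of instructions, so it runs in polynomial time and witnesses $r_\charone\in\PPrT$.

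I do not expect any real obstacle: the lemma is a direct verification rather than an argument with a hard core. The only points requiring a little care are matching the two probabilities in the $r_\charone$ case, which is immediate from the defined semantics of $jump_{\rand}$, and, for projection, confirming that the copy loop terminates in a number of steps linear in the length of the selected argument so that the polynomial time bound is met uniformly over all random runs.
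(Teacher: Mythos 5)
Your proposal is correct and follows essentially the same route as the paper: each basic function is realized directly by the corresponding PRM instruction ($c_\charone$ verbatim, $\varepsilon$ and $\Pi^n_m$ via the copying instruction $\epsilon(\pi_s)(\pi_l)$, and $r_\charone$ via $jump_{\rand}$ followed by $c_\charone$ on one branch), with constant or linear running time. The only minor quibble is that, under the paper's configuration semantics, $\epsilon(\pi_s)(\pi_l)$ copies the contents of $\pi_s$ into $\pi_l$ rather than writing the empty string, so the empty-string function is obtained by copying from a register known to hold $\varepsilon$ --- exactly as the paper does --- but this does not affect the substance of your argument.
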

\begin{proof}
We need to show that for every basic function defined in Definition \ref{basprobfun} we can construct a PPRM that computes such a function. 
The proof is immediate for functions, $c_{\charone}$, by observing that it is included in the set of PPRM operations. 
The function $\epsilon$ is simulated by using the instruction $\epsilon$ (on an empty register).
The function $\Pi$ is simulated by the instructions $\epsilon(\pi_s)(\pi_l)$.
Finally the function $\rand$ can be simulated by the instructions $jump_{\rand}$ and $c_{\charone}$.  
\end{proof}

\begin{lemma}[Generalized Composition and PPRM Computability] \label{PRMCompo}
Given PPRM-computable $\f{f}{\NN^n}{\distrset{\NN}}$, 
and $\f{g_1}{\NN^k}{\distrset{\NN}},\ldots,\f{g_n}{\NN^k}{\distrset{\NN}}$  
the function $\f{f \odot (g_1,\ldots,g_n)}{\NN^k}{\distrset{\NN}}$ is
itself PPRM computable
\end{lemma}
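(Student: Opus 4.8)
The plan is to mimic the multi-tape construction used for Turing machines in Lemma~\ref{Compo}, replacing tapes by registers and, crucially, keeping track of the polynomial time bounds so that the resulting register machine still runs in polynomial time. By induction we may assume that $f$ is computed by a PPRM $\prmone_f$ running in time bounded by a polynomial $p_f$, and that each $g_i$ is computed by a PPRM $\prmone_{g_i}$ running in time bounded by a polynomial $p_{g_i}$. The goal is then to assemble these subroutines into a single PPRM $\prmone$ computing $f\odot(g_1,\ldots,g_n)$ and to check that its running time is again polynomial.

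First I would build $\prmone$ so that its registers are partitioned into three groups: a group holding a copy of the input $\V{x}$, a group of $n$ registers destined to hold the outputs $z_1,\ldots,z_n$ of the $g_i$, and the working registers needed by the subroutines $\prmone_{g_i}$ and $\prmone_f$. The machine $\prmone$ then operates in phases: it copies the input into the registers used by $\prmone_{g_1}$ and runs $\prmone_{g_1}$, storing its (random) output in the first output register; it repeats this for $g_2,\ldots,g_n$; finally it moves the contents $z_1,\ldots,z_n$ of the output registers into the input registers of $\prmone_f$, runs $\prmone_f$, and returns its result. Since $jump_{\rand}$ is the only source of randomness and each subroutine is itself a PPRM making independent coin flips, the distribution produced by $\prmone$ on input $\V{x}$ is obtained by summing over all the random choices of the subroutines, which is exactly $\sum_{z_1,\ldots,z_n} f(z_1,\ldots,z_n)(y)\cdot\prod_{1\leq i\leq n} g_i(\V{x})(z_i)$, i.e.\ $\alcomn{f}{(g_1,\ldots,g_n)}{\V{x}}{y}$ as demanded by Definition~\ref{GComp}.

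The step I expect to be the main obstacle is certifying the polynomial time bound: unlike the plain computability statement of Lemma~\ref{Compo}, here we must guarantee that composing polytime machines stays polytime, and in particular that the intermediate outputs do not blow up in size. The delicate point is that a register machine can in principle duplicate a whole register in a single instruction, so a naive step-count does not immediately bound register sizes. I would resolve this by appealing to Lemma~\ref{PTMvsPRM}: since each $\prmone_{g_i}$ runs in polynomial time, the Turing machine simulating it also runs in polynomial time, and therefore its output $z_i$ has size polynomial in $|\V{x}|$. Consequently the tuple $(z_1,\ldots,z_n)$ fed to $\prmone_f$ has polynomial size, so $\prmone_f$ runs for a number of steps bounded by $p_f$ evaluated at a polynomial in $|\V{x}|$, which is again polynomial in $|\V{x}|$. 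Adding the polynomially many steps spent copying the input and transferring the outputs, the total running time of $\prmone$ is bounded by a polynomial in $|\V{x}|$, since polynomials are closed under composition and addition. This shows $f\odot(g_1,\ldots,g_n)\in\PPrT$ and completes the argument.
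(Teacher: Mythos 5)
Your construction is essentially the paper's own: the authors likewise build a single PPRM whose registers are split into input copies, per-$g_i$ working blocks, output registers that become the inputs of $f$, and a result register, and then bound the total running time by summing the (polynomial) costs of the copying phase, the $n$ subroutine runs, and the final run of $f$. If anything you are more careful than the paper on the one delicate point, the size of the intermediate outputs $z_1,\ldots,z_n$: the paper's bound is stated directly in terms of the lengths of the register contents without explicitly arguing they stay polynomial in $|\V{x}|$, whereas your detour through Lemma~\ref{PTMvsPRM} (or, more directly, the observation that each PRM instruction grows a register by at most one symbol per step) closes that gap.
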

\begin{proof}
We give an intuitive proof.
We take a PPRM, said $\prmone_s$ with $s$ registers.
The first $k$ registers have saved the input, the next $k \cdot n$ registers $\prmone_s$ computes the $g_1,\cdots, g_n$ functions. For each $g_i$ the machine saves the result on the registers $(k \cdot n) + i$, with $1 \leq i \leq n$. These registers became the input registers for computing $f$. Finally in the last register is saved the result.
 $\prmone_s$  operates as follows:
\begin{varenumerate}
\item $\prmone_s$ copies all $k$ registers on the $k \cdot n$ registers. The computational cost of this operation is $n\cdot k$, because it is implemented by the instruction $\epsilon(\pi_{l})(\pi_{s})$;
\item $\prmone_s$ computes the respective functions $g_i$ with $1 \leq i \leq n$ and saves the results in the registers $(k \cdot n) + i$. These functions are by hypothesis polynomial time computable;
\item $\prmone_s$ computes the function $f$ that by hypothesis polynomial time computable.
\end{varenumerate}
Finally $\prmone_s$ computes the function $f \odot (g_1,\ldots,g_n)$ in time 
\begin{align*}
z &= k \cdot n + \sum_{i=1}^n( max(|pi_i|)^{s_i}+v_i )+ max(pi_{(k \cdot n) + i})^t+w \\
    &\leq  (k \cdot n) + (n +1) \cdot (max (|pi_i|, |pi_{(k \cdot n) + i}|)^{max(s_{i},t)}+max(v_i,w))\\
    &< (n +1)\cdot (y^m+q +k)
\end {align*}
where $pi$ denotes the element saved on the register, $max (|pi_i|, |pi_{(k \cdot n) + i}|)=y, max(s_i,t)=m$ and $max(v_i,w) = q$.
\end{proof}

\begin{lemma}[Case Distinction and PPRM-Computability] \label{PRMCase}
Given PPRM-Computable $\f{g_{\{{\charone}\}_{\charone\in\alpone}}}{\NN^{k+2}} {\distrset{\NN}}$, and $\f{g_\varepsilon}{\NN^k}{\distrset{\NN}}$,
the function $\casef{g_\varepsilon,\{g_{\charone}\}_{\charone\in\alpone}}$ is itself PPRM-Computable.
\end{lemma}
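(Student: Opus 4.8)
The plan is to build a PPRM for $h=\casef{g_\varepsilon,\{g_\charone\}_{\charone\in\alpone}}$ directly from the PPRMs computing $g_\varepsilon$ and the $g_\charone$, exploiting the fact that the $jump$ instruction of Definition~\ref{PRM} performs exactly the branching on the leading constructor required by the defining equations of $h$ (Definition~\ref{Case_Distinction}). First I would fix registers so that the first argument of $h$ — the word being analysed — sits in $\pi_1$ and the remaining arguments $\V{y}$ sit in $\pi_2,\ldots,\pi_{k+1}$. The whole construction then starts with a single instruction $jump(\pi_1)(\V{m})$, whose jump vector $\V{m}$ will be wired below.

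By the semantics of $jump$, if $\pi_1$ contains $\varepsilon$ the control falls through to the next instruction; there I would place the code of the PPRM computing $g_\varepsilon$ fed with $\V{y}$, so that this branch returns $g_\varepsilon(\V{y})=h(\varepsilon,\V{y})$. If instead $\pi_1$ holds $\charone\cdot w$, the same instruction stores $p_\charone(\pi_1)=w$ back into $\pi_1$ and transfers control to the instruction indexed by $m_{\overline{\charone}}$; at that address I would place the code of the PPRM computing $g_\charone$, fed with the now-stripped register $\pi_1=w$ together with $\V{y}$, so that this branch returns $g_\charone(w,\V{y})=h(\charone\cdot w,\V{y})$. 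Laying the $|\alpone|+1$ blocks one after another and choosing $\V{m}$ so that each component $m_{\overline{\charone}}$ points to the start of the corresponding block yields a machine whose output distribution is exactly $h$ on every input.

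For the polynomial-time bound I would argue as in Lemma~\ref{PRMCompo}: the only overhead added on top of the branch machines is the single $jump$ together with the constant number of $\epsilon$-copy instructions needed to place $\V{y}$ (and $w$) in the input registers expected by the branch machines, whose cost is linear in the size $N$ of the input to $h$. Hence the running time of the constructed machine is at most $c\cdot N + \max\{T_\varepsilon(N),\max_{\charone\in\alpone} T_\charone(N)\}$, where $T_\varepsilon, T_\charone$ are the running times of the branch machines, which are polynomial by hypothesis and are evaluated on arguments no larger than the input of $h$. Since the maximum of finitely many polynomials is again a polynomial, the whole machine runs in polynomial time, so $h$ is PPRM-computable.

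I expect no genuine conceptual obstacle: case distinction matches the $jump$ instruction almost verbatim, which is precisely why (as already remarked) it needs no predicative/tiering side condition. The only point requiring care is the bookkeeping — correctly wiring the jump targets $\V{m}$ to the branch blocks and checking that, after the predecessor applied implicitly by $jump$, the register $\pi_1$ indeed holds $w$ in the position expected by the machine computing $g_\charone$ — so that the assembled program both stays polynomial and computes the intended distribution on all inputs.
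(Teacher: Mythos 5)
Your proposal is correct and follows essentially the same route as the paper's own proof: a single $jump(\pi_s)(\V{m})$ instruction dispatches on the leading constructor of the case-analysed argument (stripping it in passing), control is routed either to the block computing $g_\varepsilon$ or to the block at $m_{\overline{\charone}}$ computing $g_\charone$ on the stripped word and $\V{y}$, and the running time is a constant for the jump plus the (polynomial) time of whichever branch machine is invoked. The only differences are cosmetic bookkeeping choices (which register holds the scrutinised word), so no further comparison is needed.
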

\begin{proof}
The function Case Distinction is implemented by a PPRM, said $\prmone_{case}$, that computes it as follows. 
The first $k+1$ registers contain the input and on the last register we have the result. 
$\prmone_{case}$ operates as follows:
\begin{varenumerate}
\item $\prmone_{case}$ applies the operation $jump(\pi_{k+1})(\V{m})$;  if $pi_{k+1} = \epsilon$ the machine goes to the instruction $i+1$ where there is saved the first instruction in order to compute $g_\varepsilon$, otherwise the machine jump at the instruction $m_{\charone}$ corresponding at the first (in $pi_{k+1}$) constructor function $c_{\charone}$ and it saves the result on $pi_{k+1}$ register. The instruction $m_{\charone}$ is the first one which allows  to compute $g_{\charone}$.
\item $\prmone_{case}$ computes the function $g_\varepsilon$ or $g_{\charone}$ and it saves the result on the last register.
\end{varenumerate}
Finally $\prmone_{case}$ computes the function Case Distinction  in time $z$, and $z$ is a polynomial time because:
\begin{align*}
z &= c + \left\{ \begin{array}{rl} 
 |s_{\varepsilon}|^{k_\varepsilon}+r_{\varepsilon} & $if $ pi_{k+1}=\epsilon  \\
 |s_{\charone}|^{k_\charone}+r_{\charone} & $ if $ pi_{k+1}=p_{\charone}
  \end{array} \right.
\end {align*}
where $c$ is the time constant used from the machine for computing the function $jump$.

\end{proof}

\begin{lemma}[Primitive Recursion and PPRM-Computability] \label{PRMRecur}
Given PPRM-Computable $\f{g}{\NN^{k+2}} {\distrset{\NN}}$, and $\f{f}{\NN^k}{\distrset{\NN}}$,
the function $\f{\primrec{f}{g}}{\NN^{k+1}}{\distrset{\NN}}$ is itself PPRM-Computable.
\end{lemma}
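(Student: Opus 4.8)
The plan is to exhibit a PPRM $\prmone$ computing $\primrec{f}{g}$ by iterating the submachine for $g$ on top of the submachine for $f$, both of which run in polynomial time by hypothesis. The machine keeps the inputs $\V{x}$ in a fixed block of registers, maintains one register as a counter that tracks how much of the recursion argument remains to be processed, and uses an accumulator register to hold the current value $h(\V{x},i)$. Concretely, $\prmone$ first copies the inputs into its working registers and runs the submachine for $f$ to place the base value $f(\V{x})$ into the accumulator. It then enters a loop controlled by a $jump(\pi_s)(\V{m})$ instruction on the counter: at each pass it strips one constructor from the counter with a predecessor instruction, assembles the argument tuple $(\V{x},i,h(\V{x},i))$ in the input registers of $g$, runs the submachine for $g$ so as to overwrite the accumulator with $g(\V{x},i,h(\V{x},i))$, and returns to the loop head. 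When the counter is exhausted, the accumulator holds the desired value and is copied to the output register.

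First I would verify correctness by induction along the structure of the recursion argument: after the $i$-th pass the accumulator holds exactly $h(\V{x},i)$, the base case being guaranteed by the initial call to $f$ and the step case by the defining equation of $\primrec{f}{g}$. Next I would count iterations: since the loop consumes the recursion argument one constructor at a time, the number of passes is bounded by the size of that argument, hence is linear in the input size. Each pass performs a constant number of bookkeeping instructions (register copies and one predecessor on the counter) together with a single invocation of the submachine for $g$, which by hypothesis is polynomial in the size of the tuple it is fed.

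The main obstacle is bounding the size of the accumulated intermediate value: feeding the output of $g$ back into $g$ could in principle make $h(\V{x},i)$ grow from one pass to the next, so that later invocations become expensive and the overall bound degenerates to something superpolynomial. This is exactly where the predicative tiering discipline is indispensable. Because the recursion rule of Figure~\ref{fig:tieringrules} forces the recursion argument to lie at a tier $m$ strictly above the tier $k$ of the result, a value produced at tier $k$ during the iteration can never be reused as a recursion argument; the standard tiering invariant then guarantees that every value occurring at tier $k$ stays bounded by a fixed polynomial in the sizes of the original inputs. Consequently each application of $g$ runs in time polynomial in the original input size, and with only polynomially many passes the total running time of $\prmone$ is polynomial. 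I would conclude by collecting these contributions into a single explicit polynomial bound, in the same style as the time estimate carried out in Lemma~\ref{PRMCompo}.
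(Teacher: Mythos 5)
Your construction is the same as the paper's: a register machine that computes the base case, then loops under the control of a $jump(\pi_{k+1})(\V{m})$ instruction, stripping one constructor from the recursion argument per pass and invoking the submachine for $g$ to update an accumulator, so that the number of passes is linear in the input and the total cost is (number of passes) times (cost per pass). Where you genuinely diverge is in what you do with the cost per pass. The paper's proof simply writes the per-iteration cost as a fixed polynomial $|s_{\charone}|^{k_\charone}+r_\charone$ and multiplies by $|p_{k+1}|$, silently assuming that the argument fed to $g$ at every iteration has size polynomial in the \emph{original} input; it never confronts the possibility that the accumulator grows from pass to pass. You correctly identify this as the crux: without some restriction, primitive recursion over polytime step functions is \emph{not} polytime (iterated doubling already breaks it), so the lemma as literally stated cannot be proved, and the missing ingredient is precisely the tiering constraint $m>k$ from Figure~\ref{fig:tieringrules}, which prevents tier-$k$ values from driving further recursions and hence keeps them polynomially bounded. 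This makes your argument more honest about where the polynomial bound actually comes from. The one caveat is that the ``standard tiering invariant'' you invoke --- that every tier-$k$ value arising during the computation stays polynomial in the original input sizes --- is itself the substantive content of Leivant-style soundness and would need its own induction over the typing derivation; you cite it rather than prove it, but at the level of rigor of the paper's own proof that is a reasonable place to stop, and your version at least names the lemma that is being used, whereas the paper's does not.
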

\begin{proof}
We give an intuitive proof.
We take a PPRM, said $\prmone_{rec}$.
The first $k+1$ registers contain the input, in the next $k$ registers we save the input, in the $2k+2^{th}$ register we have the result of intermediate computations, and in the last register we have the result. We assume that the $k+1^{th}$ input is saved inverted on the $k+1^{th}$ register.
$\prmone_{rec}$ operates as follows:
\begin{varenumerate}
\item  $\prmone_{rec}$ computes $g_\varepsilon$ and saves the results in the $2k+2^{th}$;
\item $\prmone_{rec}$ applies an operation of $jump(\pi_{k+1}),(\V{m})$;
\item  if $pi_{k+1} = \epsilon$ the machine goes to the instruction $i+1$ where it saves the result on the last registers and then stop, otherwise the machine jump to the instruction $m_{\charone}$  corresponding at the function predecessor $p_{\charone}$ and it saves the result on $pi_{k+1}$ register. $m_{\charone}$  is the first instruction needed in order to compute $g_{\charone}$.
\item $\prmone_{rec}$ jump at the instruction $2$ if it is not stopped before.
\end{varenumerate}
Finally $\prmone_{rec}$ computes the function $\primrec{f}{g}$ in time $z$, and $z$ is a polynomial time because:
\begin{align*}
z &= c \cdot |p_{k+1}| + |p_{k+1}| \cdot \left\{ \begin{array}{rl} 
 |s_{\varepsilon}|^{k_\varepsilon}+r_{\varepsilon} & $if $ pi_{k+1}=\epsilon  \\
 |s_{\charone}|^{k_\charone}+r_{\charone} & $ if $ pi_{k+1}=p_{\charone}
  \end{array} \right.\\
  & \leq  max (|p_{k+1}|,|s_{\varepsilon}|,|s_{\charone}|) \cdot(c+max (r_{\varepsilon},r_{\charone})
) + max (|p_{k+1}|,|s_{\varepsilon}|,|s_{\charone}|)\cdot max (|p_{k+1}|,|s_{\varepsilon}|,|s_{\charone}|)^{max(k_\varepsilon, k_\charone)}\\
&=  x ^ t + x \cdot d
\end {align*}
where $c$ is the time constant used from the machine for computes the function $jump$, $x=max (|p_{k+1}|,|s_{\varepsilon}|,|s_{\charone}|)$, $t=max(k_\varepsilon, k_\charone)+1$ and $d=c+max (r_{\varepsilon},r_{\charone})$.

\end{proof}

\begin{lemma}[PPRM Computable Functions are Predicative Recurrence Functions]\label{PrPCPrT}
$\PrT \subseteq  \PPrT$
\end{lemma}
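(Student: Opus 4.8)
The plan is to argue by induction on the typing derivation that assigns $f$ a tier via the rules of Figure~\ref{fig:tieringrules}, equivalently on the construction of $f$ as an element of $\PrT$ in the sense of Definition~\ref{classPrT}. Every such $f$ is built from the basic functions by general composition, primitive recursion, and case distinction, each application respecting the tiering discipline, and the four closure lemmas proved above are tailored exactly to these constructors. The induction therefore reduces to matching each rule of the derivation to the corresponding lemma.

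At the leaves the function is basic --- one of $\varepsilon$, $c_\charone$, $r_\charone$, or a projection --- and Lemma~\ref{PRMBasic} supplies a PPRM computing it. For the inductive step I would case on the last rule applied. If the derivation ends with the composition rule, so $f = h \odot (g_1,\ldots,g_n)$, the induction hypothesis gives PPRMs for $h$ and each $g_i$, and Lemma~\ref{PRMCompo} combines them into a PPRM for $f$. If it ends with the case-distinction rule, so $f = \casef{g_\varepsilon,\{g_\charone\}_{\charone\in\alpone}}$, the hypothesis gives PPRMs for the branches and Lemma~\ref{PRMCase} yields one for $f$. If it ends with the recurrence rule, so $f = \primrec{g_\varepsilon}{\{g_\charone\}_{\charone\in\alpone}}$, the hypothesis gives PPRMs for the base and step functions and Lemma~\ref{PRMRecur} produces a PPRM for $f$. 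This exhausts the rules of Figure~\ref{fig:tieringrules}, so $f \in \PPrT$.

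The subtle point --- and the true content of the argument, as opposed to mere PRM-computability --- is that the \emph{polynomial} time bounds furnished by the individual lemmas must compose to a single global polynomial. The only clause that endangers this is recurrence: a direct simulation iterates the step function a number of times equal to the length of the recurrence argument, and were the intermediate results fed back into the step allowed to grow, the running time would explode exponentially, exactly as unrestricted primitive recursion does. This is precisely where the side-condition $m>k$ of the recurrence rule is indispensable: by forcing the recurrence argument to sit at a strictly higher tier than the result, it forbids the output of one recursion from ever driving another, and thereby bounds the length of every intermediate value by a polynomial in the higher-tier inputs alone. I would therefore carry a size invariant alongside the induction --- namely that a function of type $\al{W}_{n_1}\times\cdots\times\al{W}_{n_k}\rightarrow\al{W}_m$ has output length polynomially bounded in its inputs --- so that when Lemma~\ref{PRMRecur} is applied the data sizes entering its time estimate are already under polynomial control and the polynomial degree accrued by nested recursions stays capped by the finite tiering height of the derivation. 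Establishing and propagating this invariant is the genuine obstacle; once it is in place, composing the per-lemma estimates into a global polynomial bound is routine.
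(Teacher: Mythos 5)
Your proof takes essentially the same route as the paper, whose own argument for this lemma is simply to invoke Lemmata \ref{PRMBasic}, \ref{PRMCompo}, \ref{PRMRecur} and \ref{PRMCase} and let the induction on the construction of $f$ do the rest. Your additional observation --- that the per-lemma polynomial bounds only compose because the tiering side condition $m>k$ allows a polynomial size invariant to be carried through the recurrence clause --- is a genuine refinement that the paper's one-line proof (and the rather optimistic time estimate inside Lemma \ref{PRMRecur}) leaves implicit.
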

\begin{proof}
We have that is proved by Lemma \ref{PRMBasic}, \ref{PRMCompo}, \ref{PRMRecur} and \ref{PRMCase}.
\end{proof}

\begin{lemma}[PPTM Computable Functions are Predicative Recurrence Functions]\label{PrTPrPC}
$\PrT \subseteq  \PrPC$ 
\end{lemma}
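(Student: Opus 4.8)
The plan is to obtain this inclusion almost for free by chaining the two preceding results, namely Lemma~\ref{PrPCPrT} and Lemma~\ref{PTMvsPRM}. Recall that Lemma~\ref{PrPCPrT} already establishes $\PrT\subseteq\PPrT$: every predicatively recursive function is computable by a polynomial-time probabilistic register machine. What remains is to pass from register machines to Turing machines while preserving the polynomial time bound, and this is exactly the content of the PRM-to-PTM direction of Lemma~\ref{PTMvsPRM}.

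Concretely, I would argue as follows. Take any $f\in\PrT$. By Lemma~\ref{PrPCPrT} there is a PRM $\prmone$ over $\al{W}$ computing $f$ in time $O(n^k)$ for some $k$. By Lemma~\ref{PTMvsPRM}, $\prmone$ is simulated by a multi-tape PTM $\ptmone$ running in time $O(n^{2k})$; since a multi-tape PTM can in turn be simulated by a single-tape PTM with only a polynomial slowdown, $f$ is computed by a PTM in polynomial time, i.e.\ $f\in\PrPC$. Hence $\PrT\subseteq\PrPC$. Equivalently, one may first record the equality $\PPrT=\PrPC$ that the text already derives from Lemma~\ref{PTMvsPRM}, and then simply read off $\PrT\subseteq\PPrT=\PrPC$ using Lemma~\ref{PrPCPrT}.

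The only point that genuinely requires care --- and the step I would single out as the main, if modest, obstacle --- is checking that polynomiality is truly preserved by the reduction. Two things must be verified. First, the degree transformation $O(n^k)\mapsto O(n^{2k})$ supplied by Lemma~\ref{PTMvsPRM} keeps the bound polynomial, which it plainly does. Second, and more subtly, the notion of running time used throughout this development is the \emph{worst-case} one: a PTM is said to run in time bounded by $T$ precisely when it halts within $T(|x|)$ steps \emph{independently} of the random choices it makes. I would therefore confirm that each PRM step --- including the genuinely probabilistic instruction $jump_{\rand}$ --- is simulated by a number of PTM steps that is bounded uniformly over the two coin outcomes, so that the $O(n^{2k})$ bound holds along \emph{every} computation path rather than merely in expectation. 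Granting this uniformity, which is immediate from the step-by-step simulation described in the proof of Lemma~\ref{PTMvsPRM}, the inclusion $\PrT\subseteq\PrPC$ follows at once.
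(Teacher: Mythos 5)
Your proposal is correct and follows essentially the same route as the paper, whose proof of Lemma~\ref{PrTPrPC} is precisely the one-line combination of Lemma~\ref{PrPCPrT} ($\PrT\subseteq\PPrT$) with the PRM-to-PTM direction of Lemma~\ref{PTMvsPRM}. Your additional check that the $O(n^k)\mapsto O(n^{2k})$ simulation preserves the worst-case (path-uniform) time bound is a useful explicit verification of a point the paper leaves implicit, but it does not change the argument.
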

\begin{proof}
We have that is proved by Lemma \ref{PrPCPrT} and \ref{PTMvsPRM}.
\end{proof}

\subsection{Poly-time Completeness}

In this section, one can give an embedding of any polynomial time probabilistic register machine into a predicatively recursive function, making use of simultaneous recurrence.

In order to do this firstly we give some lemma in order to construct the proof.

We start defining some predicatively recursive functions in order to simulate a single step of a PPRM.
 
\begin{lemma}\label{aaa}
Let $\prmone$ be a PPRM with $r$ registers over $\al{W}$, there are functions $\f{\phi_0}{\al{W}^k}{\distrset{\al{W}}}, \cdots, \f{\phi_r}{\al{W}^k}{\distrset{\al{W}}}$, with tier at least one, such that if $\prmone$ has a transition rule from the configuration $i$ then
$(\V{pi}, n) \rightarrow_R (\V{pi}, m)$  if and only if $\phi_i(\V{pi_i},\overline{n}) = pi_i^{'}$ for all $1\leq i \leq r$ and $\phi_0(\V{u},\overline{n})=\overline{n}$
\end{lemma}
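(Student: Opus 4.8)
The plan is to define each $\phi_i$ by a \textbf{case distinction on the instruction counter}, exploiting the fact that the program of $\prmone$ is a fixed, finite sequence of indexed instructions (Definition~\ref{PRM}). For a fixed index $n$, the instruction sitting at position $n$ prescribes exactly how the register contents $\langle p_1,\ldots,p_r\rangle$ are transformed and what the next instruction index is, and in every case this transformation is one of a small repertoire of elementary word operations, as read off from the configuration-change semantics of Definition~\ref{ConfigPRM}. Concretely, the instruction $c_\charone(\pi_s)(\pi_l)$ updates the target register to $\charone\cdot p_s$, the predecessor instruction $p_\charone(\pi_s)(\pi_l)$ updates it to the tail of $p_s$, the $\epsilon$ instruction transfers $p_s$ into $\pi_l$, and every register not named by the instruction is left unchanged; in all of these cases the counter moves to $n+1$. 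Each such register update is expressible from the basic functions $\{c_\charone\}$ and $\{\Pi^n_m\}$ together with case distinction: copying and ``leave unchanged'' are projections, $c_\charone$ is basic, and the predecessor is obtained by a single application of case distinction (in the branch $h(\charone\cdot w,\V{y})=g_\charone(w,\V{y})$ of Definition~\ref{Case_Distinction} the tail $w$ is directly available).

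First I would show how to select among the finitely many instructions. Since $\overline{n}$ is a \emph{fixed} word encoding of the index $n$ and the number of instructions is a constant, I would peel off the leading symbols of $\overline{n}$ by a constant-depth nesting of case distinctions, thereby branching to the elementary update appropriate to the instruction at position $n$. This nesting has depth bounded by the length of the longest index encoding, which is a constant independent of the input, so \emph{no} recursion is needed to build the $\phi_i$. Each $\phi_i$ is therefore a composition of projections, constructors, and a bounded tree of case distinctions, and hence lies in $\PrT$ (Definition~\ref{classPrT}).

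Next I would treat the counter function $\phi_0$, which computes the next instruction index (so that $\phi_0(\V{u},\overline{n})=\overline{m}$). For the three assignment instructions this is just the encoding of $n+1$, again selected by the same outer case distinction on $\overline{n}$. The only instruction whose successor index is not uniformly $n+1$ is $jump(\pi_s)(\V{m})$, whose target depends on the leading constructor of $p_s$: I would resolve this by an \emph{inner} case distinction on $\pi_s$, returning the encoding of $m_{\overline{\charone}}$ when the leading symbol is $\charone$ and of $n+1$ when $p_s=\varepsilon$, exactly as prescribed in Definition~\ref{ConfigPRM}. The probabilistic instruction $jump_{\rand}$ is deliberately excluded here, since its branching is captured separately by $\delta_0,\delta_1$ and the functional of Definition~\ref{def:functionalPRM}; the $\phi_i$ only need to simulate the \emph{deterministic} part of a single step.

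Finally I would verify the \textbf{tiering} bookkeeping. Because the construction uses only basic functions, projections, composition, and case distinction --- and case distinction, unlike primitive recursion, imposes no strict drop $m>k$ between domain and result tiers --- every $\phi_i$ (including $\phi_0$) can be typed with its inputs and output placed at one common level, which may be taken to be any tier $\geq 1$; this yields the required ``tier at least one''. The main obstacle I anticipate is not the elementary register updates but the \emph{faithful encoding of the instruction counter} together with the conditional behaviour of $jump$: one must arrange the constant-depth analysis of $\overline{n}$ and the inner analysis of the leading symbol of $p_s$ so that, for every instruction, the step $(\V{p},n)\rightarrow_\prmone(\V{p}\,',m)$ is matched by $\phi_i(\ldots,\overline{n})=p_i'$ and $\phi_0(\ldots,\overline{n})=\overline{m}$ on the nose. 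This is precisely the point at which the otherwise routine construction has to be carried out instruction by instruction.
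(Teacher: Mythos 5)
Your treatment of the deterministic instructions is essentially the paper's: the paper likewise realizes each instruction's effect on each register by compositions of projections, constructors, predecessors and case distinction (for instance, for $c_\charone(\pi_s)(\pi_l)$ it takes $\phi_l = c_{\charone}\odot(\Pi^{r+1}_s)$ and advances the counter with $\phi_0=c_\charone\odot(\Pi^{r+1}_0)$), and it resolves the conditional target of $jump(\pi_s)(\V{m})$ by a case distinction on the leading constructor of the tested register, just as you do. Your dispatch on the encoded counter $\overline{n}$ by a constant-depth nest of case distinctions is, if anything, spelled out more carefully than in the paper, and it is the reading that the next lemma actually requires, since there the $\phi_j$ are applied uniformly at every step of the simulation.

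The genuine gap is your decision to exclude $jump_{\rand}$. That is the only probabilistic instruction, and the lemma cannot do its job without it: the $\phi_i$ are declared with codomain $\distrset{\al{W}}$ precisely because a single step must be allowed to return a nontrivial distribution over successor configurations. The paper handles $jump_{\rand}$ \emph{inside} the $\phi$'s, by composing the random base function with a case distinction so that $\phi_0$ returns the encoding of the jump target with probability $1/2$ and of $n+1$ with probability $1/2$. Your justification --- that the branching is ``captured separately by $\delta_0,\delta_1$ and the functional of Definition~\ref{def:functionalPRM}'' --- does not survive contact with how the lemma is used: in Lemma~\ref{bbb} the entire run of the machine is reconstructed inside the function algebra by simultaneous recursion on the $\phi_j$, with no fixpoint functional available there, so if the $\phi_j$ are all deterministic the simulated machine is deterministic and the completeness direction fails for every genuinely probabilistic $\prmone$. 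You need to add this case, e.g.\ by building $\phi_0$ from $r_\charone$ (or $\rand$) composed with a case distinction selecting between the two successor indices, and check that it still types at a single tier. (As a side remark, your observation that case distinction imposes no tier drop, so that all the $\phi_i$ can sit at any common tier $\geq 1$, is the right way to discharge the ``tier at least one'' clause; the paper's own proof is loose on this point, concluding tier $0$.)
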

\begin{proof}
We observe that at every step each register's value $pi_i$ can assume only one of these values $(c_\charone(pi_i), p_\charone(pi_i),\epsilon(pi_i))$. 
We note that the natural number of the instruction can be creased or decreased of a finite number.
Applying the instruction $jump_{rand}$ we obtain a change in our configuration in the number of the next instruction and the relative probability.
Now we show as the instructions of our program are mapped by one or more predicatively recursive functions.

\begin{itemize}
\item starting from the configuration  $<p_1,\ldots, p_r, n>$ and applying $\epsilon(\pi_s)(\pi_l)$ we obtain  $<p_1,\ldots, p_{l-1},p_s ,p_{l+1},\ldots, p_r, n+1>$. So it can be mapped by these functions $\phi_s = \Pi^{r+1}_s$, $\phi_l = \Pi^{r+1}_s$, and $\phi_0 =c_\charone\odot (\Pi^{r+1}_0)$;

\item starting from the configuration $<p_1,\ldots, p_r, n>$ and applying$c_{\charone}(\pi_s)(\pi_l)$, we obtain the configuration  $<p_1,\ldots, p_{l-1}, \charone \cdot p_s ,p_{l+1},\ldots, p_r, n+1>$, it can be mapped by these functions $\phi_s = \Pi^{r+1}_s$, $\phi_l = c_{\charone} \odot (\Pi^{r+1}_s )$, and $\phi_0 =c_\charone \odot (\Pi^{r+1}_0)$;

\item starting form the configuration $<p_1,\ldots, a \cdot p_s,\ldots, p_r, n>$ and applying $p_{\charone}(\pi_s)(\pi_l)$ we obtain $<p_1, \ldots, {\charone} \cdot p_s, \ldots,  p_{l-1},p_s, p_{l+1},\ldots, p_r, n+1>$, it can be mapped by these functions $\phi_s = \Pi^{r+1}_s$, $\phi_l = p_{\charone} \odot (\Pi^{r+1}_s )$, and $\phi_0 =c_\charone \odot (\Pi^{r+1}_0)$;

\item starting from the configuration $<p_1,\ldots, a \cdot p_s,\ldots, p_r, n>$ and applying $jump(\pi_s),(\V{m})$  $<p_1,\ldots, p_s,\ldots, p_r, m_{\overline{a}}>$ we obtain 
the configuration $<p_1,\ldots, p_{s-1},\epsilon,p_{s+1},\ldots, p_r, n+1>$
it can be mapped by these functions $\phi_s = p_{\charone} \odot (\Pi^{r+1}_s )$, for each $m_{\overline{\charone}}$ with a function $\phi_0$ that if  $m_{\overline{\charone}}-n \geq 0$ has exactly $m_{\overline{\charone}}-n$ concatenation of constructors otherwise it has exactly $m_{\overline{\charone}}-n$ predecessor functions, finally each of these functions are  composed with the function $\casef{g_\varepsilon,\{g_{\charone}\}_{\charone\in\alpone}}$;

\item starting from the configuration  $<p_1,\ldots,p_r, n>$ and applying $jump_{\rand}$ we obtain $<p_1,\ldots,p_r, m>$  with probability $1/2$ and the configuration $<p_1,\ldots,p_r, n+1>$  with probability $1/2$, it can be mapped by the function $\rand$, and $\casef{g_\varepsilon,\{g_{\charone}\}_{\charone\in\alpone}}$. We note that in this case, the output of our register machine defines a distribution. 

\end{itemize}
We note that all these functions are predicatively recursive functions, and every instructions are mapped with functions with at most flat recurrence, and so they can be mapped with tier $0$.

\end{proof}

\begin{lemma}\label{bbb}
If a function $f$ over a word algebra $\al{W}$ is a computable by a PPRM $\prmone$  in time $t \leq C \dot n^k$,  then it is definable by $k$ Simultaneous Recurrence Functions over $\distrset{\al{W}}$ with tier at least one, applied to functions over $\distrset{\al{W}}$
\end{lemma}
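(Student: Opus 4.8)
The plan is to build the simulation in two stages: first represent a single PPRM transition by predicatively recursive functions of fixed, low tier, and then iterate this transition $C\cdot n^{k}$ times by nesting $k$ simultaneous recurrences, one for each factor of $n$. The intuition is exactly Leivant's: a single ramified recursion over a length-$n$ word buys $\Theta(n)$ iterations while dropping the tier by one, so $k$ nested recursions buy $\Theta(n^{k})$ iterations, which is enough to run a machine whose time bound is $C\cdot n^{k}$.

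First I would invoke Lemma~\ref{aaa} to obtain, for the machine $\prmone$, the step functions $\phi_0,\ldots,\phi_r$ realising one transition of $\prmone$; as observed there, each of them performs at most flat recurrence and so can be typed at the bottom tier. Bundling them yields a single simultaneous transition $\Phi$ over $\distrset{\al{W}}$ sending a configuration $(\V{pi},\overline{n})$ to its one-step successor, where the probabilistic instruction $jump_{\rand}$ produces the fair coin split between $\delta_0(C)$ and $\delta_1(C)$ exactly as in Definition~\ref{def:functionalPRM}. Since an $r$-register configuration is a tuple, I would use the pairing functions $couple_m$, $first_m$, $second_m$ from Lemma~\ref{ENCO} to pack it into a single word, so that $\Phi$ is a genuine simultaneous-recurrence step in the sense of Definition~\ref{def:simrec}; the length condition $2|u|+2|v|+2\le|t|^m$ of that lemma is met because all register contents stay polynomially bounded along the computation.

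Next I iterate $\Phi$, arguing by induction on $k$. A single simultaneous recurrence over a word $w$ at tier $m$ applies $\Phi$ exactly $|w|$ times while outputting at the strictly lower tier $k<m$, so a word of length $\Theta(n)$ realises $\Theta(n)$ steps of $\prmone$; this is the base case. For the inductive step I nest: the $i$-th level applies the $(i-1)$-st iterate about $n$ times, so after $k$ levels $\Phi$ has been applied on the order of $n^{k}$ times (the constant $C$ is absorbed by padding the outermost recursion word by a constant factor). Assigning tier $i$ to the recursion variable of the $i$-th level and keeping $\Phi$ at the bottom tier makes every instance of the recurrence rule in Figure~\ref{fig:tieringrules} satisfy its side condition $m>k$, so the resulting function lies in $\SrR$. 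Composing this iterate with the function returning $\INITR{\prmone}{\V{s}}$ from the input and projecting onto the output register of a final configuration yields precisely $\FUNCR{\prmone}$; since $\prmone$ halts within $C\cdot n^{k}$ steps \emph{independently} of its random choices and $\Phi$ acts as the identity on final configurations, over-iterating is harmless.

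The main obstacle is keeping the entire construction inside the tiering discipline while the objects being recurred on are \emph{distributions} over configurations rather than plain words. Concretely, one must check that each of the $k$ nested simultaneous recurrences admits a tier assignment respecting $m>k$ \emph{simultaneously} for all $r$ registers and the instruction counter, and that the packing and unpacking via Lemma~\ref{ENCO} neither raises the tier nor breaks its polynomial length condition. Establishing that these two requirements are compatible, so that $n^{k}$ iterations are genuinely available at admissible tiers, is the technical heart of the argument; once it is in place, correctness of the simulation follows step-by-step from Lemma~\ref{aaa} and the definition of $\FUNCR{\prmone}$.
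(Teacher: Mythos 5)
Your proposal matches the paper's proof in all essentials: both take the one-step functions $\phi_0,\dots,\phi_r$ from Lemma~\ref{aaa} and iterate them via $k$ nested recurrences over words whose lengths multiply to give $n^k$ steps, exactly as the paper's functions $\sigma_{qj}$ do, with simultaneous recurrence combining the per-register components at the end. The only real difference is cosmetic: you pack the configuration into a single word via Lemma~\ref{ENCO}, whereas the paper keeps the $r+1$ components separate and lets simultaneous recurrence itself do that bookkeeping (reserving Lemma~\ref{ENCO} for the later reduction of $\SrR$ to $\PrT$); your explicit handling of the tier assignment, the constant $C$, and the harmlessness of over-iterating past a final configuration supplies details the paper leaves implicit.
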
 
\begin{proof}
We start our proof by considering the case $C = 1$ and then we extend the proof to the case $C>1$.
We assume that $\prmone$ has $r$ registers. Now we define the functions $\f{\sigma_{qj}}{\al{W}^{q+m+2}}{\distrset{\al{W}}}$, with $0\leq q\leq k$ and $j= 0, \ldots r$, with tier at most one. The functions $(\sigma_{q1}, \cdots, \sigma_{qr})$ represent the values of the registers after $|y_1|\dot \cdots \dot |y_q| +|x|$ execution steps of $\prmone$, starting from the initial configuration. The function $\sigma_{q0}$ represents the index of the instruction after 
$|y_1|\dot \cdots \dot |y_q| +|x|$ execution steps.

For each register and  index instruction we define a function with at most $k$ nested recursion as follows:
\begin{align*}
\sigma_{0j}(\epsilon, (\V{u},\overline{n}))&=(\V{u},\overline{n})\\
\sigma_{0j}(\charone \cdot w, (\V{u},\overline{n} )&=\phi_{j}(\sigma_{0j}(\V{u},\overline{n}))\\
\sigma_{q+1,j}(\epsilon, \V{y},x,(\V{u},\overline{n} ))&=\sigma_{0j}(w, x,(\V{u},\overline{n}))\\
\sigma_{q+1,j}(\charone \cdot w,\V{y},x,(\V{u},\overline{n} ))&=\sigma_{q,j}(\sigma_{0j}(\V{y},x,(\V{u},\overline{n} )),\sigma_{q+1} (w, \V{y},\epsilon,(\V{u},\overline{n}))\\
\end{align*}
Now it suffices to use simultaneous recurrence for composing the function define on every register and this prove the thesis.

\end{proof}

\begin{lemma}[Class of Simultaneous Recurrence Functions is the Class of Probabilistic Register Computable Functions]
\label{SRRPRPC}

$\PrPC \subseteq   \SrR$
\end{lemma}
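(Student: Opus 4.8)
The plan is to obtain the inclusion as an essentially immediate corollary of the two preceding lemmas, Lemma~\ref{aaa} and Lemma~\ref{bbb}, together with the machine equivalence recorded in Lemma~\ref{PTMvsPRM}. First I would fix an arbitrary $f\in\PrPC$. Since Lemma~\ref{PTMvsPRM} yields $\PPrT=\PrPC$, the function $f$ is computed by some probabilistic register machine $\prmone$ over $\al{W}$ which halts within time $t\leq C\cdot n^k$ on inputs of length $n$, for suitable constants $C$ and $k$. The objective is then to exhibit $f$ as a function generated from the basic functions by generalized composition, case distinction, and simultaneous recursion, with every step typeable under the tiering discipline of Definition~\ref{classSrR}; this is exactly what membership in $\SrR$ requires.

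The construction itself is already furnished by the earlier lemmas. Lemma~\ref{aaa} supplies, for the $r$ registers of $\prmone$, one-step functions $\phi_0,\ldots,\phi_r$ of tier at least one that compute the updated register contents and instruction index from the current configuration, with the probabilistic branching of $jump_{\rand}$ faithfully captured through $\rand$ composed with $\casef{g_\varepsilon,\{g_\charone\}_{\charone\in\alpone}}$. Lemma~\ref{bbb} then iterates these one-step functions: the $k$-fold nested family $\sigma_{qj}$ simulates $|y_1|\cdots|y_q|+|x|$ steps of $\prmone$, so that for $q=k$ the functions $\sigma_{k1},\ldots,\sigma_{kr}$ hold the register contents after the full $C\cdot n^k$ steps. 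To recover $f$ itself I would pre-compose this simulation with an initialization function loading the input $\V{x}$ into the registers (a tuple of projections and constructors) and post-compose with a projection reading off the designated output register. All of these auxiliary functions are basic or predicatively recursive, and since $\SrR$ is closed under generalized composition, case distinction, and simultaneous recursion, the resulting composite lies in $\SrR$ and equals $f$.

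The substantive obstacle is not the dynamics of the simulation but the bookkeeping of tiers, which is precisely what Lemma~\ref{bbb} discharges. Each of the $k$ nested simultaneous recursions must descend from a strictly higher tier $m$ to a lower tier $k'$, as demanded by the side condition $m>k'$ in the simultaneous-recursion rule; this forces the $k$ separate ``clock'' arguments $y_1,\ldots,y_k$ driving the iteration to be placed at tiers strictly above the tier carrying the register contents, and one must check that a consistent tier assignment exists throughout the nesting. A minor additional point is absorbing the multiplicative constant $C$: since the bound is $C\cdot n^k$ rather than $n^k$, I would pad the clock with a constant-length word at the appropriate tier, or equivalently run a fixed number of extra copies of the outermost recursion, neither of which perturbs the tier assignment. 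Once these typing details are in place---already guaranteed by Lemma~\ref{bbb}---the inclusion $\PrPC\subseteq\SrR$ follows.
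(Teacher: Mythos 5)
Your proposal is correct and follows essentially the same route as the paper, which derives the inclusion directly from Lemma~\ref{aaa} (one-step simulation functions) and Lemma~\ref{bbb} (iterating them via $k$ nested simultaneous recursions within the tiering discipline), implicitly relying on Lemma~\ref{PTMvsPRM} to pass from $\PrPC$ to polytime register machines. Your version merely spells out the initialization/output-projection composition and the handling of the constant $C$, details the paper leaves tacit.
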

\begin{proof}
We have that is proved by Lemma \ref{aaa}, Lemma \ref{bbb} 
\end{proof}

\begin{theorem}[Class of Predicative Probabilistic Functions is the Class of Probabilistic Register Computable Functions]
$\PrPC = \PrT$
\end{theorem}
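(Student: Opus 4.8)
The plan is to prove the two inclusions separately, assembling results already established in this section. The inclusion $\PrT \subseteq \PrPC$ is immediate: it is exactly Lemma~\ref{PrTPrPC}, which was obtained by first showing (Lemma~\ref{PrPCPrT}) that every predicatively recursive function is computable by a polynomial-time probabilistic register machine --- by induction on the tiering derivation, using the simulation Lemmata~\ref{PRMBasic}, \ref{PRMCompo}, \ref{PRMRecur} and \ref{PRMCase} for the base functions, composition, primitive recursion and case distinction respectively --- and then transferring the result to PTMs through the poly-time reduction of Lemma~\ref{PTMvsPRM}. So this direction requires nothing new.

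The reverse inclusion $\PrPC \subseteq \PrT$ is the substantial one, and I would route it through the class $\SrR$ of simultaneous recursive functions. By Lemma~\ref{PTMvsPRM} we have $\PPrT = \PrPC$, so it suffices to embed polynomial-time register-machine functions into $\PrT$. The first half of this is already packaged as Lemma~\ref{SRRPRPC}, which gives $\PrPC \subseteq \SrR$: the poly-time completeness construction shows, via Lemma~\ref{aaa}, that a single step of a PPRM is captured by finitely many tier-$0$ predicatively recursive functions $\phi_0,\ldots,\phi_r$ (one per register plus the instruction counter), and, via Lemma~\ref{bbb}, that an entire $t \le C\cdot n^k$-step computation is assembled by $k$ nested \emph{simultaneous} recurrences over the register contents. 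What then remains is to establish $\SrR \subseteq \PrT$, i.e.\ that simultaneous primitive recursion adds no expressive power to ordinary predicative recursion under the tiering discipline.

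The hard part will be precisely this last inclusion, and it is what the construction of the subsection on simultaneous primitive recursion, together with Lemma~\ref{ENCO}, is designed to deliver. The idea is to replace a tuple $\V{f}=(f^1,\ldots,f^n)$ of simultaneously-defined functions by a single function $\widetilde{f}$ returning the packaged tuple $[f^1,\ldots,f^n]$ and defined by ordinary recurrence, recovering each component by a projection. The delicate point is to perform the packing so that the tiering constraint survives: the coupling function $couple_m$, with its inverses $first_m$ and $second_m$, must keep the coded tuple at the \emph{same} tier as its inputs while using at most one level of recurrence. This is exactly guaranteed by the size bound of Lemma~\ref{ENCO}, namely $2|u|+2|v|+2 \le |t|^m$, which ensures the encoding fits inside a word at tier $m$; consequently folding the single recurrence $\widetilde{f}$ costs the same number $m$ of recursion levels as the original scheme, and no blow-up in tier is incurred. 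Once the coding is in place, the $\mathit{simrec}^i$ rule of Definition~\ref{classSrR} is derivable from the $rec$ rule of Figure~\ref{fig:tieringrules} applied to $\widetilde{f}$ followed by a projection, so that $\SrR \subseteq \PrT$. Chaining the inclusions $\PrPC \subseteq \SrR \subseteq \PrT$ with $\PrT \subseteq \PrPC$ then yields $\PrPC = \PrT$, as desired.
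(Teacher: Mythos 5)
Your proposal is correct and follows essentially the same route as the paper: the inclusion $\PrT\subseteq\PrPC$ via Lemma~\ref{PrPCPrT} (and Lemma~\ref{PTMvsPRM}), and the converse via the chain $\PrPC\subseteq\SrR\subseteq\PrT$, where the first step is Lemma~\ref{SRRPRPC} and the second rests on the coupling construction of Lemma~\ref{ENCO}. The paper's own proof is just a one-line citation of these same three lemmata, so your more detailed account matches it exactly.
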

\begin{proof}
We have that is proved by  \ref{PrPCPrT}, Lemma \ref{SRRPRPC} and Lemma \ref{ENCO}.
\end{proof}

}

\section{Conclusions}
In this paper we make a first step in the direction of characterizing
probabilistic computation in itself, from a recursion-theoretical
perspective, without reducing it to deterministic computation. The
significance of this study is genuinely foundational: working with
probabilistic functions allows us to better understand the nature of
probabilistic computation on the one hand, but also to study the
implicit complexity of a generalization of Leivant's predicative
recurrence, all in a unified framework.

More specifically, we give a characterization of computable
probabilistic functions by a natural generalization of Kleene's
partial recursive functions which includes, among initial functions,
one that returns the uniform distribution on $\{0,1\}$. We then prove
the equi-expressivity of the obtained algebra and the class of
functions computed by PTMs. In the the second part of the paper, we
investigate the relations existing between our recursion-theoretical framework
and sub-recursive classes, in the spirit of ICC. More precisely,
endowing predicative recurrence with a random base function is proved
to lead to a characterization of polynomial-time computable
probabilistic functions.

An interesting direction for future work could be the extension of our
recursion-theoretic framework to \emph{quantum} computation.  In this
case one should consider transformations on Hilbert spaces as the
basic elements of the computation domain. The main difficulty towards
obtaining a completeness result for the resulting algebra and proving
the equivalence with quantum Turing machines seems to be the
definition of suitable recursion and minimization operators
generalizing the ones described in this paper, given that qubits (the
quantum analogues of classical bits) cannot be copied nor erased.

\condinc{
\bibliographystyle{abbrv}
\bibliography{biblio}
}
{
\bibliographystyle{abbrv}
\bibliography{biblio}
}
\end{document}